\documentclass[a4paper,USenglish,numberwithinsect,cleveref,autoref,thm-restate]{lipics-v2021}

\def\isarxiv{} 
\ifdefined\isarxiv
\pdfoutput=1
\hideLIPIcs
\fi

\title{A Coalgebraic Dijkstra Algorithm
} 

\author{Takahiro Sanada}{Fukui Prefectural University, Fukui, Japan \and \url{https://www.kurims.kyoto-u.ac.jp/~tsanada/} }{tsanada@fpu.ac.jp}{https://orcid.org/0000-0003-3409-6963}{Supported by JSPS KAKENHI Grant No.\ JP24K23867.}
\author{Yoàv Montacute}{National Institute of Informatics, Tokyo, Japan \and \url{https://www.yoavmontacute.com/}}{montacute@nii.ac.jp}{https://orcid.org/0000-0001-9814-7323}{Supported by JST ACT-X Grant No.\ JPMJAX24CR.}
\author{Kittiphon Phalakarn}{National Institute of Informatics, Tokyo, Japan \and \url{https://kphalakarn.github.io/}}{kphalakarn@nii.ac.jp}{https://orcid.org/0009-0006-5406-7480}{}
\author{Ichiro Hasuo}{National Institute of Informatics, Tokyo, Japan \and SOKENDAI (The Graduate University for Advanced Studies), Kanagawa, Japan \and Imiron Co., Ltd., Tokyo, Japan}{kphalakarn@nii.ac.jp}{https://orcid.org/0009-0006-5406-7480}{}

\authorrunning{T.\ Sanada, Y.\ Montacute, K.\ Phalakarn, I.\ Hasuo} 

\Copyright{Takahiro Sanada, Yoàv Montacute, Kittiphon Phalakarn, Ichiro Hasuo} 
\ccsdesc[500]{Theory of computation~Shortest paths}

\keywords{Coalgebra, Greatest fixed point, Dijkstra's algorithm, Shortest path} 

\category{} 

\relatedversion{} 

\acknowledgements{The authors are supported by JST ASPIRE Grant No.\ JPMJAP2301.}

\nolinenumbers 

\EventEditors{John Q. Open and Joan R. Access}
\EventNoEds{2}
\EventLongTitle{42nd Conference on Very Important Topics (CVIT 2016)}
\EventShortTitle{CVIT 2016}
\EventAcronym{CVIT}
\EventYear{2016}
\EventDate{December 24--27, 2016}
\EventLocation{Little Whinging, United Kingdom}
\EventLogo{}
\SeriesVolume{42}
\ArticleNo{23}

\usepackage{amsmath, amssymb, stmaryrd, mathtools}
\usepackage{wrapfig}
\usepackage[noadjust]{cite}
\usepackage{booktabs, colortbl}
\usepackage{tikz}
\usetikzlibrary{cd,automata,positioning,decorations.pathmorphing,decorations.pathreplacing,calc}
\newcommand{\tikzmark}[1]{\tikz[overlay,remember picture] \node (#1) {};}
\newcommand*{\AddNote}[4]{%
    \begin{tikzpicture}[overlay, remember picture]
        \draw [decoration={brace,amplitude=0.5em},decorate,ultra thick,gray]
            ($(#3)!(#1.north)!($(#3)-(0,1)$)$) --  
            ($(#3)!(#2.south)!($(#3)-(0,1)$)$)
                node [align=center, text width=2.5cm, pos=0.5, anchor=west] {#4};
    \end{tikzpicture}
}%
\tikzset{every state/.style={minimum size=1pt, inner sep=1pt}}
\tikzset{
    dot/.style = {circle, fill, minimum size=#1,inner sep=0pt, outer sep=0pt},
    dot/.default = 3pt 
}
\usepackage{algorithm}
\usepackage{algpseudocode}
\newcommand{\letin}[2]{#1 \leftarrow #2}

\usepackage[disable]{todonotes} 
\newcommand{\todoil}[1]{\todo[inline,caption={}]{
#1
}}

\ifdefined\isarxiv
\newcommand{\versionswitch}[2]{#2}
\else
\newcommand{\versionswitch}[2]{#1}
\fi

\newtheorem{problem}[theorem]{Problem}

\newtheorem{notations}[theorem]{Notations}

\newcommand{\blank}{{-}}

\newcommand{\idmor}{\mathrm{id}}
\newcommand{\co}{\mathbin{\circ}}
\newcommand{\WGG}[1]{\mathcal{W}_{G}}
\newcommand{\BB}{\mathbb{B}}
\newcommand{\btt}{\mathbf{t}}
\newcommand{\bff}{\mathbf{f}}
\newcommand{\NN}{\mathbb{N}}
\newcommand{\NNinf}{\NN^{\infty}}
\newcommand{\RR}{\mathbb{R}}
\newcommand{\RRpos}{\RR_{\ge 0}}
\newcommand{\RRinf}{\RR^{\infty}}
\newcommand{\RRinfpos}{\RR^{\infty}_{\ge 0}}
\newcommand{\RRpminf}{\RR^{\pm \infty}}

\newcommand{\Bellman}[3]{\Phi^{#1, #2}_{#3}}

\newcommand{\Sets}{\mathbf{Set}}

\newcommand{\idfunc}{\mathrm{Id}}
\newcommand{\powset}{\mathcal{P}}
\newcommand{\powsetfin}{\powset_{\mathrm{fin}}}
\newcommand{\powsetfinne}{\powsetfin^{\mathrm{ne}}}
\newcommand{\distr}{\mathcal{D}}

\newcommand{\pullback}{\mathsf{pb}}
\newcommand{\weakpullback}{\mathsf{wpb}}

\newcommand{\finalweight}{\xi}

\newcommand{\gfp}{\nu}

\newcommand{\runpath}{\mathrm{RP}}
\newcommand{\runtree}{\mathrm{RT}}
\newcommand{\height}{h}

\newcommand{\predecessor}{\mathrm{Pred}}
\newcommand{\successor}{\mathrm{Succ}}

\newcommand{\order}{\mathcal{O}}
\newcommand{\sizeof}[1]{\#{#1}}

\newcommand{\supp}{\theta}

\newcommand{\tree}{\mathcal{T}}

\newcommand{\CSPP}{CSPP}

\begin{document}

\maketitle

\begin{abstract}
    The Dijkstra algorithm is a classical method for solving the shortest path problem on weighted graphs.
    There are several variations of the Dijkstra algorithm, including algorithms for the widest path problem and for two-player games.
    In this paper, we introduce the \emph{coalgebraic shortest path problem} (CSPP), a unifying framework for a broad class of optimization problems on state-transition systems.
    This framework encompasses not only the aforementioned problems but also new ones such as the shortest binary tree problem.
    We further present a \emph{coalgebraic Dijkstra algorithm} for solving the CSPP efficiently under a suitable condition.
    Our condition is necessary and sufficient for the algorithm to return correct solutions, thereby providing a precise criterion for when Dijkstra-style acceleration is possible.
    We also show that the proposed algorithm achieves asymptotic complexity comparable to that of the classical Dijkstra algorithm.
    
\end{abstract}

\section{Introduction}\label{sec:intro}
The shortest path problem (SPP) on a weighted directed graph is a classical problem.
The Bellman--Ford algorithm \cite{Bellman1958,Ford1956} solves the SPP 
 by iteratively updating the distances of vertices.
This update can be viewed as an operator $\Phi \colon L \to L$ for a lattice $L$, and the solution to the SPP is the greatest fixed point $\gfp \Phi$ of $\Phi$ \cite{Misra2001}.
The Dijkstra algorithm \cite{Dijkstra1959} is an improvement on the Bellman--Ford algorithm for weighted graphs without negative weights.
The Dijkstra algorithm restricts updates to vertices belonging to a specific set and avoids updating vertices that are expected to have already achieved the minimum value.

There are several variations of the Dijkstra algorithm.
Reachability is a special case of the SPP, thus amenable to the Dijkstra algorithm.
The widest path problem, which seeks a path with maximum capacity, is  solvable by a modified version of the Dijkstra algorithm \cite{Pollack1960}.
A class of two-player games can also be solved by a Dijkstra-style algorithm \cite{BardiLopez2016}.

It is important that, aside from these \emph{positive} variations, there are many known \emph{negative} variations---those optimization problems based on state transition systems that do \emph{not} allow the application of the Dijkstra-style acceleration. Known examples of such negative variations include the SPP with negative edges, and the reachability probability in a probabilistic system such as a Markov chain. 

An overview of such variations is given in Table~\ref{table:examples-problems}, where the last column shows the applicability of the Dijkstra-style acceleration. While many previous works establish positive results~\cite{Pollack1960,Wing1961,Mazala2002,BardiLopez2016}, we do see some negative entries. 

Moreover, the table shows that the distinction between positive and negative can be  subtle. For example (the sixth and seventh in Table~\ref{table:examples-problems}), when the SPP comes with a (multiplicative) rate, 1) Dijkstra is applicable when the rate is $\ge 1$ (thus the rate represents \emph{interest}), while 2) it is not applicable when the rate is $\le 1$ (the rate represents \emph{discount}). Another subtle difference is observed for dynamic games with a discount rate $r$ (the second last in Table~\ref{table:examples-problems}), where Dijkstra is applicable only when the stepwise reward is constant (with $\ell_{0}=L$). 

Therefore, in this paper, we are interested in the following questions.
\begin{enumerate}
    \setlength{\parskip}{0cm}
    \setlength{\itemsep}{0cm}
    \item What is the essence of the Dijkstra-style acceleration of fixed point computation?
    \item What is the condition that makes a problem amenable to the Dijkstra-style acceleration? 
    \item What is a good mathematical level of abstraction for answering these questions?
\end{enumerate}
Our answer to the last (meta-level) question is the language of \emph{category theory}---the language of \emph{coalgebras}~\cite{Rutten2000,Jacobs2016}, in particular, as a categorical model of state-space transition systems. We focus on a coalgebraic signature functor of the form $F = \BB \times \powsetfin(G\blank)$, where $\BB=\{ \btt, \bff \}$ tells if a state is a target or not,  and the finite power set functor $\powsetfin$ represents a finite branching over $G$-transitions. Besides, we use a poset $\Omega$ equipped with a \emph{modality} $\sigma\colon G\Omega\to\Omega$---describing how weights are accumulated along $G$-transitions. This is much like in predicate transformer semantics for coalgebras~\cite{Hasuo15CMCSJournVer,AguirreKK22}.  All these give us a categorical framework that accommodates various problems in Table~\ref{table:examples-problems} (\S\ref{sec:minimum-weight-gfp}). 

Within this categorical framework, we formalize the \emph{coalgebraic Dijkstra algorithm} (Algo.~\ref{alg:coalgebraic-dijkstra}), embodying our understanding of ``the essence of the Dijkstra algorithm'' and addressing the first question above. Moreover, we identify a necessary and sufficient condition for the correctness of the coalgebraic Dijkstra algorithm, formulated in the language of category theory (Thm.~\ref{thm:correctness-G}), thus addressing the second question. In the course of this venture, we find that our categorical/coalgebraic abstraction level is the right one.

It turns out that the abstract categorical framework is concrete enough, too, so that we can conduct complexity analysis and discuss algorithmic improvements. Indeed, in~\S\ref{subsec:complexity}, we 1) establish the asymptotic complexity of the coalgebraic Dijkstra algorithm, 2) show that it coincides with the classical Dijkstra algorithm for the original problem of the SPP, and 3) propose an improvement by a Fibonacci heap, for the general coalgebraic algorithm.

We summarize our contributions  as follows.
\begin{enumerate}
    \setlength{\parskip}{0cm}
    \setlength{\itemsep}{0cm}
    \item We formulate the \emph{coalgebraic shortest path problem} (CSPP, \S\ref{sec:minimum-weight-gfp}), using coalgebras of the type  $\gamma \colon X \to \BB \times \powsetfin(GX)$ and transition modalities $\sigma\colon G\Omega\to\Omega$, and formalizing the objective as the greatest fixed point of the \emph{Bellman operator} $\Bellman{G}{\sigma}{\gamma} \colon [X, \Omega] \to [X, \Omega]$ suitably defined using $\sigma$ and $\gamma$. This framework accommodates various problems (Table~\ref{table:examples-problems}), covering both path-like and tree-like settings (such as \emph{shortest binary tree} and some games). 

    \item We present the \emph{coalgebraic Dijkstra algorithm} for the CSPP (\S\ref{sec:coalg-dijkstra-algorithm}),
    and identify a necessary and sufficient condition on a transition modality $\sigma \colon G \Omega \to \Omega$ for the  algorithm to be correct. This explains the subtleties in the last ``applicability'' column of Table~\ref{table:examples-problems}. 
    \item We establish complexity of the coalgebraic Dijkstra algorithm (\S\ref{subsec:complexity}). We also present its improvement with Fibonacci heaps.

\end{enumerate}

\paragraph*{Related work}
    The Bellman--Ford algorithm \cite{Ford1956,Bellman1958} solves the SPP of a single source or multiple sources.
        The Dijkstra algorithm \cite{Dijkstra1959} is an improvement of the Bellman--Ford algorithm.
        Misra \cite{Misra2001} showed that the SPP is equivalent to finding the greatest fixed point of an operator.
        Our approach is a coalgebraic generalization of Misra's observation.
    
    There are several approaches for generalizing path problems on weighted graphs \cite{Lehmann1977,Mohri2002,Sobrinho2002,GondranMinoux2008}.
        These approaches introduce algebraic structures, such as semirings, dioids and routing algebras, in order to capture weight of paths.
        Typically, these algebraic structures have a carrier set and a binary operation on the carrier set to accumulate edge weights along a path.
        Instead of such a binary operation, we use a categorical algebra $\sigma \colon G \Omega \to \Omega$, where $\Omega$ is a carrier set and $G$ is a functor describing the ``arity'' of $\sigma$.
        Owing to this generalization, our framework can extend paths to tree structures such as game trees or  shortest trees.
    
    Coalgebras are the dual notion of algebras and are widely used as a generalization of state-transition systems \cite{Rutten2000,Jacobs2016}.
        Many algorithms for graph-like structures have been generalized to algorithms for coalgebras---for example, coalgebraic model checking \cite{KojimaCirstea2025,KoriWatanabe2025}, coalgebraic automata learning \cite{JacobsAlexandra2014,SimoneKupkeRot2019}, and the computation of bisimilarity for coalgebras \cite{Difel+2019,JacobsWissmann2023,Sanada+2024}.

\paragraph*{Organization of the paper} 
In \S\ref{sec:prelim}, we review the classical Dijkstra algorithm and shed a  lattice-theoretic view on it. In \S\ref{sec:minimum-weight-gfp}, we formulate the shortest path problem coalgebraically as a coalgebraic shortest path problem (CSPP), and
exhibit its instances. 
In \S\ref{sec:coalg-dijkstra-algorithm}, we introduce the coalgebraic Dijkstra algorithm, presenting its correctness condition (\S\ref{subsec:fin-supp}) and complexity studies (\S\ref{subsec:complexity}). 

Many proofs and details are deferred to
\ifdefined\isarxiv appendices \else appendices in \cite{CoalgDijkstraArXiv} \fi.
\todo{IN the concur version, add citation}
The general theory in \S\ref{sec:coalg-dijkstra-algorithm} may not be the most intuitive due to its abstraction; in~\versionswitch{\cite[\S{}C]{CoalgDijkstraArXiv}}{\S\ref{sec:conditions-specific-functors}}, we make a more pedagogical presentation, starting from specific classes of $G$, gradually abstracting  to general $G$. 



\paragraph*{Notations}\label{subsec:notations}
We write $\Sets$ for the category of sets and maps.
We write $\NN$ for the set of natural numbers including zero, $\NNinf = \NN \cup \{ \infty \}$, $\RR$ for the set of real numbers, $\RRinf = \RR \cup \{ \infty\}$, $\RRinfpos = [0, \infty]$, $\RRpminf = [-\infty, \infty]$ and $\BB = \{ \btt, \bff \}$.
The identity functor, the finite power set functor, on $\Sets$, are denoted by $\idfunc \colon \Sets \to \Sets$, $\powsetfin \colon \Sets \to \Sets$, respectively.
Given sets $X_0, X_1, \dots, X_n$, we denote the $i$-th projection by $\pi_i \colon X_0 \times \dots \times X_n \to X_i$.
For a set $X$, we write $X^*$ for the set of finite strings over $X$. 
The empty string is denoted by $\epsilon$.
For a natural number $n \in \NN$, the finite set $\{ 0, \dots, n - 1\}$ of $n$ elements is denoted by $[n]$.
The \emph{function space} $[X,Y]$ is the set of functions from $X$ to $Y$. 

\begin{definition}
    The \emph{non-empty finite power set functor} $\powsetfinne \colon \Sets \to \Sets$ is defined by $\powsetfinne(X) = \powsetfin(X) \setminus \{ \emptyset \}$ 
    for $X \in \Sets$.
    Its action on morphisms is given by direct images.\todo{Yoav: This should be clarified given the community.}

    The (discrete, finite-support) \emph{distribution functor} $\distr \colon \Sets \to \Sets$ is the functor defined by 
    \[\textstyle
    \distr(X)
    = \left\{ \mu \colon X \to [0, 1] \mid \sizeof{\{ x \mid \mu(x) > 0\}} < \infty \text{ and } \sum_{x \in X} \mu(x) = 1\right\}
    \]\todo{\checkmark Yoav: probably clear with cardinality notation}
    for $X \in \Sets$, where $\sizeof{S}$ denotes the cardinality of a set $S$.
    Its action on morphisms is 
    $\distr(f)(\mu)(y) = \sum_{x \in f^{-1}(y)} \mu(x)$
    for $f \colon X \to Y$, $\mu \in \distr(X)$ and $y \in Y$.
\end{definition}

\section{Preliminaries}\label{sec:prelim}\todo{Yoav: The preliminaries would benefit from a category theory introduction, in particular an exposition on functors.}

We are interested in certain greatest fixed points; their existence is guaranteed by the following well-known result.

\todo{Yoav: I don't know how comfortable the FM community is with terms like transfinite. Ichiro? Ichiro: it's fine!}  
\begin{lemma}[{Cousot--Cousot~\cite[Cor.~3.3]{CousotCousot1979}}]\label{lem:CC}
    Let $(L ,\sqsubseteq)$ be a poset, and $\Psi\colon L \to L $ be a monotone function. Let $L$ have  infimum $\bigsqcap S$ of every $S \subseteq L $. 
We define a transfinite sequence 
\begin{equation}\label{eq:CC}
 \top
 \;\sqsupseteq\; \Psi(\top)
 \;\sqsupseteq\; \Psi^{2}(\top)
 \;\sqsupseteq\;\cdots
 \;\sqsupseteq\; \Psi^{\alpha}(\top)
 \;\sqsupseteq\;\cdots,
\end{equation} 
where $\alpha$ is an arbitrary ordinal, by the following transfinite induction: 
    $\Psi^0(\top) = \top$, 
    $\Psi^{\alpha + 1}(\top) = \Psi(\Psi^{\alpha}(\top))$ and
    $\Psi^{\alpha}(\top) = \bigsqcap_{\beta < \alpha} \Psi^{\beta}(\top)$
    where $\alpha$ is a limit ordinal.
    Then, the sequence~\eqref{eq:CC} eventually stabilizes and its limit is the greatest fixed point (gfp) $\nu \Psi$ of $\Psi$. Precisely, 
there is an ordinal $\alpha$ such that $\Psi^{\alpha}(\top) = \Psi^{\alpha+1}(\top) = \cdots =\gfp \Psi$.
\end{lemma}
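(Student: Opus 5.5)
We prove the lemma by the standard transfinite-iteration argument in three stages: show the sequence~\eqref{eq:CC} is decreasing, show it is bounded below by every fixed point, and show it must stabilize by a cardinality argument. First note that $L$ has a top element $\top = \bigsqcap \emptyset$, so the sequence is well defined. Throughout, we use that $\Psi$ is monotone and that infima of arbitrary subsets exist.

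\textbf{Step 1 (the sequence is decreasing).} We prove by transfinite induction on $\alpha$ that $\Psi^{\beta}(\top) \sqsupseteq \Psi^{\alpha}(\top)$ for all $\beta \le \alpha$.
\begin{itemize}
\item \emph{Successor case.} For $\alpha+1$, we show $\Psi^{\alpha}(\top) \sqsupseteq \Psi^{\alpha+1}(\top)$. If $\alpha = 0$ this is $\top \sqsupseteq \Psi(\top)$. If $\alpha = \gamma+1$, apply monotonicity of $\Psi$ to the hypothesis $\Psi^{\gamma}(\top)\sqsupseteq\Psi^{\alpha}(\top)$. If $\alpha$ is a limit, then $\Psi^{\alpha}(\top) \sqsubseteq \Psi^{\beta}(\top)$ for every $\beta<\alpha$. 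Monotonicity and the induction hypothesis give $\Psi^{\alpha+1}(\top) \sqsubseteq \Psi^{\beta+1}(\top) \sqsubseteq \Psi^{\beta}(\top)$. Hence $\Psi^{\alpha+1}(\top)$ is a lower bound of $\{\Psi^{\beta}(\top)\}_{\beta<\alpha}$, so it lies below their infimum $\Psi^{\alpha}(\top)$.
\item \emph{Limit case.} The claim is immediate from the definition of $\Psi^{\alpha}(\top)$ as an infimum.
\end{itemize}
Transitivity then gives the full chain~\eqref{eq:CC}.

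\textbf{Step 2 (every fixed point lies below the sequence).} Let $x = \Psi(x)$. By transfinite induction, $x \sqsubseteq \Psi^{\alpha}(\top)$ for all $\alpha$. The base case holds since $x \sqsubseteq \top$. The successor case follows from $x = \Psi(x) \sqsubseteq \Psi(\Psi^{\alpha}(\top))$ by monotonicity. At limits, $x$ is a lower bound of the earlier terms, hence below their infimum.

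\textbf{Step 3 (stabilization).} Let $\kappa$ be an ordinal whose cardinality exceeds that of $L$, for instance the Hartogs ordinal of $L$. The map $\alpha \mapsto \Psi^{\alpha}(\top)$ restricted to $\alpha<\kappa$ cannot be injective, so $\Psi^{\alpha}(\top) = \Psi^{\beta}(\top)$ for some $\alpha<\beta$. Since the sequence is decreasing, $\Psi^{\alpha}(\top) \sqsupseteq \Psi^{\alpha+1}(\top) \sqsupseteq \Psi^{\beta}(\top) = \Psi^{\alpha}(\top)$, so $\Psi^{\alpha}(\top)$ is a fixed point. By induction, all later terms equal it: successors because it is fixed, and limits because the infimum of an eventually constant decreasing family is that constant value. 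By Step 2, this fixed point lies above every fixed point, so it is $\gfp\Psi$.

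The main obstacle is Step 1 at limit ordinals, where the decreasing property must be obtained from infima rather than from monotonicity alone. The set-theoretic bound in Step 3 is routine once we fix an ordinal of cardinality larger than that of $L$.
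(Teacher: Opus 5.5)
Your proof is correct and follows the same route as the paper's sketch: descent by transfinite induction, every fixed point bounded above by the iterates, and stabilization because the ordinals cannot inject into the set $L$. Your use of the Hartogs ordinal is just a concrete version of the paper's ``proper class $\mathbf{Ord}$ cannot embed in $L$'' remark.
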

While the sequence \eqref{eq:CC} is transfinite in general, our coalgebraic Dijkstra algorithm is guaranteed to terminate. See Prop.~\ref{prop:termination} and \S\ref{subsec:complexity} for complexity analysis.

 The famous \emph{Kleene theorem} can be thought of as a variant of Lem.~\ref{lem:CC}, where 1) $\Psi$ is assumed to be cocontinuous (preserving suitable infimums) and 2) the stabilization is guaranteed at the ordinal $\omega$. We use the Cousot--Cousot theorem because we do not require $\Psi$'s cocontinuity in our general framework. 


\subsection{Dijkstra and Bellman--Ford via order-theoretic fixed points}
\label{sec:dijkstra-and-bellman--ford}
We review the classical Dijkstra algorithm, and the Bellman--Ford algorithm as its prototype.
We introduce necessary order-theoretic preliminaries, too.\todo{Yoav: this sentence doesn't read well. Ichiro: hopefully it's fixed}

Let $(X,E,w,T)$ be a \emph{weighted graph}, where $X$ is the \emph{state space}, $E\subseteq X\times X$ collects \emph{edges}, and $w\colon E\to \RRpos$ assigns \emph{weights} to  edges. Our notion of weighted graph additionally specifies the set $T\subseteq X$ of \emph{target states}. 
We introduce some notations and terminologies.
\begin{itemize}
 \item $\Omega=\RRinfpos=\RRpos\cup\{\infty\}$---we augment $\RRpos$ with the maximum $\infty$---is the \emph{weight domain}. Its usual order $\le$ is henceforth denoted by $\sqsubseteq$. 
 \item A function $d\colon X\to \Omega$---assigning, to each state $x$, a weight $d(x)\in \Omega$---is an \emph{$\Omega$-valuation}.
 \item The set $[X,\Omega]$ of $\Omega$-valuations inherits the order $\sqsubseteq$ from $\Omega$ in the pointwise manner ($d_{1}\sqsubseteq d_{2}$ iff $d_{1}(x)\sqsubseteq d_{2}(x)$ for each $x\in X$). 
 \item We define the \emph{Bellman operator} $\Phi\colon [X,\Omega]\to [X,\Omega]$ by 
\begin{equation}\label{eq:BellmanOriginal}
\Phi(d) (x) = \begin{cases}
    0 & \text{if $x\in T$}, \\
    \displaystyle \min_{(x,x')\in E} \bigl(w(x,x') + d(x')\bigr) & \text{if } x\not\in T.
\end{cases}
\end{equation}
\end{itemize}
The following characterization is well-known. It is crucial for our theoretical development.
\begin{proposition}[\cite{Misra2001}]
    \label{prop:shortestPathAsGFPOriginal}\todo{\checkmark citation}
    The \emph{shortest path valuation} $d_{\mathrm{SP}}\colon X\to \Omega$, which assigns to each state $x\in X$ the shortest path length $d_{\mathrm{SP}}(x)$ to one of the target states in $T$, coincides with the greatest fixed point $\nu\Phi$ of the Bellman operator~(see \eqref{eq:BellmanOriginal}).
\end{proposition}

\begin{problem}[shortest path problem (SPP)]\label{def:shortestPathProblem}
The \emph{shortest path problem} (SPP) is the problem of computing $d_{\mathrm{SP}}\colon X\to \Omega$ for a given weighted graph  $(X,E,w,T)$.
\end{problem}\todo{\checkmark Yoav: I think when talking about problems it is better to write something like: "The shortest path problem is the problem of computing..."}

The order-theoretic essence of the Bellman--Ford algorithm is the computation of $\nu\Phi$ 
(Prop.~\ref{prop:shortestPathAsGFPOriginal})
via the iterative characterization in Lem.~\ref{lem:CC}. Specifically, we let $L=[X,\Omega]$; it is easily seen to have all infimums (including $\top=\bigsqcap \emptyset$) since $\Omega=\RRinfpos$ does. Therefore, by iteratively applying the Bellman operator $\Phi$ (which is $\Psi$ in Lem.~\ref{lem:CC}), we eventually obtain $\gfp \Phi$. (It is, of course, a different problem whether the sequence~\eqref{eq:CC} stabilizes after finitely many steps. This is the case with the Bellman operator $\Phi$ for the shortest paths in~\eqref{eq:BellmanOriginal}.)

\begin{algorithm}[t]
    \caption{The (classical) Dijkstra algorithm for the SPP}\label{alg:dijkstra-original}
    \footnotesize
    \begin{algorithmic}[1]
        \Procedure{Dijkstra}{$X$, $E$, $w$, $T$}
            \State $\letin{S}{T}$ \label{line:original-dijkstra-initialize-start}
            \State $\letin{Y}{S}$
            \State $\letin{d}{\lambda x \in X. \begin{cases} 0 & (x \in S) \\ \infty & (x \not \in S) \end{cases}} \quad \in [X, \Omega]$ \label{line:original-dijkstra-initialize-end}
            \While{$S \neq X$} \tikzmark{original-dijkstra-main-top}
                \State $\letin{P}{\text{the set of predecessors of $Y$}}$ \label{line:original-dijkstra-collect-predecessors}
                \State $\letin{d}{\lambda x. \begin{cases}
                    \min_{(x,x') \in E} (w(x,x') + d(x')) & (x \in P \cap (X \setminus S)) \\
                    d(x) & (\text{otherwise})
                \end{cases}}$
                \label{line:original-dijkstra-update-d}
                \Comment{Selective Bellman update}
                \State $\letin{Y}{\left\{ y \in X \setminus S \mid d(y) = \min_{z \in X \setminus S} d(z) \right\}}$
                \label{line:original-dijkstra-collect-minimal}
                \Comment{Minimal freezing}
                \State $\letin{S}{S \cup Y}$ \label{line:original-dijkstra-update-S}
                \Comment{Update the set of frozen states}
            \EndWhile \tikzmark{original-dijkstra-main-bottom}
            \State \Return $d$
        \EndProcedure
    \end{algorithmic}
\end{algorithm}

The Dijkstra algorithm (Algo.~\ref{alg:dijkstra-original}) accelerates this Bellman--Ford iteration; it does so by 1) maintaining the set $S$ of \emph{frozen} states, and 2) applying the Bellman update only selectively.

Specifically, 
the \emph{minimal freezing} step (Line~\ref{line:original-dijkstra-collect-minimal}) is crucial in Algo.~\ref{alg:dijkstra-original}. There one selects, among the states that have not yet been frozen, states $y$ with the minimal valuation and newly freezes them.
The core argument of the correctness proof is that the valuation of these states $y$ is already optimal: $d(y) = (\gfp\Phi)(y)$ in the algorithm.

We shall  develop a coalgebraic generalization of this argument, in \S\ref{sec:coalg-dijkstra-algorithm} onward.
The classical correctness proof of the Dijkstra algorithm can be obtained as a special case of the general correctness proof there (Thm.~\ref{thm:correctness-G}).

\begin{example}\label{example:shortest-path-run}
    Let $(\{ 0, 1, 2, 3, 4, 5 \}, E, w, \{ 0 \})$ be the weighted graph in Fig.~\ref{fig:example-weighted-graph}.
    The run of Algo.~\ref{alg:dijkstra-original} for the weighted graph is summarized in Table~\ref{table:run-shortest-path}.
    The returned valuation $d \colon \{ 0, 1, 2, 3, 4, 5 \} \to \RRinfpos$ represents the shortest path from each vertex $x \in \{ 0, 1, 2, 3, 4, 5 \}$ to the target $0 \in T \subseteq X$.
\end{example}
\begin{minipage}{\textwidth}
    \begin{minipage}{0.27\textwidth}
            \centering
            \scriptsize
            \begin{tikzpicture}[node distance=0.8cm, auto, baseline=-0.5cm]
                \node[state,accepting](0) {$0$};
                \node[state] (1) [below=of 0.center, anchor=center] {$1$};
                \node[state] (2) [left=of 0.center, anchor=center] {$2$};
                \node[state] (3) [below=of 2.center, anchor=center] {$3$};
                \node[state] (4) [left=of 2.center, anchor=center] {$4$};
                \node[state] (5) [left=of 3.center, anchor=center] {$5$};
                \path[->] (1) edge node[right] {$1$} (0)
                              edge[bend left=20] node {$1$} (3);
                \path[->] (2) edge[bend left=20] node {$1$} (4)
                              edge node {$6$} (0)
                              edge node {$2$} (3);
                \path[->] (3) edge[bend left=20] node {$2$} (1);
                \path[->] (4) edge[bend left=20] node {$1$} (2);
                \path[->] (5) edge node {$1$} (3)
                              edge node {$3$} (4);
            \end{tikzpicture}
            \captionof{figure}{An example of weighted graph}
            \label{fig:example-weighted-graph}
    \end{minipage}
    \begin{minipage}{0.7\textwidth}
            \centering
            \captionof{table}{The run of Algo.~\ref{alg:dijkstra-original} for the weighted graph in Fig.~\ref{fig:example-weighted-graph}.}
            \label{table:run-shortest-path}
            \scriptsize
            \begin{tabular}{ccccccccc}
                \toprule
                $n$ & $d(0)$ & $d(1)$ & $d(2)$ & $d(3)$ & $d(4)$ & $d(5)$ & $S$ & $Y$\\
                \midrule
                $0$ & $\infty$ & $\infty$ & $\infty$ & $\infty$ & $\infty$ & $\infty$ & & \\
                $1$ & $0$      & $\infty$ & $\infty$ & $\infty$ & $\infty$ & $\infty$ & $\{0\}$ & $\{0\}$ \\
                $2$ & $0$      & $1$      & $6$      & $\infty$ & $\infty$ & $\infty$ & $\{0, 1\}$ & $\{1\}$ \\
                $3$ & $0$      & $1$      & $6$      & $3$      & $\infty$ & $\infty$ & $\{0, 1, 3\}$ & $\{3\}$ \\
                $4$ & $0$      & $1$      & $5$      & $3$      & $\infty$ & $4$ & $\{0, 1, 3, 5\}$ & $\{5\}$ \\
                $5$ & $0$      & $1$      & $5$      & $3$      & $\infty$ & $4$ & $\{0, 1, 2, 3, 5\}$ & $\{2\}$ \\
                $6$ & $0$      & $1$      & $5$      & $3$      & $6$      & $4$ & $\{0, 1, 2, 3, 4, 5\}$ & $\{4\}$ \\
                \bottomrule
            \end{tabular}
    \end{minipage}
\end{minipage}

For accommodating this \emph{Dijkstra acceleration}, we present a technical adaptation of Lem.~\ref{lem:CC}. For convenience, we focus on sequences of length $\omega$.


\begin{lemma}[selective Bellman update]\label{lem:partial-Kleene-chain}
 Let $X$ be a set, $\Omega$ be a poset with the maximum $\top$, $\Phi \colon [X,\Omega] \to [X, \Omega]$ be a monotone map, and  $(P_1, P_2, \dots)$ be a sequence of subsets of $X$ (an element $x\in P_{n}$ is called an  \emph{active} state at the $n$-th iteration). 

    We define maps $d_n \colon X \to \Omega$, for each $n\in \mathbb{N}$,
    by induction:
    \[
    d_0 = \top_{[X,\Omega]}
    \quad \text{and} \quad
    d_{n + 1}(x) = \begin{cases}
        \Phi(d_n)(x) & \text{if } x \in P_{n + 1}, \\
        d_n(x) & \text{if }x\not\in P_{n + 1}.
    \end{cases}
    \]
    Note that the valuations are updated only in active states. 
    Then, we have $d_{n}\sqsupseteq d_{n + 1} \sqsupseteq\Phi( d_{n} )$ for each $n\in \mathbb{N}$. Moreover, for each $n$, we have
    $d_n \sqsupseteq \Phi^n(\top) $.
\end{lemma}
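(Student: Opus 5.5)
The plan is to prove the chain $d_n \sqsupseteq d_{n+1} \sqsupseteq \Phi(d_n)$ by induction on $n$, and then to derive $d_n \sqsupseteq \Phi^n(\top)$ from it by a second induction, using only monotonicity of $\Phi$.

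For the first claim, I would strengthen the induction hypothesis to $d_n \sqsupseteq \Phi(d_n)$. This is the invariant from which both inequalities follow pointwise.
\begin{itemize}
 \item \emph{Base case.} We have $d_0 = \top \sqsupseteq \Phi(d_0)$ trivially, since $\top$ is the maximum of $[X,\Omega]$.
 \item \emph{Consequences of the invariant.} Assume $d_n \sqsupseteq \Phi(d_n)$. At $x \in P_{n+1}$ we have $d_{n+1}(x) = \Phi(d_n)(x)$, which is $\sqsubseteq d_n(x)$ by the invariant and $\sqsupseteq \Phi(d_n)(x)$ trivially. At $x \notin P_{n+1}$ we have $d_{n+1}(x) = d_n(x) \sqsupseteq \Phi(d_n)(x)$, again by the invariant. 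Hence $d_n \sqsupseteq d_{n+1} \sqsupseteq \Phi(d_n)$.
 \item \emph{Propagating the invariant.} By monotonicity of $\Phi$, $d_n \sqsupseteq d_{n+1}$ gives $\Phi(d_n) \sqsupseteq \Phi(d_{n+1})$. Combining with the previous item, $d_{n+1} \sqsupseteq \Phi(d_n) \sqsupseteq \Phi(d_{n+1})$, which is the invariant at $n+1$.
\end{itemize}

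For the second claim, induct on $n$ again. The base case is $d_0 = \top = \Phi^0(\top)$. For the step, assume $d_n \sqsupseteq \Phi^n(\top)$. Then $d_{n+1} \sqsupseteq \Phi(d_n) \sqsupseteq \Phi(\Phi^n(\top)) = \Phi^{n+1}(\top)$, using the first claim and then monotonicity.

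The only real obstacle is noticing that the induction needs the invariant $d_n \sqsupseteq \Phi(d_n)$, i.e.\ that each $d_n$ is a post-fixed point in the descending sense. Without it, the inequality at inactive states $x \notin P_{n+1}$, namely $d_n(x) \sqsupseteq \Phi(d_n)(x)$, cannot be justified. With the invariant in place, everything else is a routine pointwise case split.
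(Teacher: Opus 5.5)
Your proof is correct and takes essentially the same approach as the paper. The paper also inducts on $n$, derives the post-fixed-point inequality $\Phi(d_{n+1}) \sqsubseteq \Phi(d_n) \sqsubseteq d_{n+1}$ by monotonicity inside the step, and then does the same pointwise case split on $P_{n+2}$. The only difference is bookkeeping: you make $\Phi(d_n) \sqsubseteq d_n$ the explicit induction hypothesis and prove $d_n \sqsupseteq \Phi^n(\top)$ by a separate induction, whereas the paper carries all three inequalities through one induction.
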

Obviously, convergence to $\gfp\Phi$---the main statement of Lem.~\ref{lem:CC}---is not guaranteed for an arbitrary choice of  $(P_1, P_2, \dots)$. The Dijkstra algorithm achieves this convergence by the careful design of active states (which is by the design of frozen states).


\section{The coalgebraic shortest path problem}\label{sec:minimum-weight-gfp}


In this section, we formalize our problem in the categorical language. It generalizes
 the SPP on weighted graphs (Prob.~\ref{def:shortestPathProblem}).
The generalization is mainly in three directions.
\begin{itemize}
 \item A \emph{(signature) functor} $F$ for coalgebras. Here coalgebras model state-based dynamics and generalize weighted graphs.  We restrict to functors of the shape $F = \WGG{G} = \BB \times \powsetfin(G\blank)$, where $G$ is a functor. Coalgebras for $\WGG{G}$ are referred to as \emph{weighted $G$-graphs}. 
 \item A \emph{weight domain} $\Omega$ that generalizes $\RRinfpos$ in Prob.~\ref{def:shortestPathProblem}. (We use the notion of \emph{pointed weight domain}; it additionally specifies so-called a \emph{final weight}.)
 \item A \emph{transition modality} $G\Omega\to \Omega$, formalized as a $G$-algebra over $\Omega$. This specifies how weights are accumulated along a $G$-path (i.e.\ repeated $G$-transitions).
\end{itemize}
The generalized problem is called the 
\emph{coalgebraic shortest path problem (CSPP)}. The original problem (Prob.~\ref{def:shortestPathProblem}) is its example, of course; other examples are given in \S\ref{subsec:CSPPExamples} (cf.\ Table~\ref{table:examples-problems}).

\subsection{Coalgebras as a generalization of weighted graphs}
Coalgebras are categorical models of state transition systems \cite{Rutten2000,Jacobs2016}.
\begin{definition}[coalgebra]\label{def:coalg}\todo{In preliminaries?}
    Let $F \colon \Sets \to \Sets$ be an endofunctor on $\Sets$.
    An \emph{$F$-coalgebra} is a morphism $\gamma \colon X \to FX$ in $\Sets$.
\end{definition}
In this paper, we restrict the signature functor $F$ to the following class.
\begin{definition}[weighted $G$-graph]\label{def:weightedGGraph}
     Let $G \colon \Sets \to \Sets$ be an endofunctor on $\Sets$. 
    The endofunctor $\WGG{G} \colon \Sets \to \Sets$ is defined by
    $\WGG{G} := \BB \times \powsetfin(G \blank)$
    where $\BB=\{ \btt, \bff \}$ and $\powsetfin$  is the finite power set functor.
    A $\WGG{G}$-coalgebra is called a \emph{weighted $G$-graph}. 
\end{definition}
In the weighted $G$-graph
\begin{math}
\gamma\colon X\to \BB \times \powsetfin(G X)
\end{math},
the first component of $\gamma(x)$ (the Boolean truth value, $\btt$ or $\bff$) tells if a state $x$ is a target or not. The second component of $\gamma(x)$ gives a finite branching (modeled by $\powsetfin$) over $G$-transitions. 
In the original SPP
(Prob.~\ref{def:shortestPathProblem}), the functor $G$ is given by $\RRpos \times \blank$ and assigns a weight from $\RRpos$ to each successor.

\subsection{Weight domains}
In the next definition, 
we pose  additional conditions  on the poset $(\Omega,\sqsubseteq)$. A \emph{final weight} is the weight given to an immediately terminating path; it generalizes the weight $0\in\RRinfpos$ for the case $x\in T$ in~\eqref{eq:BellmanOriginal}. 

\begin{definition}[(pointed) weight domain]\label{def:pointedWeightDom}
    A \emph{weight domain} is a poset $\Omega = (\Omega, \sqsubseteq)$ such that
    1) $\sqsubseteq$ is a total order,
    2) it has the least element $\bot_{\Omega}$, and 
    3) it has  the infimum of any subset of $\Omega$ (thus also the greatest element $\top_{\Omega}$).
    A \emph{pointed weight domain} is a triple $(\Omega, \sqsubseteq, \finalweight)$ where
    $(\Omega, \sqsubseteq)$ is a weight domain and
    $\finalweight \in \Omega$ is a designated \emph{final weight}.
\end{definition}


Recall that $[X,\Omega]$ denotes the function space from $X$ to $\Omega$, with its elements called $\Omega$-valuations.
The set $[X,\Omega]$ is equipped with the following pointwise order.
\begin{definition}
    Let $\Omega$ be a weight domain, and  $X$ be a set.
    For maps $d, e \in [X,\Omega]$, we write $d \sqsubseteq e$ if $d(x) \sqsubseteq e(x)$ for every $x \in X$.

    The pair $([X, \Omega], \sqsubseteq)$ is a partially ordered set with the greatest element $\top_{[X, \Omega]} = \lambda x. \top_{\Omega}$ and the least element $\bot_{[X, \Omega]} = \lambda x. \bot_{\Omega}$.
\end{definition}

\subsection{Transition modalities and Bellman operators}
In a weighted $G$-graph 
\begin{math}
    \gamma\colon X\to \BB \times \powsetfin(G X)
\end{math}, at each state $x$, we pick one $G$-transition among those finitely many which are offered.\footnote{Specifically, by $(\pi_{1} \co \gamma)(x)\in \powsetfin(G X)$---recall our indexing convention from \S\ref{subsec:notations}.} Repeating this exhibits a succession of $G$-transitions. Weights get accumulated in its course. 

A transition modality specifies how this accumulation is defined. It is formalized as a categorical algebra, together with a suitable ``continuity'' condition that we need for correctness  (Thm.~\ref{thm:correctness-G}, and 
Lem.~\ref{lem:fold-one-step} in particular).


\todo{2026-03-07 Takahiro, please revise the part after this point. 2026-03-10 Revised}


\begin{definition}[transition modality]
    \label{def:transtionModality}
    Let  $G \colon \Sets \to \Sets$ be a functor.
    A \emph{transition modality} $\sigma$ for $G$ is a $G$-algebra $\sigma \colon G\Omega \to \Omega$ such that, for any
    finite set $X$, the map
    \begin{equation}\label{eq:sigmaCircGBla}
        \sigma \co G(\blank) \colon\quad [X, \Omega] \;\longrightarrow\; [GX, \Omega],\qquad
        \bigl(\,X\xrightarrow{d}\Omega\,\bigr)\;\longmapsto\; \bigl(\,GX\xrightarrow{Gd}G\Omega\xrightarrow{\sigma} \Omega\,\bigr)
    \end{equation}
    preserves the infimum of any subset $D$ of $[X, \Omega]$, that is, 
    $ \sigma \circ G\left(\bigsqcap_{d \in D} d\right)
    = \bigsqcap_{d \in D} (\sigma \circ Gd)$.
\end{definition}


Optimization problems on coalgebras are specified by Bellman operators, which is induced by a transition modality $\sigma \colon G \Omega \to \Omega$.

\todoil{The organization of the rest of the paper is too much like the usual mathematical writing. Let's change it!}


\begin{definition}[Bellman operator]
Let $G \colon \Sets \to \Sets$ be a functor, $\Omega = (\Omega, \sqsubseteq, \xi)$ be a pointed weight domain, $\sigma \colon G \Omega \to \Omega$ be a transition modality and $\gamma \colon X \to \BB \times \powsetfin(GX)$ be a weighted $G$-graph.
    The \emph{Bellman operator} $\Bellman{G}{\sigma}{\gamma} \colon [X, \Omega] \to [X, \Omega]$ is defined as follows:
    \[
    \Bellman{G}{\sigma}{\gamma}(d)(x) = \begin{cases}
        \finalweight \sqcap \bigsqcap_{a \in A} \sigma\bigl((Gd)(a)\bigr)
        & \text{(if $\gamma(x) = (\btt, A)$)} \\
        \bigsqcap_{a \in A} \sigma\bigl((Gd)(a)\bigr)
        & \text{(if $\gamma(x) = (\bff, A)$).}
    \end{cases}
    \]
\end{definition}
\todoil{\checkmark Explain connection Bellman operator and Bellman--Ford algorithm}
The definition of the Bellman operator generalizes that for the SPP given in~\eqref{eq:BellmanOriginal}.
By iteratively applying the Bellman operator $\Bellman{G}{\sigma}{\gamma}$ to $\top \in [X, \Omega]$, we obtain the greatest fixed point $\gfp \Bellman{G}{\sigma}{\gamma}$.
This procedure can be seen as a coalgebraic generalization of the original Bellman--Ford algorithm for the SPP.



\subsection{The coalgebraic shortest path problem}


The problem we tackle is stated as follows.
In \versionswitch{\cite[\S{}C.1]{CoalgDijkstraArXiv}}{\S\ref{sec:VxX}}, we show that by taking $G = \RRpos \times \blank$ and $\sigma \colon G \RRinfpos \to \RRinfpos$ defined by $\sigma(a, b) = a + b$, we obtain the original SPP.
Note that, some examples obtained by instantiating $G$ and $\sigma$ are not literally about shortest paths; however, for the sake of terminology, we  call the general problem a \emph{coalgebraic shortest path problem}.
\todo{\checkmark Name: coalgebraic shortest path problem (with a remark). Make it self-contained. A complicated definition can be somewhere else, but we need a pointer to it}
\begin{problem}[coalgebraic shortest path problem (CSPP)]\label{prob:CSPP}
    Let $G \colon \Sets \to \Sets$ be a functor, and $\sigma \colon G\Omega \to \Omega$ be a transition modality on a pointed weight domain $\Omega = (\Omega, \sqsubseteq, \finalweight)$.
    A \emph{coalgebraic shortest path problem} (\emph{CSPP}) on $(G, \sigma)$ is the problem of computing the greatest fixed point $\gfp \Bellman{G}{\sigma}{\gamma}$ of the Bellman operator $\Bellman{G}{\sigma}{\gamma} \colon [X, \Omega] \to [X, \Omega]$ for a given weighted $G$-graph $\gamma \colon X \to \BB \times \powsetfin(G X)$.
\end{problem}

\subsection{Instances of the CSPP}\label{subsec:CSPPExamples}
\todo[inline]{2026-05-04 Takahiro, please triple-check the writing here}
 Many  graph-based optimization problems are instances of the CSPP.
See Table~\ref{table:examples-problems}. What is notable there is that subtle differences in problems can change applicability of the Dijkstra acceleration (see the last column). Such phenomena are well-known for shortest path with negative edges or longest path; we identify more instances, such as interest vs.\ discount rates.

\begin{notations}
We use $\sqcap$ and $\sqcup$ for the meet and join with respect to the order of $\Omega$, respectively, and use $\min$ and $\max$ for the minimum and maximum with respect to the usual order of $\NN$ or $\RR$, respectively.
\end{notations}

\begin{remark}
    We formulated the CSPP using the greatest fixed point.
   A dual view is possible, reversing the order and computing the 
 least fixed point.
    Our choice of the greatest fixed point is made to naturally match the usual fixed-point formulation of the SPP \cite{Misra2001}.
\end{remark}
\begin{table}[]
\caption{
 Instances of the CSPP on $(G, \sigma)$.
    The ``Dijkstra'' column indicates if the coalgebraic Dijkstra algorithm is correct there. Known Dijkstra-like algorithms are noted with references.
}
\label{table:examples-problems}
\rowcolors{1}{}{gray!10} 
\renewcommand{\arraystretch}{1.2} 
\small 
\setlength{\tabcolsep}{3pt} 

\centering 
\resizebox{0.95\textwidth}{!}{%
\begin{tabular}{@{}llllc@{}}
    \toprule
    Problem & $(\Omega, \sqsubseteq, \finalweight)$ & $G$ & $\sigma \colon G\Omega \to \Omega$ & Dijkstra applies? \\
    \midrule
    Reachability & $(\{0,\infty\}, \le, 0)$ & $\idfunc$ & $\sigma(a) = a$ & Yes \\
    Unweighted shortest path & $(\NNinf, \le, 0)$ & $\idfunc$ & $\sigma(a) = 1 + a$ & Yes \\
    Unweighted longest path
    & $(\NNinf, \ge, 0)$ & $\idfunc$ & $\sigma(a) = 1 + a$ & No \\
    SPP \cite{Dijkstra1959} & $(\RRinfpos, \le, 0)$ & $\RRpos \times \blank$ & $\sigma(a, b) = a + b$ & Yes \cite{Dijkstra1959} \\
    SPP with negative edges & $(\RR^{\pm\infty}, \le, 0)$ & $\RR \times \blank$ & $\sigma(a, b) = a + b$ & No \\
    SPP with interest rate & $(\RRinfpos, \le, 0)$ & $\RRpos \times [1, \infty) \times \blank$ & $\sigma(a, a', b) = a + a'b$ & Yes \\ 
    SPP with discount rate & $(\RRinfpos, \le, 0)$ & $\RRpos \times [0, 1] \times \blank$ & $\sigma(a, a', b) = a + a'b$ & No \\
    Widest path \cite{Pollack1960} & $(\RRinfpos, \ge, \infty)$ & $\RRpos \times \blank$ & $\sigma(a, b) = a \sqcup b = \min(a, b)$ & Yes \cite{Pollack1960} \\
    Most reliable path \cite{Wing1961} & $([0, 1], \ge, 1)$ & $[0,1] \times \blank$ & $\sigma(a, b) = ab$ & Yes \cite{Wing1961}\\
    Shortest binary tree & $(\RRinfpos, \le, 0)$ & $\RRpos \times (\blank)^2$ & $\sigma(a, b_0, b_1) = a + b_0 + b_1$ & Yes \\
    Binary reachability game & $(\{0, \infty\}, \le, 0)$ & $(\blank)^2$ & $\sigma(a_0, a_1) = a_0 \sqcup a_1 = \max(a_0, a_1)$ & Yes \\
    Reachability game \cite{Mazala2002} & $(\{0, \infty\}, \le, 0)$ & $\powsetfinne$ & $\sigma(A) = \bigsqcup A = \max A$ & Yes \cite{Mazala2002} \\
    Dynamic game \cite{BardiLopez2016} & $(\RRinfpos, \le, 0)$ & $\powsetfinne(\RRpos \times \blank)$ & $\sigma(A) = \max_{(a,b) \in A} (a + b)$ & Yes \cite{BardiLopez2016} \\
\begin{minipage}{3.8cm}
     Dynamic game with\newline discount rate \cite{BardiLopez2016} 
\end{minipage}
& $(\RRinfpos, \le, 0)$ & $\powsetfinne([\ell_0, L] \times \blank)$ & $\sigma_r(A) = \max_{(a,b) \in A} (a + rb)$ &
    \begin{minipage}{3.2cm}
        No (unless $\ell_0 = L$ or $r = 1$, see Ex.~\ref{example:incorrectness-of-bardi-lopez-condition}) \cite{BardiLopez2016}
    \end{minipage} \\
    \begin{minipage}{3.8cm}
        Max probabilistic\newline reachability \cite[\S 10.6.1]{BaierKatoen2008}
    \end{minipage}
    & $([0, 1], \ge, 1)$ & $\distr$ & $\sigma(\mu) = \sum_{a\in \Omega} \mu(a)a$ & No \\
    \bottomrule
\end{tabular}
}
\end{table}

\begin{figure}[t]
\centering
\begin{minipage}[b]{0.38\linewidth}
    \centering
    \scriptsize
    \begin{tikzpicture}[node distance=0.8cm, auto, baseline=-0.5cm]
        \node[state,accepting](0) {$0$};
        \node[state] (1) [below=of 0.center, anchor=center] {$1$};
        \node[state] (2) [left=of 0.center, anchor=center] {$2$};
        \node[state] (3) [below=of 2.center, anchor=center] {$3$};
        \node[state] (4) [left=of 2.center, anchor=center] {$4$};
        \path[->] (1) edge (0)
                      edge[bend left=20] (3);
        \path[->] (2) edge[bend left=20] (4);
        \path[->] (3) edge[bend left=20] (1);
        \path[->] (4) edge[bend left=20] (2);
    \end{tikzpicture}
    \hspace{2mm}
    $\begin{aligned}
        \gamma(0) & = (\btt, \emptyset) \\
        \gamma(1) & = (\bff, \{ 0, 3 \}) \\
        \gamma(2) & = (\bff, \{ 4 \}) \\
        \gamma(3) & = (\bff, \{ 1 \}) \\
        \gamma(4) & = (\bff, \{ 2 \})
    \end{aligned}$
\end{minipage}
\hspace{1em}
\begin{minipage}[b]{0.54\linewidth}
    \centering
    \scriptsize
    \begin{tikzpicture}[node distance=0.8cm, auto, baseline=-0.5cm]
        \node[state,accepting](0) {$0$};
        \node[state] (1) [below=of 0.center, anchor=center] {$1$};
        \node[state] (2) [left=of 0.center, anchor=center] {$2$};
        \node[state] (3) [below=of 2.center, anchor=center] {$3$};
        \node[state] (4) [left=of 2.center, anchor=center] {$4$};
        \node[state] (5) [left=of 3.center, anchor=center] {$5$};
        \path[->] (1) edge node[right] {$1$} (0)
                      edge[bend left=20] node {$1$} (3);
        \path[->] (2) edge[bend left=20] node {$1$} (4)
                      edge node {$6$} (0)
                      edge node {$2$} (3);
        \path[->] (3) edge[bend left=20] node {$2$} (1);
        \path[->] (4) edge[bend left=20] node {$1$} (2);
        \path[->] (5) edge node {$1$} (3)
                      edge node {$3$} (4);
    \end{tikzpicture}
    \hspace{2mm}
    $\begin{aligned}
        \gamma(0) & = (\btt, \emptyset) \\
        \gamma(1) & = (\bff, \{ (1, 0), (1, 3) \}) \\
        \gamma(2) & = (\bff, \{ (6, 0), (2, 3), (1, 4) \}) \\
        \gamma(3) & = (\bff, \{ (2, 1) \}) \\
        \gamma(4) & = (\bff, \{ (1, 2) \}) \\
        \gamma(5) & = (\bff, \{ (1, 3), (3, 4) \})
    \end{aligned}$
    
\end{minipage}

\caption{Examples of weighted $G$-graphs $\gamma \colon X \to \BB \times \powsetfin(X)$ and $\gamma \colon X \to \BB \times \powsetfin(\RRpos \times X)$.}
\label{fig:coalg-shortest-path}
\end{figure}

\todo{\checkmark Please care consistency with earlier sections, such as destination vs.\ target}

Here are instances of the CSPP (cf.\ Table~\ref{table:examples-problems}). Some more examples are shown in \versionswitch{\cite[\S{}B]{CoalgDijkstraArXiv}}{\S\ref{sec:moreInstancesOfTheCSPP}}.

\todoil{Let's expand this. (If too long, it can be moved to an appendix, so no worries.) First define the Bellman operator (more precisely G, $\sigma$), and say that its GFP is blahblah. Say that it induces topological sorting. In Table~1, say ``Topological sort (in sense of Ex.~\ref{example:topological-sorting})''}

\begin{example}[SPP]\label{examlpe:shortest-path}
    Let $\Omega = (\RRinfpos, \le, 0)$,
    $G \colon \Sets \to \Sets$ be a functor defined as $GX = \RRpos \times X$ and
    $\sigma \colon G \Omega \to \Omega$ be a transition modality defined as $\sigma(a,b) = a + b$.
    For a finite set $X$,
    a weighted $G$-graph $\gamma \colon X \to \BB \times \powsetfin(\RRpos \times X)$ is a directed graph such that
    (1) $X$ is a set of vertices,
    (2) a set of targets $Z = \{ x \in X \mid \pi_0(\gamma(x)) = \btt \}$ is specified, and
    (3) for $x, y\in X$ and $a \in \RRpos$, there is an edge from $x$ to $y$ with length $a$ if $(a, y) \in \pi_1(\gamma(x))$.
    An example of a coalgebra $\gamma \colon X \to \BB \times \powsetfin(\RRpos \times X)$ is shown on the right-hand side of Fig.~\ref{fig:coalg-shortest-path}, which is the coalgebraic representation of the weighted graph in Fig.~\ref{fig:example-weighted-graph}.
    Given an $\Omega$-valuation $d \in [X,\Omega]$, the Bellman update $\Bellman{G}{\sigma}{\gamma}(d)$ is given by
    \[ \Bellman{G}{\sigma}{\gamma}(d)(x) = \begin{cases}
        0 & (\text{if }\gamma(x) = (\btt, A)) \\
        \min \{ a + d(y) \mid (a, y) \in A \} & (\text{if }\gamma(x) = (\bff, A)).
    \end{cases} \]
    We can show that $\gfp \Bellman{G}{\sigma}{\gamma}(x)$ is the value of the shortest path from $x$ to one of the targets.
\end{example}

\begin{example}[unweighted longest path]\label{example:topological-sorting}
    Let $\Omega = (\NNinf, \ge, 0)$, $G = \idfunc$ and $\sigma \colon \Omega \to \Omega$ be a transition modality defined as $\sigma(a) = 1 + a$.
    For a weighted $G$-graph $\gamma \colon X \to \BB \times \powsetfin(X)$, the Bellman operator $\Bellman{G}{\sigma}{\gamma} \colon [X, \Omega] \to [X, \Omega]$ is given by
    \[
    \Bellman{G}{\sigma}{\gamma}(d)(x) = \begin{cases}
        \max(0, \max_{x \in A} (1 + d(x)) ) & (\text{if }\gamma(x) = (\btt, A)) \\
        \max_{x \in A} (1 + d(x)) & (\text{if }\gamma(x) = (\bff, A)).
    \end{cases}
    \]
    If $\gamma$ is a directed acyclic graph (DAG), the greatest fixed point $\gfp\Bellman{G}{\sigma}{\gamma}(x)$ is the length of the longest path from $x \in X$ to one of targets.
    If $\gamma$ is not a DAG, $\gfp\Bellman{G}{\sigma}{\gamma}(x) = \infty$ holds for every $x$ reachable to a vertex in a cycle.
\end{example}

\todoil{\checkmark Write in such a way that it's compared with Ex.~\ref{examlpe:shortest-path}}
\begin{example}[SPP with negative edges]\label{example:shortest-path-negative}
    We modify the SPP in Ex.~\ref{examlpe:shortest-path} by allowing negative edge weights.
    Let $\Omega = (\RRpminf, \le, 0)$ and
    $G \colon \Sets \to \Sets$ be a functor defined as $GX = \RR \times X$.
    Let  $\sigma \colon G \Omega \to \Omega$ be a transition modality defined as $\sigma(a,b) = a + b$.
    For a finite set $X$, a weighted $G$-graph $\gamma \colon X \to \BB \times \powsetfin(\RR \times X)$ is a directed graph with targets $Z \subseteq X$. Each edge has a length, which may be negative.
    For $d \in [X, \Omega]$, the Bellman operator is
    \[
    \Bellman{G}{\sigma}{\gamma}(d)(x) = \begin{cases}
        \min(0, \min \{ \sigma(a, d(y)) \mid (a, y) \in A \})  & (\text{if }\gamma(x) = (\btt, A)) \\
        \min \{ \sigma(a, d(y)) \mid (a, y) \in A \} & (\text{if }\gamma(x) = (\bff, A)).
    \end{cases}
    \]
    For $x \in X$, the value $\gfp \Bellman{G}{\sigma}{\gamma}(x)$ of the greatest fixed point is the shortest path from $x$ to one of the targets.
    For example, if $x \in X$ is reachable to a vertex in $Z$ and is contained in a negative cycle, then $\gfp\Phi(x) = -\infty$ holds.
\end{example}

\begin{example}[SPP with interest/discount rate]\label{example:shortest-path-interest-discount}
    We modify the SPP in Ex.~\ref{examlpe:shortest-path} by introducing interest/discount rates.
    Let $\Omega = (\RRinfpos, \le, 0)$, as in Ex.~\ref{examlpe:shortest-path}.
    For a functor $GX = \RRpos \times [1, \infty) \times X$, a transition modality $\sigma(a, a', b) = a + a'b$ and a weighted $G$-graph $\gamma \colon X \to \BB \times \powsetfin(\RRinfpos \times [1, \infty) \times X)$, the greatest fixed point operator of the Bellman operator is the shortest path with interest rate.
    The ``length'' of a path $x_n \xrightarrow{(a_n, a'_n)} x_{n-1} \xrightarrow{(a_{n-1}, a'_{n-1})}\cdots \xrightarrow{(a_1, a'_1)} x_0$ is defined by $a_n + a'_n(a_{n-1} + a'_{n-1}( \cdots (a_1 + a'_1 0) \dots ))$, where $a'_i$ are interest rates.
    For a functor $GX = \RRpos \times [0,1] \times X$, a transition modality $\sigma(a, a', b) = a + a'b$ and a weighted $G$-graph $\gamma \colon X \to \BB \times \powsetfin(\RRpos \times [0, 1] \times X)$, the greatest fixed point operator of the Bellman operator is the shortest path with discount rate.
    In this case, the definition of the ``length'' of a path is similar to the shortest path with interest rate, but all rates $a'_i$ satisfy $0 \le a'_i \le 1$.
\end{example}

\begin{example}[the shortest binary tree problem]\label{example:shortest-bin-tree}
    We modify the SPP (Ex.~\ref{examlpe:shortest-path}) so that each edge has two successors. In this case, the problem becomes that of finding the shortest tree.
    Let $\Omega = (\RRinfpos, \le, 0)$ as in Ex.~\ref{examlpe:shortest-path} and
    $G \colon \Sets \to \Sets$ be a functor defined as $GX = \RRpos \times X^2$. 
    Let $\sigma \colon G\Omega \to \Omega$ is a transition modality defined as $\sigma(a, b_0, b_1) = a + b_0 + b_1$.
    For a finite set $X$, a weighted $G$-graph $\gamma \colon X \to \BB \times \powsetfin(\RRpos \times X^2)$ is defined as follows:
    (1) A set of targets $Z = \{ x \in X \mid \pi_0(\gamma(x)) = \btt \}$ is specified.
    (2) For each vertex $x \in X$, there is a finite set $\{ (a^j, y^j_0, y^j_1) \}_{j \in J_x} \subseteq \RRpos \times X \times X$.
    An example of a weighted $G$-graph $\gamma \colon X \to \BB \times \powsetfin(\RRpos \times X^2)$ is shown in Fig.~\ref{fig:coalg-shortest-binary-tree}.
    \begin{figure}[t]
        \centering
        \scriptsize
        \[
        \begin{tikzpicture}[node distance=1.1cm, auto, baseline=-0.6cm]
            \node[state,accepting](0) {$0$};
            \node[state] (1) [below=of 0.center, anchor=center] {$1$};
            \node[dot, above=0.5cm of 1.center, anchor=center] (1a) {};
            \node[state] (2) [left=of 0.center, anchor=center] {$2$};
            \node[dot, right=0.5cm of 2.center, anchor=center] (2a) {};
            \node[state] (3) [below=of 2.center, anchor=center] {$3$};
            \node[dot, right=0.5cm of 3.center, anchor=center] (3a) {};
            \node[dot, above=0.5cm of 3.center, anchor=center] (3b) {};
            \node[state] (4) [left=of 2.center, anchor=center] {$4$};
            \node[dot, right=0.5cm of 4.center, anchor=center] (4a) {};
            \path (1) edge node {$1$} (1a);
            \path[->] (1a) edge[bend right=20] (0)
                           edge[bend left=20] (2);
            \path (2) edge node {$3$} (2a);
            \path[->] (2a) edge[bend left=20] (0)
                           edge[bend right=20] (0);
            \path (3) edge node[below] {$2$} (3a);
            \path[->] (3a) edge[bend right=20] (1)
                           edge[bend left=20] (2);
            \path (3) edge node[left] {$2$} (3b);
            \path[->] (3b) edge[bend left=30] (2)
                           edge[bend left=20] (4);
            \path (4) edge node {$1$} (4a);
            \path[->] (4a) edge[bend left=20] (2)
                           edge[bend left=40] (4);
        \end{tikzpicture}
        \qquad
        \begin{aligned}
            \gamma(0) & = (\btt, \emptyset) \\
            \gamma(1) & = (\bff, \{ (1, 2, 0) \}) \\
            \gamma(2) & = (\bff, \{ (3, 0, 0) \}) \\
            \gamma(3) & = (\bff, \{ (2, 4, 2), (2, 2, 1) \}) \\
            \gamma(4) & = (\bff, \{ (1, 2, 4) \})
        \end{aligned}
        \qquad
        \begin{aligned}
            \gfp \Phi(0) & = 0 \\
            \gfp \Phi(1) & = 4 \\
            \gfp \Phi(2) & = 3 \\
            \gfp \Phi(3) & = 9 \\
            \gfp \Phi(4) & = \infty
        \end{aligned}
        \]
        \caption{An example of a weighted $G$-graph $\gamma \colon X \to \BB \times \powsetfin(\RRpos \times X^2)$ for $X = \{ 0, 1, 2, 3, 4 \}$.}
        \label{fig:coalg-shortest-binary-tree}
    \end{figure}

    For an $\Omega$-valuation $d \in [X, \Omega]$, the updated valuation $\Bellman{G}{\sigma}{\gamma}(d)$ is
    \[
    \Bellman{G}{\sigma}{\gamma}(d)(x) = \begin{cases}
        0 & (\text{if }\gamma(x) = (\btt, A)) \\
        \min_{(a, y_0, y_1) \in A} (a + d(y_0) + d(y_1)) & (\text{if }\gamma(x) = (\bff, A)).
    \end{cases}
    \]
    We define the notion of \emph{run trees} of $\gamma \colon X \to \BB \times \powset(\RRpos \times X^2)$:
   it is a rooted binary tree $T$ of finite height such that
   (1) each node is labeled by a pair $(x, a)$ of $x \in X$ and $a \in \RRpos$,
   (2) for each leaf, its label $(x, a)$ satisfies $x \in Z$ and $a = 0$, and
   (3) for each node labeled by $(x, a)$ with their children labeled by $(y_i, b_i)$ ($i = 0, 1$) satisfies $(a, y_0, y_1) \in \pi_1(\gamma(x))$.
    Given a run tree $T$, we define its length $\sigma(T)$ recursively as follows:
    (1) If $T$ consists of a single node, then $\sigma(T) = 0$.
    (2) If the root of $T$ is labeled by $(x, a)$ and the subtrees of the root are $T_0$ and $T_1$, then $\sigma(T) = a + \sigma(T_0) + \sigma(T_1)$.

    The value $\gfp \Bellman{G}{\sigma}{\gamma}(x)$, the greatest fixed point, is the length of the shortest run tree $T$ whose root is labeled by $(x,a)$ for some $a\in\Omega$.
\end{example}

\begin{example}[the reachability game, the dynamic game]\label{example:reachability-game-dynamic-game}  \todo{2026-05-04 Do not refer to the appendix.}


    Let $\Omega = (\{ 0, \infty \}, \le, 0)$ and
    $G \colon \Sets \to \Sets$ be the non-empty finite power set functor $\powsetfinne$.
    Let 
    $\sigma \colon G\Omega \to \Omega$ be a transition modality defined as
    $\sigma(A) = \bigsqcup A$.
    For a finite set $X$ of vertices, a weighted $G$-graph $\gamma \colon X \to \BB \times \powsetfin(\powsetfinne(X))$ is given by the following structure:
    (1) A set $Z = \{ x \in X \mid \pi_0(\gamma(x)) = \btt \}$ of targets is specified.
    (2) For each vertex $x \in X$, there is a finite family $\{ (y^j_i)_{i \in I_j} \}_{j \in J_x}$ of sets of vertices.

    We define a \emph{reachability game} on $\gamma$ as follows: The game is played between two players---Alice and Bob.
    Starting with a pebble placed on a vertex $x \in X$, the game proceeds by iterating the following procedure.
    \begin{enumerate}
        \item If the vertex where the pebble is placed is in $Z$, Alice wins.
        \item Otherwise, Alice chooses a label $j \in J_x$.
        \item Bob chooses a label $i \in I_j$ and moves the pebble from $x$ to $y^j_i$.
    \end{enumerate}
    Starting with a pebble on a vertex $x \in X$, we ask whether Alice can always win in finitely many steps, regardless of Bob’s moves.
    The answer is $\gfp\Bellman{G}{\sigma}{\gamma}(x)$:
    \[
    \gfp \Bellman{G}{\sigma}{\gamma}(x) =
    \begin{cases}
        0 & \text{if there is a winning strategy of Alice,} \\
        \infty & \text{otherwise.}
    \end{cases}
    \]

    By replacing $\Omega$ with $(\RRinfpos, \le, 0)$ and modifying $G$ and $\sigma$ accordingly (see Table~\ref{table:examples-problems}), we obtain a game in which Alice aims to maximize her reward.
    The maximal reward achievable by Alice under the optimal strategy in this game coincides with the greatest fixed point.

    A special case of \emph{dynamic games} studied by Bardi and L{\'{o}}pez \cite{BardiLopez2016} can also be modeled in our framework.
    Let $\Omega = (\RRinfpos, \le, \finalweight)$ be a weight domain, $\ell_0$ and $L$ be positive real numbers with $\ell_0 \le L$, $G = \powsetfinne([\ell_0, L] \times X)$ and $\sigma \colon G\Omega \to \Omega$ be a transition modality defined as $\sigma_r(A) = \max_{(a,b) \in A} (a + rb)$ for some fixed discount rate $r \in (0,1]$.
    For a coalgebra $\gamma \colon X \to \BB \times \powsetfinne([\ell_0, L] \times X)$, the value $\gfp \Bellman{G}{\sigma}{\gamma}(x)$ is the maximum reward that Alice can obtain starting from $x$.
    The \CSPP{} on $(G, \sigma_r)$ is an alternating case of 
    the dynamic game \cite{BardiLopez2016}.

Whether the Dijkstra acceleration works in this setting is a subtle issue. It is discussed  later in Ex.~\ref{example:incorrectness-of-bardi-lopez-condition}.


\end{example}

%

\begin{example}[the maximum probabilistic reachability]\label{example:max-prob-reachability}
    Let $\Omega = ([ 0, 1 ], \ge, 1)$ and
    $G \colon \Sets \to \Sets$ be the distribution functor $G = \distr$.
    Let $\sigma \colon \distr\Omega \to \Omega$ be a transition modality defined as
    $\sigma(\mu) = \sum_{a \in \Omega} \mu(a) a$.
    For a finite set $X$, a weighted $G$-graph $\gamma \colon X \to \BB \times \powsetfin(\distr X)$ is the following structure:
    (1) A set $Z = \{ x \in X \mid \pi_0(\gamma(x)) = \btt \}$ is specified.
    (2) For each vertex $x \in X$, there is a finite set $\{ \mu_j \}_{j \in J_x}$ of probabilistic distributions on $X$.

    For an $\Omega$-valuation $d \in [X, \Omega]$, the updated valuation by the Bellman operator $\Bellman{G}{\sigma}{\gamma}(d)$ is
    \[
    \Bellman{G}{\sigma}{\gamma}(d)(x) = \begin{cases}
        1 & (\text{if }\gamma(x) = (\btt, A)) \\
        \max_{\mu \in A} \sum_{y \in X} \mu(y) d(y) & (\text{if }\gamma(x) = (\bff, A)).
    \end{cases}
    \]
    We define a one-player pebble game on $\gamma$.
    Starting with a pebble placed on a vertex $x \in X$, the game proceeds by iterating the following procedure:
    \begin{enumerate}
        \item If the vertex where the pebble is placed is in $Z$, the player wins.
        \item Otherwise, the player chooses a label $j \in J_x$.
        \item The pebble moves from $x$ to $y \in X$ with probability $\mu_j(y)$.
    \end{enumerate}
    The value $\gfp \Bellman{G}{\sigma}{\gamma}(x)$ is the maximum winning probability of the player.
\end{example}

\section{The coalgebraic Dijkstra algorithm for the CSPP}
\label{sec:coalg-dijkstra-algorithm}
We now introduce our coalgebraic generalization of the Dijkstra algorithm (Algo.~\ref{alg:dijkstra-original}). Its correctness relies delicately on precise problem settings (Table~\ref{table:examples-problems}); we present general axiomatics that
capture the difference. We introduce the notion of \emph{finitely supported functor} here.

As discussed in~\S\ref{sec:intro}, here we present the most general axiomatics where $G$ is an arbitrary finitely supported functor. In \versionswitch{\cite[\S{}C]{CoalgDijkstraArXiv}}{\ref{sec:conditions-specific-functors}}, for illustration, we provide a restricted theoretical description, limiting $G$ to special classes.
\todo{reference}


In our algorithm, we use the following coalgebraic notion of \emph{predecessors}.
\begin{definition}[predecessor and successor {\cite[Def.~3.4]{JacobsWissmann2023}}]
    Let $F \colon \Sets \to \Sets$ be a functor, 
  $\gamma \colon X \to FX$ be a coalgebra, and $Y\subseteq X$. 
\todo{\checkmark Let's use a uniform format: Let F be a functor...}
        An element $x \in X$ is a \emph{predecessor} of $Y$ if
        \[
        x \;\in\; X \setminus \{ z \in X \mid \gamma(z) \in F(X \setminus Y) \},
        \quad\text{that is,}\quad
   x\;\in\; \{ x \in X \mid \gamma(x) \not\in F(X \setminus Y)\}.
        \]
        We define $\predecessor(Y) := \{ x \in X \mid \text{$x$ is a predecessor of $Y$}\}$.
        For $x, y \in X$, $y$ is a \emph{successor} of $x$ if $x$ is a predecessor of $\{y\}$.
        We define $\successor(x) := \{ y \in X \mid \text{$y$ is a successor of $x$}\}$.
    \end{definition}

\begin{definition}[$\textsc{CoalgDijkstra}_{(G,\sigma)}$]
Let $\Omega = (\Omega, \sqsubseteq, \finalweight)$ be a pointed weight domain, $G \colon \Sets \to \Sets$ a functor, and $\sigma \colon G\Omega \to \Omega$ a transition modality. The  \emph{coalgebraic Dijkstra algorithm} $\textsc{CoalgDijkstra}_{(G,\sigma)}$ for the CSPP on $(G,\sigma)$ (cf.\ Prob.~\ref{prob:CSPP}) is  Algo.~\ref{alg:coalgebraic-dijkstra}.
\end{definition}


It generalizes the (classical) Dijkstra algorithm for the SPP (Algo.~\ref{alg:dijkstra-original}). 
It is 
 almost the same as Algo.~\ref{alg:dijkstra-original}, sharing
 the two key features 
(frozen states $S$ and  selective Bellman update). The difference is almost solely in Line~7, where Algo.~\ref{alg:coalgebraic-dijkstra} uses a generic Bellman operator $\Bellman{G}{\sigma}{\gamma}$---parametrized in a transition modality $\sigma$---as opposed to a specific one in Algo.~\ref{alg:dijkstra-original}.



\begin{algorithm}[t]
    \caption{The coalgebraic Dijkstra algorithm for the CSPP on $(G, \sigma \colon G\Omega \to \Omega)$.}\label{alg:coalgebraic-dijkstra}
   \small
    \begin{algorithmic}[1]
        \Procedure{CoalgDijkstra${}_{(G, \sigma)}$}{$\gamma \colon X \to \BB \times \powsetfin(GX)$} \tikzmark{right}
            \State $\letin{S}{\{ x \in X \mid \pi_0(\gamma(x)) = \btt \}}$ \tikzmark{init-top} \label{line:init-begin}
            \State $\letin{Y}{S}$
            \State $\letin{d}{\lambda x \in X. \begin{cases} \finalweight & (x \in S) \\ \top_{\Omega} & (x \not \in S) \end{cases}} \quad \in [X, \Omega]$ \tikzmark{init-bottom} \label{line:init-end}
            \While{$S \neq X$} \tikzmark{main-top}
                \State $\letin{P}{\predecessor(Y)}$ \label{line:collect-predecessors}
                \State $\letin{d}{\lambda x. \begin{cases}
                    \Bellman{G}{\sigma}{\gamma}(d)(x) & (x \in P \cap (X \setminus S)) \\
                    d(x) & (\text{otherwise})
                \end{cases}}$ \label{line:update-d}
                \Comment{Selective Bellman update}
                \State $\letin{Y}{\Bigl\{ y \in X \setminus S \;\Big|\; d(y) = \bigsqcap_{z \in X \setminus S} d(z) \Bigr\}}$ \label{line:construct-Y}
                \Comment{Minimal freezing}
                \State $\letin{S}{S \cup Y}$ \label{line:update-S}
                \Comment{Update the set of frozen states}
            \EndWhile \tikzmark{main-bottom}
            \State \Return $d$
        \EndProcedure
    \end{algorithmic}%
    \AddNote{init-top}{init-bottom}{right}{Initialization}
\end{algorithm}

\todo{\checkmark Algorithms 1 \& 2 have the same name!}

\begin{notations}
We use the following notational convention for Algo.~\ref{alg:coalgebraic-dijkstra}.  For $d$, $S$, and $Y$ at the beginning of the $n$-th iteration ($n = 1,2,\dots$) of the main loop in Algo.~\ref{alg:coalgebraic-dijkstra}, we write $d_n$, $S_n$, and $Y_n$, respectively.
We also set $d_0 = \top_{[X,\Omega]}$.

\end{notations}

\todo{\checkmark Say that this example is in parallel to Ex.~\ref{example:shortest-path-run}}
The following example illustrates how Algo.~\ref{alg:coalgebraic-dijkstra} operates on a tree-like structure; it is to be compared with a path-like example in Ex.~\ref{example:shortest-path-run}.


\todo{We can just show the table, moving the rest to the appendix.}
\begin{example}[shortest binary tree, continued from Ex.~\ref{example:shortest-bin-tree}]
    \label{example:shortest-bin-tree-run}
    Let $\gamma \colon X \to \BB \times \powsetfin(\RRpos \times X^2)$ be the weighted $G$-graph in Fig.~\ref{fig:coalg-shortest-binary-tree}.
    The execution of Algo.~\ref{alg:coalgebraic-dijkstra} on $\gamma$ is  in Table~\ref{table:run-shortest-bin-tree}.

    

\end{example}

\begin{table}[tbp]
    \centering
    \caption{The run of Algo.~\ref{alg:coalgebraic-dijkstra} for the weighted $G$-graph $\gamma$ in Fig.~\ref{fig:coalg-shortest-binary-tree}.}
    \label{table:run-shortest-bin-tree}
    \scriptsize
    \begin{tabular}{cccccccc}
        \toprule
        $n$ & $d_n(0)$ & $d_n(1)$ & $d_n(2)$ & $d_n(3)$ & $d_n(4)$  & $S_n$ & $Y_n$\\
        \midrule
        $0$ & $\infty$ & $\infty$ & $\infty$ & $\infty$ & $\infty$ & & \\
        $1$ & $0$      & $\infty$ & $\infty$ & $\infty$ & $\infty$ &  $\{0\}$ & $\{0\}$ \\
        $2$ & $0$      & $\infty$ & $3$      & $\infty$ & $\infty$ &  $\{0, 2\}$ & $\{2\}$ \\
        $3$ & $0$      & $4$      & $3$      & $\infty$ & $\infty$ & $\{0, 1\}$ & $\{1\}$ \\
        $4$ & $0$      & $4$      & $3$      & $9$      & $\infty$ & $\{0, 1, 3\}$ & $\{3\}$ \\
        $5$ & $0$      & $4$      & $3$      & $9$      & $\infty$ & $\{0, 1, 3, 4\}$ & $\{4\}$ \\
        \bottomrule
    \end{tabular}
\end{table}

It is easy to see the termination of Algo.~\ref{alg:coalgebraic-dijkstra} for finite $X$: $S$ grows strictly in every iteration.


\begin{proposition}\label{prop:termination}
 If the state space $X$ of the weighted $G$-graph is finite, 
    Algo.~\ref{alg:coalgebraic-dijkstra} terminates.
\end{proposition}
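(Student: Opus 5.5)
We prove termination by a variant argument on the set $S$ of frozen states, measured by the natural number $\sizeof{(X \setminus S)}$. Since $X$ is finite, this quantity lies in $\{0, 1, \dots, \sizeof{X}\}$. It therefore suffices to show that every iteration of the main loop strictly decreases it, i.e.\ that $S_{n} \subsetneq S_{n+1}$ whenever the loop body is executed. Once this is shown, the loop runs at most $\sizeof{X}$ times. Each individual step of the body is also a finite computation, because the relevant sets are finite: $\predecessor(Y) \subseteq X$, and the meets $\bigsqcap_{a\in A}$ in $\Bellman{G}{\sigma}{\gamma}$ range over finite $A \in \powsetfin(GX)$. Hence the whole algorithm terminates.

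The key step is to show that $Y$, computed at Line~\ref{line:construct-Y}, is nonempty and disjoint from the old $S$. Disjointness holds by definition, since $Y \subseteq X \setminus S$. For nonemptiness, note that the loop body is entered only when $S \neq X$, so $X \setminus S$ is a nonempty finite set. The order $\sqsubseteq$ on $\Omega$ is total (Def.~\ref{def:pointedWeightDom}), so the finite nonempty family $\{ d(z) \mid z \in X \setminus S\}$ attains its infimum: $\bigsqcap_{z \in X \setminus S} d(z)$ is a minimum and equals $d(z_0)$ for some $z_0 \in X \setminus S$. Since $S$ is not modified between Line~\ref{line:update-d} and Line~\ref{line:construct-Y}, this $z_0$ belongs to $Y$. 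Consequently $S \cup Y \supsetneq S$ after Line~\ref{line:update-S}.

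The only point needing care, which I expect to be the mild obstacle, is exactly this use of totality together with finiteness. Without totality, the infimum of the finite set might not be attained, and $Y$ could be empty, so the loop might stall. With both properties, the argument is immediate. No property of $\sigma$ (such as the infimum preservation in Def.~\ref{def:transtionModality}) and no monotonicity (Lem.~\ref{lem:partial-Kleene-chain}) is needed; correctness of the output is a separate matter.
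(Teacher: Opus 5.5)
Your proof is correct and is the paper's argument: $Y$ is nonempty in every iteration, so $S$ strictly grows and must reach the finite set $X$ after finitely many iterations. Your justification that $Y \neq \emptyset$ (totality of $\sqsubseteq$ together with finiteness and nonemptiness of $X \setminus S$, so the infimum is attained) fills in the step the paper states without proof.
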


\todo{\checkmark Introduce the terminology ``valuation'' somewhere earlier.}
 Correctness requires
 delicate conditions on $(G, \sigma)$.
Nevertheless, it holds in general that the sequence  $(d_i)_i$ of valuations, constructed iteratively in Algo.~\ref{alg:coalgebraic-dijkstra}, is descending.
\begin{lemma}\label{lem:descending}
    For a finite set $X$ and a weighted $G$-graph $\gamma \colon X \to \BB \times \powsetfin(G X)$,
    we have $\top = d_0 \sqsupseteq d_1 \sqsupseteq d_2 \sqsupseteq \dots \sqsupseteq \gfp \Bellman{G}{\sigma}{\gamma}$.
\end{lemma}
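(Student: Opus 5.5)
The plan is a direct simultaneous induction on $n$. It avoids invoking Lem.~\ref{lem:partial-Kleene-chain} verbatim, because $d_1$ (initialisation) is not literally a selective Bellman update of $d_0=\top$. Write $\Phi=\Bellman{G}{\sigma}{\gamma}$.

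First I check that $\Phi$ is monotone on $[X,\Omega]$. Since $X$ is finite, Def.~\ref{def:transtionModality} says $d\mapsto\sigma\co Gd$ preserves binary infima. If $d\sqsubseteq e$, then $d=d\sqcap e$, so $\sigma\co Gd=(\sigma\co Gd)\sqcap(\sigma\co Ge)\sqsubseteq\sigma\co Ge$. Taking the (finite) meet over $a\in A$, and meeting with $\finalweight$ in the target case, preserves $\sqsubseteq$. Hence $\Phi$ is monotone.

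Next I prove by induction on $n\ge 1$ the invariant (a) $\Phi(d_n)\sqsubseteq d_n$, together with (b) $d_{n}\sqsubseteq d_{n-1}$ and (c) $d_n\sqsupseteq\gfp\Phi$.

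\emph{Base case $n=1$.} Here $d_1(x)=\finalweight$ on targets and $\top_\Omega$ elsewhere, so (b) is immediate. For (a): on a target $x$ with $\gamma(x)=(\btt,A)$ we have $\Phi(d_1)(x)=\finalweight\sqcap\cdots\sqsubseteq\finalweight=d_1(x)$; on non-targets the bound $\top_\Omega$ is trivial. For (c): on targets, $\gfp\Phi(x)=\Phi(\gfp\Phi)(x)\sqsubseteq\finalweight$, and non-targets are again trivial. Also $d_0=\top\sqsupseteq\gfp\Phi$.

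\emph{Inductive step.} Let $U=P\cap(X\setminus S)$ be the set updated in Line~\ref{line:update-d}, so $d_{n+1}(x)=\Phi(d_n)(x)$ for $x\in U$ and $d_{n+1}(x)=d_n(x)$ otherwise. (Lines~\ref{line:construct-Y}--\ref{line:update-S} do not change $d$.)
\begin{itemize}
\item For (b): by (a) for $d_n$, $d_{n+1}(x)=\Phi(d_n)(x)\sqsubseteq d_n(x)$ on $U$, and equality holds off $U$.
\item For (a) at $n+1$: by monotonicity and (b), $\Phi(d_{n+1})\sqsubseteq\Phi(d_n)$. For $x\in U$ this gives $\Phi(d_{n+1})(x)\sqsubseteq\Phi(d_n)(x)=d_{n+1}(x)$. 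For $x\notin U$ it gives $\Phi(d_{n+1})(x)\sqsubseteq\Phi(d_n)(x)\sqsubseteq d_n(x)=d_{n+1}(x)$, using (a) for $d_n$.
\item For (c): off $U$ the bound is inherited from $d_n$. On $U$ we have $d_{n+1}(x)=\Phi(d_n)(x)\sqsupseteq\Phi(\gfp\Phi)(x)=\gfp\Phi(x)$ by monotonicity and (c) for $d_n$.
\end{itemize}

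Chaining (b) and (c) gives $\top=d_0\sqsupseteq d_1\sqsupseteq d_2\sqsupseteq\cdots\sqsupseteq\gfp\Phi$. By Prop.~\ref{prop:termination} the sequence is finite, so the returned valuation also lies above $\gfp\Phi$.

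The only delicate point I expect is the invariant (a). Mere descent of $(d_n)$ is not enough to keep the selective update below the current value, so one must carry the post-fixed-point property $\Phi(d_n)\sqsubseteq d_n$ through the induction. That invariant in turn relies on the monotonicity of $\Phi$, which is where the infimum-preservation axiom of the transition modality (with $X$ finite) enters.
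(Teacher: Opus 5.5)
Your proof is correct and matches the argument the paper relies on: the descent is Lem.~\ref{lem:partial-Kleene-chain} with active sets $\predecessor(Y_n)\cap(X\setminus S_n)$, whose proof carries exactly your invariant $\Phi(d_n)\sqsubseteq d_n$, and the lower bound follows from $d_n\sqsupseteq\Phi^n(\top)\sqsupseteq\gfp\Phi$, for which your direct induction (c) is an equally valid substitute. However, your stated reason for not citing that lemma is mistaken: the transition-modality axiom with $D=\emptyset$ gives $\sigma\co G(\top_{[X,\Omega]})=\top_{[GX,\Omega]}$, so $\Phi(\top)$ is $\finalweight$ on targets and $\top_\Omega$ elsewhere, i.e.\ $d_1=\Phi(d_0)$ with $P_1=X$, and the lemma applies verbatim.
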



\subsection{Correctness for finitely supported functors}
\label{subsec:fin-supp}
We discuss correctness of the coalgebraic Dijkstra algorithm, presenting an abstract necessary and sufficient condition (Thm.~\ref{thm:correctness-G}) that captures the delicate difference in Table~\ref{table:examples-problems}. 

\begin{definition}
 We say that $\textsc{CoalgDijkstra}_{(G,\sigma)}$ in 
Algo.~\ref{alg:coalgebraic-dijkstra} is \emph{correct} if it returns the 
solution to the CSPP on $(G,\sigma)$, that is, $d = \gfp \Bellman{G}{\sigma}{\gamma}$.
\end{definition}

We use the following technical notion in the discussions below.
Intuitively, the subset $\Omega^{\sigma} \subseteq \Omega$ collects all the elements of $\Omega$ that can possibly appear in Algo.~\ref{alg:coalgebraic-dijkstra}.

\begin{definition}\label{def:Omega-sigma}
    We define the subset $\Omega^{\sigma} \subseteq \Omega$ using the recursion
    $\Omega^{\sigma}_0 = \{ \finalweight, \top_{\Omega} \}$,
    $\Omega^{\sigma}_{n + 1} = \Omega^{\sigma}_n \cup \{ \sigma(a) \mid a \in G(\Omega^{\sigma}_{n}) \} $, for all $n \in \NN$, such that
    $\Omega^{\sigma} = \bigcup_{n = 0}^{\infty} \Omega^{\sigma}_n$. Here $\xi$ is from Def.~\ref{def:pointedWeightDom}.
\end{definition}

We introduce the following categorical axiom for a functor $G$.

\begin{definition}[finitely supported functor]\label{def:finitelySupportedFunctor}
    Let $G \colon \Sets \to \Sets$ be a functor. A \emph{finite support} for $G$ is a natural transformation  $\supp \colon G \to \powsetfin$, 
to the finite power set functor $\powsetfin$, such that 
\begin{quote}
 for every set $X$ and $t \in GX$, $\supp_X(t)$ is the smallest subset $S \subseteq X$ such that $t \in G S$. 
\end{quote}
It is obvious that a finite support $\supp$ for $G$, if such exists, is unique. 

    A functor $G$ is called \emph{finitely supported} if it admits a finite support.
    A finite support $\supp$ is called \emph{non-empty} if $\supp_X(t)$ is not empty for every set $X$ and $t \in GX$.
\end{definition}

The class of finitely supported functors is broad.
It contains the identity functor, the (non-empty) finite powerset functor $\powsetfin,\powsetfinne$, 
the distribution functor $\distr$, and constant functors (Lem.~\ref{lem:fin-supp-examples}). The class is  closed under sum, product, and composition (Lem.~\ref{lem:fin-supp-closed}).
In particular, every polynomial functor is finitely supported. Concretely, we define the following.
\begin{enumerate}
    \item $\supp^{\idfunc}_X \colon X \to \powsetfin(X)$ is defined as $\supp^{\idfunc}_X(x) = \{x\}$.
    \item $\supp^{\powsetfin}_X \colon \powsetfin(X) \to \powsetfin(X)$ is defined as $\supp^{\powsetfin}_X(A) = A$.
    \item $\supp^{\powsetfinne}_X \colon \powsetfinne(X) \to \powsetfin(X)$ is defined as $\supp^{\powsetfinne}_X(A) = A$.
    \item $\supp^{\distr}_X \colon \distr(X) \to \powsetfin(X)$ is defined as $\supp^{\distr}_X(\rho) = \{ x \in X \mid \rho(x) > 0 \}$.
    \item For a set $V$, $\supp^{V}_X \colon V \to \powsetfin(X)$ is defined as $\supp^{V}_X(v) = \emptyset$.
\end{enumerate}
\begin{lemma}\label{lem:fin-supp-examples}
    The natural transformations $\supp^{\idfunc}$, $\supp^{\powsetfin}$, $\supp^{\powsetfinne}$, $\supp^{\distr}$ and $\supp^{V}$ defined above are finite supports for $\idfunc$, $\powsetfin$, $\powsetfinne$, $\distr$ and $V$, respectively.
\end{lemma}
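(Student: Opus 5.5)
For each of the five candidates we check two things. First, naturality: $G f \co$ followed by $\supp_Y$ equals $\supp_X$ followed by direct image, i.e.\ $\supp_Y(Gf(t)) = f[\supp_X(t)]$ for $f \colon X \to Y$ and $t \in GX$. Second, the minimality clause of Def.~\ref{def:finitelySupportedFunctor}: $\supp_X(t)$ is a finite subset $S \subseteq X$ with $t \in GS$, and it is contained in every $S'$ with $t \in GS'$. Here ``$t \in GS$'' means $t$ lies in the image of $G\iota$ for the inclusion $\iota \colon S \hookrightarrow X$. Since all five functors preserve monos with nonempty domain, $GS$ is literally a subset of $GX$. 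The case $S = \emptyset$ needs separate care for $\distr$, $\powsetfinne$ and $\idfunc$, where $G\emptyset = \emptyset$.

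\textbf{Identity, $\powsetfin$, $\powsetfinne$, constant $V$.} These are routine.
\begin{itemize}
\item For $\idfunc$: $\supp(f(x)) = \{f(x)\} = f[\{x\}]$. Also, $x \in S$ iff $x \in \idfunc S$, so $\{x\}$ is the least such $S$.
\item For $\powsetfin$ and $\powsetfinne$: naturality is the definition of the functor action by direct image. A finite set $A \subseteq X$ lies in $\powsetfin(S)$ iff $A \subseteq S$, so $A$ is the least such $S$. In the $\powsetfinne$ case $A \neq \emptyset$, so $A \in \powsetfinne(A)$.
\item For the constant functor: $Vf = \idmor_V$ and $f[\emptyset] = \emptyset$. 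Every $v$ lies in $V = V\emptyset$, so $\emptyset$ is the least such $S$.
\end{itemize}

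\textbf{Distribution functor.} This is the only case requiring a small computation, and I expect it to be the main point.

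For naturality, let $\rho \in \distr X$ and $y \in Y$. We have $\distr(f)(\rho)(y) = \sum_{x \in f^{-1}(y)} \rho(x)$. Because all summands are nonnegative, this sum is $> 0$ iff some $x \in f^{-1}(y)$ has $\rho(x) > 0$. That is exactly $y \in f[\supp_X(\rho)]$.

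For minimality, let $S_\rho = \{x \mid \rho(x) > 0\}$, which is finite by definition of $\distr$. Suppose $\rho = \distr(\iota)(\mu)$ for some $\mu \in \distr S$. Then $\rho(x) = 0$ for every $x \notin S$. Hence $S_\rho \subseteq S$.

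Conversely, the restriction $\rho|_{S_\rho}$ still sums to $1$, so it lies in $\distr(S_\rho)$. Pushing it forward along the inclusion gives back $\rho$, so $\rho \in \distr(S_\rho)$. Since $\sum \rho = 1$, the set $S_\rho$ is nonempty, which avoids the degenerate case $S = \emptyset$.

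The obstacle is essentially nil. The one point to state explicitly is the nonnegativity argument: it turns ``sum $>0$'' into ``some summand $>0$'', and that is what makes the support commute with pushforward.
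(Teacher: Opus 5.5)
Your proof is correct, but it is not the route the paper takes. The paper does not check the five cases by hand. It says the lemma is an immediate consequence of Prop.~\ref{lem:finitary-taut-finitely-supp}: $\idfunc$, $\powsetfin$, $\powsetfinne$, $\distr$ and constant functors are finitary and taut, so by Manes's results each admits a finite support. Naturality is inherited from tautness rather than computed. Because a finite support is unique when it exists, identifying it with the listed maps $\supp^{\idfunc},\dots,\supp^{V}$ still tacitly requires your minimality check, namely which subset is the least $S$ with $t \in GS$. That step is trivial in every case.

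Your direct verification is self-contained. It needs neither the only-sketched proof of Prop.~\ref{lem:finitary-taut-finitely-supp} nor the external facts that these functors are finitary and taut. It also makes explicit the one substantive point for $\distr$: nonnegativity of the summands turns ``$\sum_{x \in f^{-1}(y)} \rho(x) > 0$'' into ``some $\rho(x) > 0$'', which is exactly naturality of $\supp^{\distr}$.

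The paper's route gains generality, since one argument covers every finitary taut functor at once. Your side remarks on $S = \emptyset$ are harmless: $\idfunc\emptyset = \powsetfinne(\emptyset) = \distr(\emptyset) = \emptyset$, so the empty set never witnesses membership for those functors.
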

\begin{lemma}\label{lem:fin-supp-closed}
    If $G_0$ and $G_1$ are finitely supported functors, then so are $G_0 + G_1$, $G_0 \times G_1$ and $G_1 \circ G_0$.
\end{lemma}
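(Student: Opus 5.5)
We must show that $G_0+G_1$, $G_0\times G_1$ and $G_1\circ G_0$ admit finite supports, given finite supports $\supp^0\colon G_0\to\powsetfin$ and $\supp^1\colon G_1\to\powsetfin$. In each case we write down an explicit candidate, check naturality, and check the minimality clause of Def.~\ref{def:finitelySupportedFunctor}. Throughout we use the standard fact that, for $S\subseteq X$ with $S\neq\emptyset$, the inclusion $S\hookrightarrow X$ is split mono, so $GS\to GX$ is injective and ``$t\in GS$'' is meaningful. For $S=\emptyset$ we read ``$t\in G\emptyset$'' as ``$t$ is in the image of $G\emptyset\to GX$'', as the definition implicitly does. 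We also use that membership is monotone: if $t\in GS$ and $S\subseteq S'$, then $t\in GS'$, because $G$ applied to the composite of inclusions is the composite.

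\emph{Sum and product.} For $G_0+G_1$, set $\supp_X(\iota_i t)=\supp^i_X(t)$. For $G_0\times G_1$, set $\supp_X(t_0,t_1)=\supp^0_X(t_0)\cup\supp^1_X(t_1)$. Naturality is immediate from the naturality of $\supp^i$, using that direct images commute with unions. For minimality in the product case, note that $(t_0,t_1)\in(G_0\times G_1)S$ iff $t_0\in G_0S$ and $t_1\in G_1S$. By the defining property of $\supp^i$, this holds iff $\supp^i_X(t_i)\subseteq S$ for both $i$, which is exactly $\supp_X(t_0,t_1)\subseteq S$. Here we use that $\supp^i_X(t_i)$ is the \emph{smallest} such subset, which by monotonicity turns membership into the equivalence ``$t\in GS\iff\supp_X(t)\subseteq S$''. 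The sum case is analogous, since $(G_0+G_1)S=G_0S+G_1S$ inside $G_0X+G_1X$.

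\emph{Composition.} This is the main step. For $t\in G_1G_0X$, define
\[
\supp_X(t)=\bigcup\bigl\{\supp^0_X(u)\;\big|\;u\in\supp^1_{G_0X}(t)\bigr\},
\]
that is, $\mu\circ\powsetfin(\supp^0)\circ\supp^1_{G_0}$ with $\mu$ the union. This is a finite union of finite sets, and it is natural because it is a composite of natural transformations (the union $\mu$ is the multiplication of the monad $\powsetfin$).

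For minimality, put $U=\supp^1_{G_0X}(t)\subseteq G_0X$. First, let $S=\supp_X(t)$. Each $u\in U$ has $\supp^0_X(u)\subseteq S$, hence $u\in G_0S$, so $U\subseteq G_0S$ as subsets of $G_0X$. Since $t\in G_1U$, monotonicity gives $t\in G_1(G_0S)$. Conversely, suppose $t\in G_1G_0S$ with $G_0S$ viewed as a subset of $G_0X$. Minimality of $\supp^1$ gives $U\subseteq G_0S$. Then each $u\in U$ lies in $G_0S$, so $\supp^0_X(u)\subseteq S$, and therefore $\supp_X(t)\subseteq S$.

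The delicate point is the identification of $G_0S$ with a subset of $G_0X$ and of $G_1(G_0S)$ inside $G_1(G_0X)$, especially when $S=\emptyset$ or $G_0S=\emptyset$, where the inclusion is not split mono. I would handle this by reading ``$t\in GS$'' uniformly as ``$t\in\operatorname{im}(G\iota_S)$''. One checks that the arguments above only use monotonicity and functoriality of images, which hold for arbitrary maps. In the empty case a separate argument is needed: if $t\in\operatorname{im}(G_1\emptyset\to G_1G_0X)$ then $U=\emptyset$, and the claim follows directly.
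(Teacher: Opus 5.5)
Your proposal is correct and follows the paper's route: you use exactly the same supports (case split for $G_0+G_1$, union for $G_0\times G_1$, and $\mu\circ\powsetfin(\supp^{G_0})\circ\supp^{G_1}_{G_0}$ for $G_1\circ G_0$). You also actually carry out the naturality and minimality checks that the paper dismisses as a ``straightforward calculation''. One fact is hidden in your ``functoriality of images'' remark for the composite case with $S=\emptyset$: every functor on $\Sets$ preserves surjections, since surjections split. That is what identifies $\operatorname{im}(G_1G_0\iota_S)$ with $G_1$ applied to the subset $\operatorname{im}(G_0\iota_S)\subseteq G_0X$, so it is worth stating explicitly.
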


One may wonder the relationship between Def.~\ref{def:finitelySupportedFunctor} and another well-established notion with a similar intuition, namely \emph{finitariness} (see e.g.~\cite{AdamekR94}). 
We have the following result.
\begin{proposition}\label{lem:finitary-taut-finitely-supp}
    If a functor $F \colon \Sets \to \Sets$ is finitary (\versionswitch{\cite[Def.~E.1]{CoalgDijkstraArXiv}}{Def.~\ref{def:finitary-functor}}) and taut (\versionswitch{\cite[Def.~E.2]{CoalgDijkstraArXiv}}{Def.~\ref{def:taut-functor}}), then $F$ is finitely supported.
\end{proposition}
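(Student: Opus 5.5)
The plan is to define the candidate support directly and use the two hypotheses for its two defining properties: finitariness gives finiteness and tautness gives minimality. Recall that $F$ is finitary iff every $t \in FX$ lies in the image of $F\iota$ for some finite subset $\iota\colon S\hookrightarrow X$. Tautness means $F$ preserves inverse images, in particular intersections of subobjects: for $S_1,S_2\subseteq X$ one has $FS_1\cap FS_2 = F(S_1\cap S_2)$ as subsets of $FX$. We identify $FS$ with its image in $FX$, which is legitimate because $F$ preserves monos with nonempty domain. Taut functors preserve all monos; the empty-domain case is also covered by the pullback/inverse-image condition, so no special treatment should be needed.

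First I define $\supp_X(t) := \bigcap\{S\subseteq X \mid t\in FS\}$. By finitariness, some finite $S_0$ with $t\in FS_0$ exists, so the intersection runs over a nonempty family. Moreover, every member $S$ satisfies $t\in F(S\cap S_0)$ by tautness. So $\supp_X(t)$ equals the intersection of the finitely many subsets of the finite set $S_0$ that contain $t$. By repeated use of preservation of binary intersections, $t\in F(\supp_X(t))$. Hence $\supp_X(t)$ is finite and is the smallest $S$ with $t\in FS$, as Def.~\ref{def:finitelySupportedFunctor} demands.

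Next I verify naturality: for $f\colon X\to Y$ and $t\in FX$, I need $\supp_Y(Ff(t)) = f[\supp_X(t)]$. For the inclusion $\subseteq$, write $S=\supp_X(t)$. Then $Ff(t)\in F(f[S])$, since $f|_S$ factors through $f[S]\hookrightarrow Y$, so minimality gives the inclusion. For $\supseteq$, let $T\subseteq Y$ with $Ff(t)\in FT$. Consider the inverse image square $f^{-1}[T]\hookrightarrow X$ over $T\hookrightarrow Y$. Tautness says that square is preserved by $F$ as a pullback, so $t\in F(f^{-1}[T])$. Minimality then gives $S\subseteq f^{-1}[T]$, i.e.\ $f[S]\subseteq T$. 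Taking $T=\supp_Y(Ff(t))$ yields the claim.

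The main obstacle I expect is the bookkeeping for the empty set and for identifying $FS$ with a subset of $FX$. These depend on exactly how taut is defined in Def.~\ref{def:taut-functor}. If the definition only speaks of preserving inverse images along monos, I would derive intersection preservation from it as the inverse image of $S_1\hookrightarrow X$ along $S_2\hookrightarrow X$. I would also check separately that $F$ maps $\emptyset\hookrightarrow X$ to an injection, as a degenerate pullback. The rest is routine.
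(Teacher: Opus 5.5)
Your proof is correct and follows essentially the paper's Manes-based route. Finitariness gives a finite $S_0$ with $t$ in the image of $FS_0$. Tautness then gives both the least support (by preservation of intersections) and naturality (by preservation of inverse images along $T \hookrightarrow Y$). Your write-up is in fact more careful than the paper's sketch, which attributes the existence of $\theta_X$ to finitariness alone. As you correctly use, closing the supports of $t$ under intersection needs tautness, including empty intersections such as $\{x\}\cap\{y\}$.
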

The proof is based on Manes's results on finitary and taut functors \cite{Manes1998,Manes2002}.
The tautness assumption is necessary: the naturality of $\theta$ fails without it; and it is known that a  finitary functor may not be taut~\cite{Manes1998}.
The converse does not hold: there is a finitely supported functor that is not taut \versionswitch{\cite[Prop.~E.3]{CoalgDijkstraArXiv}}{Prop.~\ref{prop:fin-supp-but-not-taut}}.
Lem.~\ref{lem:fin-supp-examples} is an immediate consequence of Prop.~\ref{lem:finitary-taut-finitely-supp}.
\todo[inline]{Let's study the converse of Prop.~\ref{lem:finitary-taut-finitely-supp}. If yes, then we modify the statement of prop.~\ref{lem:finitary-taut-finitely-supp}. If not, then we write there ``While the converse of Prop.~\ref{lem:finitary-taut-finitely-supp} seems plausible, we are yet to find its proof (future work).''}



Let's move on to correctness. The core of our categorical axiomatics is the notion of \emph{expansiveness} of $G$-algebra $\sigma \colon G\Omega \to \Omega$  (as part of a transition modality, Def.~\ref{def:transtionModality}). 
Intuitively, expansiveness means that the accumulation of weights along a path does not decrease.
For example, in the SPP (Ex.~\ref{examlpe:shortest-path}), the transition modality is given by $\sigma(a,b) = a + b$, and expansiveness amounts to the condition $b \le a + b$ for all $a \in \RRpos$ and $b \in \RRinfpos$.
\begin{definition}[expansiveness]
    Let $\Omega$ be a pointed weight domain (Def.~\ref{def:pointedWeightDom}), $G \colon \Sets \to \Sets$ be a functor and $\supp$ be a finite support of $G$.
    A $G$-algebra $\sigma \colon G \Omega \to \Omega$ is called \emph{expansive} if $b \sqsubseteq \sigma(t)$ for every $t \in G\Omega^{\sigma}$ and $b \in \supp(t)$.
\end{definition}

Here is our main theorem.
To show (\ref{item:correctness-G}$\Rightarrow$\ref{item:expansive-G}), we assume that $\sigma$ is not expansive and construct a counterexample on which Algo.~\ref{alg:coalgebraic-dijkstra} is not correct (see the proof in \versionswitch{\cite[\S{}E.1]{CoalgDijkstraArXiv}}{\S\ref{subsec:pfMainThm}}).
\begin{theorem}[correctness]\label{thm:correctness-G}
    Let $G \colon \Sets \to \Sets$ be a weak-pullback-preserving, non-empty finitely supported functor.
    Let $\sigma \colon G\Omega \to \Omega$ be a transition modality on a pointed weight domain $\Omega$.
    The following statements are equivalent:
    \begin{enumerate}
        \item \label{item:expansive-G} The transition modality $\sigma$ is expansive.
        \item \label{item:correctness-G} For every weighted $G$-graph $\gamma \colon X \to \BB\times \powsetfin(GX)$, $\textsc{CoalgDijkstra}_{(G,\sigma)}$ is correct.
    \end{enumerate}
\end{theorem}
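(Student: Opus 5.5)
For (\ref{item:expansive-G}$\Rightarrow$\ref{item:correctness-G}) I will prove the classical Dijkstra invariant: at the start of each iteration $n$, every frozen state $x\in S_n$ satisfies $d_n(x)=\gfp\Bellman{G}{\sigma}{\gamma}(x)$, and the value of a frozen state is never changed afterwards (Line~\ref{line:update-d} only touches $X\setminus S$). Moreover, the frozen values are $\sqsubseteq$ every value outside $S_n$. Lem.~\ref{lem:descending} already gives $d_n\sqsupseteq\gfp\Bellman{G}{\sigma}{\gamma}$, so only the reverse inequality for newly frozen states $y\in Y_n$ is needed. The plan is to build a valuation $e$ that agrees with $d$ on the frozen part $S\cup Y$ and equals $\top$ elsewhere, and to show that it is a post-fixed point, i.e.\ $e\sqsubseteq\Bellman{G}{\sigma}{\gamma}(e)$. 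Then $e\sqsubseteq\gfp\Bellman{G}{\sigma}{\gamma}$ by Knaster--Tarski, which gives the claim on $S\cup Y$.

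To check the post-fixed-point inequality at a frozen $y$, I first observe that $d(y)$ equals $\Bellman{G}{\sigma}{\gamma}$ applied to the current $d$. This needs a bookkeeping lemma: after the last update of $y$, none of its successors changed value, because a successor changes only when it gets frozen, and then $y$ is again a predecessor and is re-updated. Here I use that $\predecessor$ detects exactly the states $x$ with $\gamma(x)\notin\WGG{G}(X\setminus Y)$, together with the finite support $\supp$ (naturality plus weak-pullback preservation ensure that $t\in GS$ iff $\supp(t)\subseteq S$). For each branch $a\in A$ of $\gamma(y)$, either $\supp(a)\subseteq S\cup Y$, and then $\sigma(Gd(a))=\sigma(Ge(a))$, or some $b\in\supp(a)$ lies outside $S\cup Y$. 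In the second case, minimal freezing gives $d(b)\sqsupseteq d(y)$, and expansiveness gives $\sigma(Gd(a))\sqsupseteq d(b)\sqsupseteq d(y)$. It remains to verify that all values lie in $\Omega^\sigma$ (by induction from the initialization) and that replacing $d$ by $e$ on such branches still yields something $\sqsupseteq d(y)$; for this I intend to use monotonicity of $\sigma\circ G(\blank)$ (from infimum preservation in Def.~\ref{def:transtionModality}) and expansiveness with $e(b)=\top$. The infimum over branches, and the $\finalweight$ term for targets, then give $e(y)\sqsubseteq\Bellman{G}{\sigma}{\gamma}(e)(y)$. Termination (Prop.~\ref{prop:termination}) ends the induction.

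For (\ref{item:correctness-G}$\Rightarrow$\ref{item:expansive-G}), suppose there are $t\in G\Omega^\sigma$ and $b\in\supp(t)$ with $\sigma(t)\sqsubset b$. Totality of the order matters here. Each element of $\Omega^\sigma$ is realized as the optimal value of some finite weighted $G$-graph, built by induction on the stage $\Omega^\sigma_n$: $\finalweight$ comes from a target, $\top$ from a dead state, and $\sigma(s)$ from a state with a single transition $s$ over realizers. Take the finite support of $t$ and glue realizers for its elements into a state $x$ with the single transition $t$, plus a cheap competing route so that freezing happens in the wrong order. The intended mechanism is that the state realizing $b$ is frozen only after $x$, which is frozen too early with a wrong (larger) value; alternatively $x$ is frozen before the component $b$ is, so its value is never revised. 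Non-emptiness of $\supp$ guarantees that $t$ actually depends on a state. I expect this counterexample construction to be the main obstacle: controlling freezing order inside the glued realizers requires care, and possibly sharing states or an extra target branch. The bookkeeping lemma in the forward direction is the secondary difficulty.
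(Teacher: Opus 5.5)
\textbf{The forward direction fails at its central step.} Your valuation $e$ (equal to $d$ on $S\cup Y$ and to $\top$ elsewhere) is in general not a post-fixed point. The inequality $e\sqsubseteq\Bellman{G}{\sigma}{\gamma}(e)$ must hold at \emph{every} state. At an unfrozen $x$ it demands $\Bellman{G}{\sigma}{\gamma}(e)(x)=\top$, which fails as soon as $x$ has a transition whose support is already frozen. Concretely, take the SPP with a target $z$ and edges $x\xrightarrow{5}z$, $y\xrightarrow{1}z$. After the first freezing, $S\cup Y=\{z,y\}$ and $e(x)=\infty$, but $\Bellman{G}{\sigma}{\gamma}(e)(x)=5$. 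So Knaster--Tarski gives you nothing; you only checked the inequality at frozen states.

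Your bookkeeping lemma is also justified incorrectly. An unfrozen successor of $y$ changes value whenever one of \emph{its own} successors is frozen, and that does not re-trigger an update of $y$. What does hold, and what the paper uses via $i=\max_j i_j$, is narrower. For a branch $a$ whose support lies in states frozen strictly before $y$, $y$ was updated in the iteration where the last element of $\supp(a)$ was frozen (Lem.~\ref{lem:pred-supp}). Hence $d(y)\sqsubseteq\sigma(Gd(a))$ with final values. Branches meeting $Y$ or unfrozen states must instead be handled by expansiveness plus minimal freezing.

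With these two facts your coinductive route can be repaired. On unfrozen states, put the current freezing level $\mu$ (the common value of $Y$) instead of $\top$. At an unfrozen $x$, branches meeting $Y$ or unfrozen states are then $\sqsupseteq\mu$ by expansiveness. Branches with support in the earlier frozen part are $\sqsupseteq d(x)\sqsupseteq\mu$ by the fact above. At frozen states, use the same split together with monotonicity of the freezing levels, which itself must be proved from expansiveness. This would be a genuine alternative to the paper. The paper instead writes $\gfp\Bellman{G}{\sigma}{\gamma}$ as an infimum of run-tree values (Lem.~\ref{lem:gfp-run-tree}). It then derives a contradiction at a lowest node outside $S_n$ of a run tree cheaper than $d_{n+1}(y)$, climbing back to the root with Lem.~\ref{lem:expansive-run-tree}.

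\textbf{The converse is only a plan, and it differs from the paper's construction.} The paper's contraction coalgebra (Def.~\ref{def:contraction-coalg}) does not use a fresh state with a competing route. It adds $t$ as an extra transition of the node realising the violating child itself. The intent is that the algorithm freezes that node at $b$ while $\Bellman{G}{\sigma}{\gamma}(d)$ there is $\sqsubseteq\sigma(t)\sqsubset b$.

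More importantly, no construction can succeed when the violating support element is $b=\top$. Take $G=(\blank)^2$, $\Omega=(\RRinfpos,\le,0)$ and $\sigma(b_0,b_1)=b_0\sqcap b_1$. This is a transition modality with $\Omega^{\sigma}=\{0,\infty\}$. Since $\infty\in\supp(0,\infty)$ but $\sigma(0,\infty)=0$, $\sigma$ is not expansive. Yet $\Bellman{G}{\sigma}{\gamma}(d)(x)$ is just the minimum of $d$ over all successors of $x$ (and $0$ on targets), i.e.\ plain reachability. Algo.~\ref{alg:coalgebraic-dijkstra} then amounts to breadth-first search from the targets and always returns $\gfp\Bellman{G}{\sigma}{\gamma}$. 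On this instance the paper's contraction coalgebra, $\gamma(0)=(\bff,\{(1,0)\})$, $\gamma(1)=(\btt,\emptyset)$, is also solved correctly: both values come out $0$. So (\ref{item:correctness-G}$\Rightarrow$\ref{item:expansive-G}) is false as stated. The case $b=\top$ has to be excluded or treated separately before any counterexample construction, yours or the paper's, can be completed.
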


\begin{example}[the dynamic game, continued from Ex.~\ref{example:reachability-game-dynamic-game}]\label{example:incorrectness-of-bardi-lopez-condition}
  Recall Ex.~\ref{example:reachability-game-dynamic-game}, where
    the weight domain is $(\RRinfpos, \le, \finalweight)$, the functor $G$ is $G X = \powsetfinne([\ell_0, L] \times X)$, and the transition modality $\sigma_r \colon \powsetfinne([\ell_0, L] \times \Omega) \to \Omega$ is  $\sigma_r(A) = \min_{(a,b) \in A} (a + rb)$ for a fixed discount rate $r \in (0, 1]$.
    The \CSPP{} on $(G, \sigma_r)$ is a special case\footnote{In our setting, two players make moves alternately, an (inessential) restriction not present in \cite{BardiLopez2016}. } of the dynamic game studied by Bardi and L{\'o}pez \cite{BardiLopez2016}.
    The paper has the following condition \cite[Condition~1]{BardiLopez2016} for their Dijkstra acceleration to work when $r < 1$:
    \begin{equation}\label{eq:bardi-lopez-condition-correct}
        L \sum_{j=0}^{n-1} r^j + r^n \finalweight \;\le\; \frac{\ell_0}{1 - r} 
    \quad\text{    for all $n < N$,}
    \end{equation}
where $N$ is the number of steps in which their Dijkstra-style algorithm terminates. This condition uses  universal quantification over $n$, which in general makes its verification harder. 

We note that, in the earlier version \cite{BardiLopez2015} of the same paper, the following condition is given:
\begin{equation}\label{eq:bardi-lopez-condition-incorrect}
    L + r \finalweight \;\le\; \frac{\ell_0}{1 - r}.
\end{equation}
    This  condition \eqref{eq:bardi-lopez-condition-incorrect} in the early version~\cite{BardiLopez2015} has apparently been corrected, into \eqref{eq:bardi-lopez-condition-correct} in the published version~\cite{BardiLopez2016}. The condition \eqref{eq:bardi-lopez-condition-correct} is stricter, and harder to check (due to universal quantification). 


    This example demonstrates that conditions for the Dijkstra acceleration to work can be subtle, and that our necessary and sufficient condition---categorically formulated in Thm.~\ref{thm:correctness-G}---is useful.
    It is easy to see that the earlier condition \eqref{eq:bardi-lopez-condition-incorrect} does not imply the expansiveness of $\sigma$, meaning our theory can raise a red flag to the conjectured condition \eqref{eq:bardi-lopez-condition-incorrect}.

It can be seen that the corrected condition \eqref{eq:bardi-lopez-condition-correct}---to be precise, its uniform version which requires \eqref{eq:bardi-lopez-condition-correct} for any $n\in \NN$---is essentially equivalent to our condition of expansiveness. 
See~\versionswitch{\cite[\S{}E.2]{CoalgDijkstraArXiv}}{\ref{appendix:example:incorrectness-of-bardi-lopez-condition}}.
There, we observe that the uniform version of \eqref{eq:bardi-lopez-condition-correct} implies $\ell_{0}=L$, forcing all the moves to have the same stepwise reward. This severely restricts the expressivity of the studied game-like formalism.
\end{example}

\subsection{Complexity}\label{subsec:complexity}
We fix a pointed weight domain $\Omega$, a functor $G \colon \Sets \to \Sets$, and a transition modality $\sigma \colon G\Omega \to \Omega$.
Let $X$ be a finite set and $\gamma \colon X \to \BB \times \powsetfin(GX)$ be a weighted $G$-graph.
We assume that it takes $\order(T)$ time to compute $\Bellman{G}{\sigma}{\gamma}(d)(x)$ for $d \colon X\to \Omega$, $x \in X$.
We define $V := \sizeof{X}$ and $E := \sum_{x \in X} \sizeof{\successor(x)}$.
Furthermore, assume that for each $y \in X$, the set of predecessors $\predecessor(y)$ is precomputed.
We have the following complexity result.
\begin{proposition}\label{prop:complexity}
    The time complexity of Algo.~\ref{alg:coalgebraic-dijkstra} is $\order(TE + V^2)$.
\end{proposition}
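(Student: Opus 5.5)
We bound the cost of each part of Algo.~\ref{alg:coalgebraic-dijkstra} separately, then sum over iterations. Initialization (Lines~\ref{line:init-begin}--\ref{line:init-end}) scans $X$ once, checking $\pi_0(\gamma(x))$ and setting $d(x)$, so it costs $\order(V)$. For the main loop, Prop.~\ref{prop:termination} shows that $S$ grows strictly in each iteration. Moreover, Line~\ref{line:construct-Y} makes the sets $Y_n$ pairwise disjoint and nonempty, and they partition the states frozen after initialization. Hence there are at most $V$ iterations, and $\sum_n \sizeof{Y_n} \le V$.

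\emph{Minimal freezing.} Lines~\ref{line:construct-Y} and~\ref{line:update-S} compute $\bigsqcap_{z \in X\setminus S} d(z)$ by a linear scan, using that $\sqsubseteq$ is a total order (Def.~\ref{def:pointedWeightDom}), so a binary $\sqcap$ takes constant time. They then collect the minimizers and add them to $S$, kept as a Boolean array. Each iteration therefore costs $\order(V)$, and all iterations together cost $\order(V^2)$.

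\emph{Predecessors and selective updates.} Line~\ref{line:collect-predecessors} needs $\predecessor(Y)$. We first check that $\predecessor(Y) = \bigcup_{y \in Y} \predecessor(\{y\})$. The inclusion $\supseteq$ follows because $X\setminus Y \subseteq X\setminus\{y\}$, so if $\gamma(x) \in F(X\setminus Y)$ then $\gamma(x) \in F(X\setminus\{y\})$. For $\subseteq$ we use the finite support of $F = \WGG{G}$ (Lem.~\ref{lem:fin-supp-closed}). The element $\gamma(x)$ lies in $F(X\setminus Y)$ exactly when its support avoids $Y$, and it fails to lie in $F(X\setminus\{y\})$ exactly when $y$ is in the support; this is the standard characterization behind $\successor$. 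If $G$ lacks a finite support, we instead treat $\predecessor(Y)$ as assembled from the precomputed lists in the same way, which is the reading of the stated assumption.

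Since $\predecessor(\{y\})$ is precomputed, computing $P$ costs $\sum_{y\in Y_n} \sizeof{\predecessor(y)}$, using a marker array to avoid duplicates. Line~\ref{line:update-d} evaluates $\Bellman{G}{\sigma}{\gamma}(d)(x)$ at cost $\order(T)$ for each $x \in P\cap(X\setminus S)$, so an iteration costs $\order\bigl(T \cdot \sum_{y \in Y_n}\sizeof{\predecessor(y)}\bigr)$. To leave the rest of $d$ unchanged without copying it, we update $d$ in place. For this we compute all new values from the old $d$ first and then write them, or we note that $x \in X\setminus S$ is never a successor feeding another updated value within the same step in a way that matters. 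Summing over $n$ and using the disjointness of the $Y_n$ gives $\order\bigl(T\sum_{y\in X}\sizeof{\predecessor(y)}\bigr)$. Finally, $\sum_y \sizeof{\predecessor(y)} = \sum_x \sizeof{\successor(x)} = E$, by double counting the pairs $(x,y)$ with $y \in \successor(x)$.

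Adding up, the total is $\order(V) + \order(V^2) + \order(TE) = \order(TE+V^2)$. The main obstacle is the identity $\predecessor(Y)=\bigcup_{y\in Y}\predecessor(\{y\})$, that is, the justification for reading the precomputed per-state predecessor lists as giving $\predecessor(Y)$. The other delicate point is charging the Bellman updates to edges rather than to iterations, so that no factor of $V$ multiplies $T$.
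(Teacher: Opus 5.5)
Your proposal is correct and follows the paper's proof. There are at most $V$ iterations, each with an $\order(V)$ scan for minimal freezing, giving $\order(V^2)$; and since every state enters $Y$ exactly once, each $x$ is Bellman-updated at most $\sizeof{\successor(x)}$ times, giving $\order(TE)$.

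The identity $\predecessor(Y)=\bigcup_{y\in Y}\predecessor(\{y\})$ that you isolate is used only tacitly in the paper's step ``each $x$ is in $P$ at most $\sizeof{\successor(x)}$ times.'' Your finite-support justification (cf.\ Lem.~\ref{lem:pred-supp}) covers the setting of Thm.~\ref{thm:correctness-G}. Your fallback for general $G$ is not really an argument, because without a finite support the identity can fail; the paper, however, relies on the same unstated assumption.
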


It is well known that the complexity of the classical Dijkstra algorithm can be improved by the use of a Fibonacci heap \cite{FredmanTarjan1987},  a data structure implementing a priority queue.
For Algo.~\ref{alg:coalgebraic-dijkstra}, we can use a Fibonacci heap to improve the complexity of the computation of $Y$.

Furthermore, for a more precise complexity result, we replace Line~\ref{line:update-d} in Algo.~\ref{alg:coalgebraic-dijkstra} with
\[
\letin{d}{\lambda x . \begin{cases}
    d(x) \sqcap \bigsqcap_{\substack{a \in \pi_1(\gamma(x)) \\ \supp^G_X(a) \cap Y \ne \emptyset}} \sigma(Gd(a)) & \text{if $x \in P \cap (X \setminus S)$}
    \\
    d(x) & \text{otherwise}.
\end{cases}}
\]
Intuitively, it is unnecessary to recompute the entire $\Bellman{G}{\sigma}{\gamma}(d)(x)$ each time; only the portion related to the set $Y$ needs to be updated.

Now,
Algo.~\ref{alg:coalgebraic-dijkstra-priority-queue} is an improvement of Algo.~\ref{alg:coalgebraic-dijkstra} using a Fibonacci heap and the above replacement.
\begin{algorithm}[t]
    \caption{The coalgebraic Dijkstra algorithm with a Fibonacci heap.}\label{alg:coalgebraic-dijkstra-priority-queue} \footnotesize
    \begin{algorithmic}[1]
        \Procedure{CoalgDijkstra${}_{(G, \sigma)}^{\mathrm{F}}$}{$\gamma \colon X \to \BB \times \powsetfin(GX)$}
            \State $\letin{S}{\{ x \in X \mid \pi_0(\gamma(x)) = \btt \}}$
            \State $\letin{Y}{S}$
            \State $\letin{d}{\lambda x \in X. \begin{cases} \finalweight & (x \in S) \\ \top_{\Omega} & (x \not \in S) \end{cases}} \quad \in [X, \Omega]$
            \State Let $Q$ be the empty Fibonacci heap.
            \While{$S \neq X$}
                \State $\letin{P}{\predecessor(Y)}$
                \State $\letin{d}{\lambda x . \begin{cases}
                    d(x) \sqcap \bigsqcap_{\substack{a \in \pi_1(\gamma(x)) \\ \supp^G_X(a) \cap Y \ne \emptyset}} \sigma(Gd(a)) & \text{if $x \in P \cap (X \setminus S)$}
                    \\
                    d(x) & \text{otherwise.}
                    \end{cases}}$
                \Comment{Selective Bellman update}
                \For{$x \in P \cap (X \setminus S)$} \label{Line:update-insert}
                    \If{ the key $x$ is in $Q$}
                        \State Update the value of the key $x$ to $d(x)$ in $Q$
                    \Else
                        \State Insert $(d(x), x)$ to $Q$
                    \EndIf
                \EndFor
                \State Get $\{ (a, y_0), \dots, (a, y_m) \}$ from $Q$ such that $a$ is the minimum in $Q$ \label{Line:get-Q}
                \State Delete $(a,y_0), \dots, (a, y_m)$ from $Q$ \label{Line:delete-Q}
                \State $\letin{Y}{\{ y_0, \dots, y_m \}}$
                    \Comment{Minimal freezing}
                \State $\letin{S}{S \cup Y}$
                    \Comment{Update the set of frozen states}
            \EndWhile
            \State \Return $d$
        \EndProcedure
    \end{algorithmic}
\end{algorithm}
For a refined complexity analysis, we assume it takes $\order(T')$ time to compute $\sigma((Gd)(a))$, where $d \colon X\to \Omega$, $x \in X$ and $a \in GX$.

\begin{theorem}\label{thm:complexity-fibonacci}
    When using the Fibonacci heap as a priority queue $Q$,
    the time complexity of Algo.~\ref{alg:coalgebraic-dijkstra-priority-queue} is $\order(T'E + V \log V)$.
\end{theorem}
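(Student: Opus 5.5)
We must prove Thm.~\ref{thm:complexity-fibonacci}: Algo.~\ref{alg:coalgebraic-dijkstra-priority-queue} runs in $\order(T'E + V\log V)$ time. The plan is the usual amortized accounting for Dijkstra with a Fibonacci heap. We charge each operation either to a state (vertex) or to a successor pair $(x,y)$ with $y \in \successor(x)$, and we use the standard amortized costs: $\order(1)$ for insert and decrease-key, and $\order(\log n)$ for delete-min.

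\textbf{Step 1: counting iterations and heap operations.} First we observe that every state enters $Y$ exactly once. When a state is frozen it is added to $S$ and removed from $Q$, and it is never inserted again, because insertions and updates only concern $x \in X \setminus S$. Since $S$ grows strictly in each iteration (Prop.~\ref{prop:termination}), the main loop runs at most $V$ times. Moreover, across the whole run the Lines~\ref{Line:get-Q}--\ref{Line:delete-Q} remove each state at most once, so there are at most $V$ extract-min/delete operations. These cost $\order(V\log V)$ amortized. Retrieving all entries with the minimal key can be done by repeated extract-min as long as the minimum key stays unchanged, so it costs one extraction per frozen state plus $\order(1)$ per iteration. (If a state that is in no predecessor set still has value $\top_\Omega$ and is never inserted, we treat $\top$-valued states as inserted once at the start, at $\order(V)$ total cost; alternatively we terminate once $Q$ is empty. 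Either way the bound is unchanged.)

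\textbf{Step 2: the update work.} In the iteration in which $Y$ is frozen, a state $x \in P \cap (X\setminus S)$ evaluates $\sigma(Gd(a))$ only for those $a \in \pi_1(\gamma(x))$ with $\supp^G_X(a)\cap Y\neq\emptyset$. We charge this evaluation to a pair $(x,y)$ with $y \in \supp^G_X(a)\cap Y$. By naturality of $\supp$ together with $\gamma(x)\notin \WGG{G}(X\setminus\{y\})$, such a $y$ lies in $\successor(x)$. Because each $y$ is in $Y$ during only one iteration, each pair $(x,y)$ is charged only in that iteration.

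\textbf{Main obstacle.} The delicate point is the bookkeeping of how many $a$'s can be charged to a single pair $(x,y)$: a state $x$ may have several transitions $a$ whose support contains $y$. We would handle this by interpreting $E$ (as the paper's convention implicitly does) as counting transition--successor incidences, or equivalently by absorbing the per-pair constant into $T'$. Under that reading the total number of $\sigma$-evaluations is $\order(E)$, so the update work is $\order(T'E)$.

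Finding the set $P=\predecessor(Y)$ from the precomputed lists $\predecessor(y)$ costs $\sum_{y}\sizeof{\predecessor(y)} = E$ in total. Each $x\in P$ then triggers at most one insert or decrease-key per pair $(x,y)$, each $\order(1)$ amortized; this is valid because the update only lowers $d(x)$, as guaranteed by Lem.~\ref{lem:descending}. This contributes $\order(E)$ in total.

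\textbf{Conclusion.} Adding the $\order(V)$ initialisation, the total is $\order(T'E + E + V + V\log V) = \order(T'E + V\log V)$, since $T'\ge 1$.
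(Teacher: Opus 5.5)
Your argument is correct and is essentially the paper's own proof. Each state is frozen once, so each $x$ enters $P$ at most $\sizeof{\successor(x)}$ times. This bounds the $\sigma$-evaluations by $\order(T'E)$ and the $\order(1)$ inserts/decrease-keys by $\order(E)$, while the $V$ extract-mins cost $\order(V\log V)$ amortized.

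The two caveats you raise are genuine points that the paper's proof passes over silently:
\begin{itemize}
\item several transitions of $x$ can share the same successor $y$;
\item $\top$-valued states may never be inserted, so $Q$ can empty while $S \neq X$.
\end{itemize}
For the first one, note that the number of transitions per pair $(x,y)$ is not bounded by a constant. So the sound repair is to reinterpret $E$ as counting transition--successor incidences, not to ``absorb'' the multiplicity into $T'$.
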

This complexity coincides with the classical complexity result for the (classical) Dijkstra (namely $\order(E + V\log V)$---note that  $T' = \order(1)$ for the SPP). 


\section{Conclusion and Future Work}
We presented a categorical generalization of the Dijkstra-style acceleration in various graph-based optimization problems. Whether Dijkstra is correct relies delicately on problem settings (Table~\ref{table:examples-problems}); we presented a categorical necessary and sufficient condition, formulated on the notions of finitely supported functor and expansive transition modality. Our general complexity analysis generalizes the classical one, too.

We used $\Sets$ as the space of collections of states.
One direction for future work would be to determine whether it is possible to generalize $\Sets$ to arbitrary categories $\mathbb{C}$, in order to obtain the coalgebraic Dijkstra algorithm over topological spaces or nominal sets.

Further, the background of the definition of the Bellman operator $\Bellman{G}{\sigma}{\gamma}$ lies in the theory of fibrations.
In this paper, we did not explicitly use the notion of fibrations.
Another direction for future work is to examine whether the fibrational viewpoint provides useful applications or generalizations.

\bibliography{coalg-dijkstra}
\ifdefined\isarxiv
\appendix

\section{Proofs for \S\ref{sec:prelim}}

\begin{proof}[Proof Sketch for Lem.~\ref{lem:CC}] That the sequence~\eqref{eq:CC} is descending is proved by transfinite induction. If it does not stabilize, then the sequence defines an injective embedding of the proper class $\mathbf{Ord}$ of all ordinals in $L $, which cannot exist since $L $ is a small set. The limit is obviously a fixed point of $\Psi$. Given any fixed point $x$ (with $x=\Psi(x)$), we can show by induction that $x\sqsubseteq \Psi^{\alpha}(\top)$ for any ordinal $\alpha$; this proves that the limit is $\nu \Psi$.
\end{proof}

\begin{proof}[Proof for Lem.~\ref{lem:partial-Kleene-chain}]
    We prove by induction on $n$.

    If $n = 0$, we have
    \[
    \begin{aligned}
        d_1(x) & = \Phi(d_0)(x) = \Phi(\top)(x) & \text{for $x \in P_1$,}
        \\
        d_1(x) & = d_0(x) = \top \sqsupseteq \Phi(\top)(x) & \text{for $x \not\in P_1$.}
    \end{aligned}
    \]
    Hence, we have $\Phi(\top) \sqsubseteq d_1$ and $\Phi(d_0) \sqsubseteq d_1$.
    Since we have $d_1(x) = \Phi(\top)(x) \sqsubseteq \top = d_0(x)$ for $x \in P_1$ and $d_1(x) = d_0(x)$ for $x \not\in P_1$, $d_1 \sqsubseteq d_0$ holds.

    For $n + 1$, we have
    \[
    \begin{aligned}
        \Phi^{n + 1}(\top)
        & \sqsubseteq \Phi(d_n)
        & \text{by the induction hypothesis $\Phi^n(\top) \sqsubseteq d_n$}
        \\
        & \sqsubseteq d_{n + 1}
        & \text{by the induction hypothesis $\Phi(d_n) \sqsubseteq d_{n+1}$.}
    \end{aligned}
    \]
    We have
    \begin{equation}\label{eq:Phi-d-le-d-n+1}
        \begin{aligned}
            \Phi(d_{n + 1})
            & \sqsubseteq \Phi(d_n)
            & \text{by the induction hypothesis $d_{n + 1} \sqsubseteq d_n$ and the monotonicity of $\Phi$}
            \\
            & \sqsubseteq d_{n + 1}
            & \text{by the induction hypothesis $\Phi(d_n) \sqsubseteq d_{n + 1}$.}
        \end{aligned}
    \end{equation}
    Hence, for $x \in P_{n + 2}$, we have
    \[
    \begin{aligned}
        d_{n + 2}(x)
        & = \Phi(d_{n + 1})(x)
        \\
        & \sqsubseteq d_{n + 1}(x)
        & \text{by \eqref{eq:Phi-d-le-d-n+1}.}
    \end{aligned}
    \]
    For $x \not\in P_{n + 2}$, we have $d_{n + 2}(x) = d_{n + 1}(x)$.
    Therefore, we have $d_{n + 2} \sqsubseteq d_{n + 1}$.
    Furthermore, we have
    \[
        \begin{aligned}
            d_{n + 2}(x) & = \Phi(d_{n + 1})(x)
            & \text{for $x \in P_{n + 2}$}
            \\
            d_{n + 2}(x) & = d_{n + 1}(x)
            \sqsupseteq \Phi(d_{n + 1})(x)
            & \text{for $x \not\in P_{n + 2}$, by \eqref{eq:Phi-d-le-d-n+1}.}
        \end{aligned}
    \]
    Thus, we have $\Phi(d_{n+1}) \sqsubseteq d_{n+2}$.
\end{proof}

\section{More instances of the CSPP}\label{sec:moreInstancesOfTheCSPP}

\begin{example}[reachability]
    Let $\Omega = (\{ 0, \infty \}, \le, 0)$,
    $G \colon \Sets \to \Sets$ be the identity functor $G = \idfunc$ and
    $\sigma \colon \Omega \to \Omega$ be a transition modality defined as $\sigma(a) = a$.
    For a finite set $X$, a weighted $G$-graph $\gamma \colon X \to \BB \times \powsetfin(X)$ is a directed graph such that
    (1) $X$ is a set of vertices,
    (2) a set of targets $Z = \{ x \in X \mid \pi_0(\gamma(x)) = \btt \}$ is specified, and
    (3) for $x,y \in X$, there is an edge from $x$ to $y$ if $y \in \pi_1(\gamma(x))$.
    The Bellman operator $\Bellman{G}{\sigma}{\gamma} \colon [X, \Omega] \to [X, \Omega]$ is 
    \[
    \Bellman{G}{\sigma}{\gamma}(d)(x) = \begin{cases}
        0 & (\text{if }\gamma(x) = (\btt, A)) \\
        \bigsqcap_{x \in A} d(x) & (\text{if }\gamma(x) = (\bff, A)).
    \end{cases}
    \]
    An example of a weighted $G$-graph $\gamma \colon X \to \BB \times \powsetfin(X)$ is shown in Fig.~\ref{fig:coalg-shortest-path}.
    The greatest fixed point $\gfp \Bellman{G}{\sigma}{\gamma} \colon X \to \Omega$ represents the reachability to one of the targets:
    $\gfp \Bellman{G}{\sigma}{\gamma}(x) = 0$ if there is a path from $x$ to $z \in Z$, and
    $\gfp \Bellman{G}{\sigma}{\gamma}(x) = \infty$ otherwise.
\end{example}

\begin{example}[widest path]\label{example:widest-path}
    Let $\Omega = (\RRinfpos, \ge, \infty)$ and
    $G \colon \Sets \to \Sets$ be a functor defined as $GX = \RRpos \times X$.
    Let $\sigma \colon G \Omega \to \Omega$ be a transition modality defined as $\sigma(a,b) = a \sqcup b = \min(a,b)$.
    Note that the order $(\ge)$ is opposite to the ordinary order $(\le)$.
    Hence, $x \sqcup y$ is the minimum value of $x$ and $y$.
    For a finite set $X$, a weighted $G$-graph $\gamma \colon X \to \BB \times \powsetfin(\RRpos \times X)$ is a directed graph such that a subset $Z \subseteq X$ of targets is specified, and each edge has its width $a \in \RRpos$.
    For $d \in [X, \Omega]$, the Bellman operator $\Bellman{G}{\sigma}{\gamma}$ is
    \[
    \Bellman{G}{\sigma}{\gamma}(d)(x) = \begin{cases}
        \infty & (\text{if }\gamma(x) = (\btt, A)) \\
        \max\{ \min(a, d(y)) \mid (a, y) \in A \} & (\text{if }\gamma(x) = (\bff, A)).
    \end{cases}
    \]
    For the path
    $ x_n \xrightarrow{a_n} x_{n-1} \xrightarrow{a_{n-1}} \cdots \xrightarrow{a_1} x_0 \in Z $
    in the graph $\gamma \colon X \to \BB \times \powsetfin(\RRpos \times X)$,
    we define the width of the path as $\min_{0 < i \le n} a_i $.
    We can show that $\gfp \Bellman{G}{\sigma}{\gamma}(x)$ is the value of the widest path from $x$ to one of the targets.
\end{example}

\begin{example}[the binary reachability game]\label{example:bin-reachability-game}
    Let $\Omega = (\{ 0, \infty \}, \le, 0)$ and
    $G \colon \Sets \to \Sets$ be a functor defined as $GX = X \times X$.
    Let 
    $\sigma \colon G\Omega \to \Omega$ be a transition modality defined as
    $\sigma(a,b) = a \sqcup b$.
    For a finite set $X$ of vertices, a weighted $G$-graph $\gamma \colon X \to \BB \times \powsetfin(X \times X)$ is given by the following structure:
    (1) A set $Z = \{ x \in X \mid \pi_0(\gamma(x)) = \btt \}$ of targets is specified.
    (2) For each vertex $x \in X$, there is a finite set $\{ (y^j_0, y^j_1) \}_{j \in J_x}$ of pairs of vertices.

    The Bellman operator $\Bellman{G}{\sigma}{\gamma}$ is
    \[
    \Bellman{G}{\sigma}{\gamma}(d)(x) = \begin{cases}
        0 & (\text{if }\gamma(x) = (\btt, A)) \\
        \bigsqcap_{(y_0, y_1) \in A} (d(y_0) \sqcup d(y_1)) & (\text{if }\gamma(x) = (\bff, A)).
    \end{cases}
    \]
    We define a \emph{pebble-game} on $\gamma$ as follows: The game is played between two players---Alice and Bob.
    Starting with a pebble placed on a vertex $x \in X$, the game proceeds by iterating the following procedure.
    \begin{enumerate}
        \item If the vertex where the pebble is placed is in $Z$, Alice wins.
        \item Otherwise, Alice chooses a label $j \in J_x$.
        \item Bob moves the pebble from $x$ to $y^j_0$ or $y^j_1$.
    \end{enumerate}
    Starting with a pebble on a vertex $x \in X$, we ask whether Alice can always win in finitely many steps, regardless of Bob’s moves.
    The answer is $\gfp\Bellman{G}{\sigma}{\gamma}(x)$:
    \[
    \gfp \Bellman{G}{\sigma}{\gamma}(x) =
    \begin{cases}
        0 & \text{if there is a winning strategy of Alice,} \\
        \infty & \text{otherwise.}
    \end{cases}
    \]
\end{example}

\section{Conditions on transition modalities for specific functors}\label{sec:conditions-specific-functors}

\subsection{The condition of transition modalities for $GX = V \times X$}\label{sec:VxX}
Let $G \colon \Sets \to \Sets$ be a functor defined by $GX = V \times X$ for some set $V$.

This section presents a special case of \S\ref{sec:VxXt} and~\ref{subsec:fin-supp}.
The definitions, propositions, and proofs given here are obtained by taking $t=1$ in \S\ref{sec:VxXt}.
Even so, this special case already covers important examples such as the SPP and widest-path problems.
Moreover, the main ideas can be presented more simply in this setting.
We therefore begin with this case.

The following lemma gives a characterization of transition modalities $\sigma$ for $G = V \times \blank$.
\begin{lemma}\label{lem:transition-modality-OmegaxX}
    A $G$-algebra $\sigma \colon V \times \Omega \to \Omega$ is a transition modality (cf.\ Def.~\ref{def:transtionModality}) if and only if
    $\sigma(a, \bigsqcap B) = \bigsqcap_{b \in B} \sigma(a,b)$, for each $a \in V$ and $B \subseteq \Omega$.
\end{lemma}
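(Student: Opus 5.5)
We need to prove: $\sigma\colon V\times\Omega\to\Omega$ is a transition modality iff $\sigma(a,\bigsqcap B)=\bigsqcap_{b\in B}\sigma(a,b)$ for all $a\in V$ and $B\subseteq\Omega$. The plan is to unfold Def.~\ref{def:transtionModality} for $G=V\times\blank$. For a finite set $X$ and $d\in[X,\Omega]$, the map $\sigma\circ Gd\colon V\times X\to\Omega$ sends $(a,x)$ to $\sigma(a,d(x))$. Infima in $[X,\Omega]$ and $[V\times X,\Omega]$ are computed pointwise, since $\Omega$ has all infima. Hence the preservation condition reads
\[
\sigma\Bigl(a,\bigsqcap_{d\in D} d(x)\Bigr)=\bigsqcap_{d\in D}\sigma(a,d(x))
\]
for all finite $X$, all $D\subseteq[X,\Omega]$, all $a\in V$ and all $x\in X$.

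For ($\Leftarrow$), fix $a$ and $x$ and put $B=\{d(x)\mid d\in D\}\subseteq\Omega$. Then $\bigsqcap_{d\in D}d(x)=\bigsqcap B$. The right-hand side $\bigsqcap_{d\in D}\sigma(a,d(x))$ equals $\bigsqcap_{b\in B}\sigma(a,b)$, because both are infima of the same set of values; repetitions among the $d(x)$ do not change the infimum. The assumed identity therefore gives the displayed equation directly.

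For ($\Rightarrow$), take $X=\{\ast\}$, a one-element set and hence finite. Given $B\subseteq\Omega$, let $D=\{\lambda\ast.\,b\mid b\in B\}$. Evaluating the pointwise equation at $(a,\ast)$ yields $\sigma(a,\bigsqcap B)=\bigsqcap_{b\in B}\sigma(a,b)$. The case $B=\emptyset$ is covered as well: it gives $\sigma(a,\top)=\top$ on both sides.

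The only point requiring care is the pointwise computation of infima in function spaces, including the empty infimum. This holds because $[X,\Omega]$ is ordered pointwise and $\Omega$ is a complete lattice, being a weight domain; otherwise the argument is routine.
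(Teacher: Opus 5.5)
Your proof is correct and follows the paper's argument essentially step for step. For ($\Rightarrow$) the paper also instantiates the definition at a one-point set $\{\star\}$ with the constant valuations $\tilde b(\star)=b$; for ($\Leftarrow$) it also sets $B=\{d(x)\mid d\in D\}$ and uses that infima in $[X,\Omega]$ are computed pointwise. Your remarks on pointwise infima and on the empty case $B=\emptyset$ simply spell out what the paper leaves implicit.
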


For a transition modality $\sigma \colon V \times \Omega \to \Omega$, the set $\Omega^{\sigma}$ in Def.~\ref{def:Omega-sigma} is given by
$\Omega^{\sigma}_0 = \{ \finalweight , \top_{\Omega}\}$,
$\Omega^{\sigma}_{n+1} = \Omega^{\sigma}_n \cup \{ \sigma(a, b) \in \Omega \mid a \in V,\ b \in \Omega^{\sigma}_n\}$ for $n \in \NN$, and $\Omega^{\sigma} = \bigcup_{n=0}^{\infty} \Omega^{\sigma}_n$.

To prove the correctness of Algo.~\ref{alg:coalgebraic-dijkstra}, we introduce the notion of \emph{run paths} in weighted $G$-graphs and their \emph{$\sigma$-values}.
A run path $p$ and its $\sigma$-value $\sigma(p)$ generalize, respectively, a path in a weighted graph and its length.
Lem.~\ref{lem:runpath-Phi} relates run paths to Bellman operators.
\begin{definition}[run path and $\sigma$-value]
    For a weighted $G$-graph $\gamma \colon X \to \BB \times \powsetfin(V \times X)$,
    a \emph{run path} of $\gamma$ is a finite sequence $((x_i, a_i))_{i = 0}^{M}$ such that
    \begin{enumerate}
        \item $(x_i, a_i) \in X \times (V + \{ \star \})$ for $i \in [M + 1]$,
        \item $\pi_0(\gamma(x_0)) = \btt$ and $a_0 = \star$, and
        \item $(a_{i+1}, x_{i}) \in \pi_1(\gamma(x_{i + 1}))$ for $i \in [M]$.
    \end{enumerate}
    We write $\runpath_n(\gamma,x)$ for the set of run paths $((x_i, a_i))_{i=0}^{M}$ such that $x_M = x$ and $M \le n$.
    For a run path $p = ((x_i, a_i))_{i = 0}^{M}$, we define the \emph{$\sigma$-value} $\sigma(p)$ of $p$ recursively as
    $\sigma((x_0, a_0)) = \finalweight$ (here $\xi$ is from Def.~\ref{def:pointedWeightDom}) and
    $\sigma((x_{i}, a_{i})_{i=0}^{M + 1}) = \sigma(a_{M+1}, \sigma((x_{i}, a_{i})_{i=0}^{M}))$.
\end{definition}

\begin{lemma}\label{lem:runpath-Phi}
Let $\sigma \colon V \times \Omega \to \Omega$ be a transition modality,
$X$ be a finite set,
and $\gamma \colon X \to \BB \times \powsetfin(V \times X)$ be a coalgebra.
For each $n \in \NN$ and $x \in X$, 
$ (\Bellman{G}{\sigma}{\gamma})^{n + 1}(\top_{[X,\Omega]})(x) = \bigsqcap_{p \in \runpath_n(\gamma, x)} \sigma(p)$ holds.
Furthermore, $\gfp \Bellman{G}{\sigma}{\gamma}(x) = \bigsqcap_{n=0}^{\infty} \bigsqcap_{p \in \runpath_n(\gamma, x)} \sigma(p)$ holds.
\end{lemma}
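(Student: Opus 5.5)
The plan is to prove the first claim by induction on $n$, and then obtain the second from Lem.~\ref{lem:CC}. Write $\Phi = \Bellman{G}{\sigma}{\gamma}$ and $d_n = \Phi^{n+1}(\top)$.

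\emph{Base case $n=0$.} Since $\gamma$ gives $\powsetfin(V\times X)$, we have $\Phi(\top)(x) = \finalweight \sqcap \bigsqcap_{(a,y)\in A}\sigma(a,\top)$ when $\gamma(x) = (\btt,A)$, and $\bigsqcap_{(a,y)\in A}\sigma(a,\top)$ when $\gamma(x)=(\bff,A)$. On the other hand, $\runpath_0(\gamma,x)$ is the singleton $\{(x,\star)\}$ if $x$ is a target, with value $\finalweight$, and is empty otherwise, with infimum $\top$. These do not literally match, since the terms $\sigma(a,\top)$ remain. The fix is the infimum-preservation of Lem.~\ref{lem:transition-modality-OmegaxX} applied to $B=\emptyset$: it gives $\sigma(a,\top)=\sigma(a,\bigsqcap\emptyset)=\bigsqcap\emptyset=\top$. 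So $\sigma(a,\top)=\top$, and the base case holds.

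\emph{Inductive step.} Decompose a run path of length $\le n+1$ ending at $x$. It is either the trivial path $(x,\star)$, available exactly when $\pi_0(\gamma(x))=\btt$ and contributing $\finalweight$, or a path $p'\in\runpath_n(\gamma,y)$ extended by one step $(a,y)\in\pi_1(\gamma(x))$, with value $\sigma(a,\sigma(p'))$. This gives
\[
\bigsqcap_{p\in\runpath_{n+1}(\gamma,x)}\sigma(p) = [\finalweight\sqcap]\ \bigsqcap_{(a,y)\in A}\ \bigsqcap_{p'\in\runpath_n(\gamma,y)}\sigma(a,\sigma(p')).
\]
Infimum-preservation in the second argument of $\sigma$ (Lem.~\ref{lem:transition-modality-OmegaxX}) lets me pull the inner $\bigsqcap$ inside $\sigma$, turning the right-hand side into $[\finalweight\sqcap]\bigsqcap_{(a,y)\in A}\sigma(a,d_n(y))=\Phi(d_n)(x)=d_{n+1}(x)$ by the induction hypothesis. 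The bookkeeping needed is to check that this decomposition is a bijection on run paths; this follows directly from the definition, since it reads paths from the last index $M$.

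\emph{The gfp formula.} By Lem.~\ref{lem:CC}, $\gfp\Phi=\Phi^\alpha(\top)$ for some ordinal $\alpha$. I need stabilization at $\omega$, which requires cocontinuity of $\Phi$ on descending chains. I expect this to be the main obstacle. The tool is that $X$ and each $A$ are finite, so $\Phi(\bigsqcap_n e_n)(x)$ is a finite meet of terms $\sigma(a,\bigsqcap_n e_n(y))=\bigsqcap_n\sigma(a,e_n(y))$. Finite meets commute with arbitrary meets in a complete lattice, so $\Phi(\bigsqcap_n e_n)=\bigsqcap_n\Phi(e_n)$. Hence $\Phi^\omega(\top)=\bigsqcap_n\Phi^n(\top)$ is a fixed point, and being below every $\Phi^n(\top)$, which bound all fixed points, it is the greatest one. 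Finally, $\bigsqcap_{n\ge0}\Phi^{n+1}(\top)=\bigsqcap_{n\ge0}\Phi^n(\top)$ because $\Phi(\top)\sqsubseteq\top$, and the first part then yields the stated formula.
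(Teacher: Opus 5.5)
Your proof is correct and matches the paper's argument for the main identity. Both use induction on $n$. The base case rests on $\sigma(a,\top)=\top$, which you derive from infimum-preservation with $B=\emptyset$, whereas the paper simply asserts it. The inductive step pulls $\bigsqcap_{p'\in\runpath_n(\gamma,y)}$ inside $\sigma(a,\blank)$ via Lem.~\ref{lem:transition-modality-OmegaxX}.

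The paper's proof of this lemma gives no argument for the gfp clause. Your $\omega$-cocontinuity argument supplies it using the same interchange $\Phi(\bigsqcap_n e_n)=\bigsqcap_n\Phi(e_n)$ that the general Lem.~\ref{lem:gfp-run-tree} relies on. That lemma finishes by coinduction from a post-fixed point rather than showing an exact fixed point.
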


We now obtain a necessary and sufficient condition on $\sigma$ under which the Algo.~\ref{alg:coalgebraic-dijkstra} returns the greatest fixed point for every weighted $(V \times \blank)$-graph.
\begin{proposition}\label{theorem:VxX}
    Let $\sigma \colon V \times \Omega \to \Omega$ be a transition modality.
    The following statements are equivalent:
    \begin{enumerate}
        \item\label{item:expansion} $b \sqsubseteq \sigma(a, b)$ for each $b \in \Omega^{\sigma}$ and $a \in V$.
        \item\label{item:dijkstra-crrect-OmegaxX} For every weighted $G$-graph $\gamma \colon X \to \BB \times \powsetfin(V \times X)$ on a finite set $X$, $\textsc{CoalgDijkstra}_{(G,\sigma)}$ is correct.
    \end{enumerate}
\end{proposition}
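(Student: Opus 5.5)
We prove the two implications separately. Throughout, write $\Phi = \Bellman{G}{\sigma}{\gamma}$ and use Lem.~\ref{lem:runpath-Phi}, which expresses $\gfp\Phi(x)$ as the infimum of $\sigma$-values of run paths ending at $x$. We also use Lem.~\ref{lem:descending}, which gives $d_n \sqsupseteq \gfp\Phi$ for all $n$; hence it suffices to show that the returned $d$ satisfies $d \sqsubseteq \gfp\Phi$.

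\emph{(\ref{item:expansion}$\Rightarrow$\ref{item:dijkstra-crrect-OmegaxX}).} We mimic the classical invariant. Let $m_n = \bigsqcap_{z \in X\setminus S_n} d_n(z)$ be the freezing level of round $n$. We prove by induction on $n$ the following three facts.
\begin{enumerate}
\item Every frozen state $y \in S_n$ has $d_n(y) = d(y)$ for all later rounds, and $d_n(y) \sqsubseteq m_n$.
\item The levels are monotone: $m_{n} \sqsubseteq m_{n+1}$.
\item For $x \notin S_n$, $d_n(x) = \xi' \sqcap \bigsqcap_{(a,y)} \sigma(a, d_n(y))$, where the meet ranges over edges $(a,y) \in \pi_1(\gamma(x))$ with $y \in S_n$, and $\xi'$ is $\xi$ if $x$ is a target and $\top$ otherwise. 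In words, the valuation of an unfrozen state is its one-step Bellman value computed through frozen states only.
\end{enumerate}
Fact 3 holds because the selective update touches exactly the predecessors of the newly frozen $Y$. The computation also uses that $\sigma(a,\cdot)$ preserves meets (Lem.~\ref{lem:transition-modality-OmegaxX}), so an already-taken minimum can be combined with the new contributions.

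Monotonicity of the levels is where expansiveness enters. A new value $\sigma(a, d(y))$, with $y$ just frozen at level $m_n$, satisfies $\sigma(a, d(y)) \sqsupseteq d(y) = m_n$. For this step we must check that all values $d(y)$ lie in $\Omega^\sigma$; this is immediate because every value is built from $\xi$ and $\top$ by applications of $\sigma$ and finite meets, which in a total order pick one of the arguments.

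To conclude correctness, we take any run path $p$ ending at $x$ and show $d(x) \sqsubseteq \sigma(p)$ by induction on the length of $p$. Fix the last edge $(a_M, x_{M-1})$. By induction, $d(x_{M-1}) \sqsubseteq \sigma(p')$, where $p'$ is $p$ with its last step removed. If $x_{M-1}$ is frozen before $x$, fact 3 together with monotonicity of $\sigma(a_M,\cdot)$ gives $d(x) \sqsubseteq \sigma(a_M, d(x_{M-1})) \sqsubseteq \sigma(p)$. If instead $x$ is frozen no later than $x_{M-1}$, then facts 1 and 2 give $d(x) \sqsubseteq d(x_{M-1})$, and expansiveness gives $d(x_{M-1}) \sqsubseteq \sigma(p') \sqsubseteq \sigma(a_M, \sigma(p')) = \sigma(p)$, since $\sigma(p') \in \Omega^\sigma$. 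Taking the infimum over all run paths $p$ ending at $x$ yields $d \sqsubseteq \gfp\Phi$.

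\emph{(\ref{item:dijkstra-crrect-OmegaxX}$\Rightarrow$\ref{item:expansion}).} Suppose expansiveness fails, so there are $b \in \Omega^\sigma$ and $a \in V$ with $\sigma(a,b) \sqsubset b$; since the order is total, this strict inequality is available. Because $b \in \Omega^\sigma$, there is a chain graph realising $b$: a path $x_k \to \cdots \to x_0$ whose edges carry labels $a_k, \dots, a_1$ and whose target $x_0$ has final weight $\xi$, so that the algorithm assigns exactly $b$ to $x_k$. If $b = \top$, we instead use a state with no edges. We add a fresh state $x$ with a single edge $(a, x_k)$.

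We then want the algorithm to freeze $x$ with a value strictly above $\sigma(a,b)$. To force this, we add a further competing state $u$ with a direct edge to the target, with value $c$ satisfying $\sigma(a,b) \sqsubset c \sqsubseteq b$; the natural candidate is $c = b$ itself, obtained by an edge from $u$ into a copy of the chain. The difficulty is making $x$ frozen prematurely, and I expect this construction to be the main obstacle. What I would try first is to choose the first index $j$ along the chain at which $\sigma(a_j, b_{j-1}) \sqsubset b_{j-1}$, or to place the non-expansive edge at $x$ so that the intermediate values force $x$ into a $Y$ before $x_k$ is frozen.

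If this cannot be arranged, the fallback is a two-state cycle. Take $x$ with edges to a target and to $x_k$, and exploit the fact that $x$ is frozen at a level at most $b$ before $x_k$ finishes, so $x$ never receives $\sigma(a,b)$. Either way, the goal is a finite graph on which the returned $d(x) \ne \gfp\Phi(x) = \sigma(a,b)$, contradicting correctness.
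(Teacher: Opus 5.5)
\textbf{Direction (2)$\Rightarrow$(1).} This is the genuine gap: you never construct a graph on which the algorithm fails.

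Your basic construction is not a counterexample. Take a chain realising $b$ at $x_k$, plus a fresh $x$ whose single edge is $(a,x_k)$. Then $x$ is only updated after $x_k$ has been frozen at $b$, so it receives $\sigma(a,b)$, which is exactly $\gfp\Phi(x)$. A decreasing freezing level is harmless on its own. Your ``competing state'' and ``two-state cycle'' variants are never checked. In particular, an edge from $x$ to a target need not yield any value between $\sigma(a,b)$ and $b$.

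The missing idea, which the paper uses, is to make the non-expansive edge a self-loop on the last chain state: $\gamma(x_k)=(\bff,\{(a_k,x_{k-1}),(a,x_k)\})$. Then $x_k$ is updated exactly once, when $x_{k-1}$ is frozen. At that moment the loop still sees $d(x_k)=\top$, so $x_k$ receives $b$ and is frozen immediately. The returned $d$ therefore satisfies $\Phi(d)(x_k)=b\sqcap\sigma(a,b)=\sigma(a,b)\sqsubset b=d(x_k)$, so it is not even a fixed point.

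Your idea of a second route of value $b$ can also be made to work, but only if it is attached to $x$ itself. For example, give $x$ the extra edge $(a_k,x_{k-1})$; then $x$ and $x_k$ tie at $b$ and are frozen together. You would have to state and check this.

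You must also first dispose of the case where $\sigma(a',\top)\sqsubset\top$ for some label $a'$. Then a single non-target state with a self-loop labelled $a'$ is never updated and is returned as $\top$, whereas $\gfp\Phi\sqsubseteq\sigma(a',\top)\sqsubset\top$. This case is needed for two reasons:
\begin{itemize}
\item For $b$ generated from $\top$, your claim that the algorithm ``assigns exactly $b$ to $x_k$'' fails, because a chain with no reachable target is never updated.
\item It guarantees that the loop's contribution $\sigma(a,\top)$ does not disturb the value $b$.
\end{itemize}

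\textbf{Direction (1)$\Rightarrow$(2).} Your route differs from the paper's. You use monotone freezing levels plus induction on run-path length; the paper uses a per-round contradiction via the first unfrozen vertex on a cheaper run path. Your route can be made to work, but fact~3 is false as stated, for two reasons.
\begin{itemize}
\item Line~7 applies the full $\Phi(d)(x)$, so unfrozen successors whose values were already lowered also contribute. In the SPP, take a target $t$ and edges $w\to t$ (weight $5$), $y\to t$ (weight $6$), $x\to w$ (weight $10$) and $x\to y$ (weight $1$). After $w$ is frozen, $d(x)=7$, not $15$.
\item $d_n$ has not yet absorbed the edges into $Y_n$, so the indexing is off by one.
\end{itemize}

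Your final step only needs the inequality $d(x)\sqsubseteq\sigma(a,d(y))$ for $y$ frozen strictly before $x$. However, your monotonicity argument for the levels must then also bound two further kinds of contribution.
\begin{itemize}
\item Unfrozen successors: $\sigma(a,d(y))\sqsupseteq d(y)$, which lies strictly above the current level, by expansiveness.
\item Earlier-frozen successors: these are $\sqsupseteq$ the previous value of $x$, which already lies strictly above the current level.
\end{itemize}
With that repair your argument is a valid alternative. It cleanly isolates the two uses of expansiveness: levels are monotone, and extending a run path never lowers its value. It also extends to run trees by induction on height.
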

\begin{proof}[Proof sketch]
The proof of (\ref{item:expansion} $\implies$ \ref{item:dijkstra-crrect-OmegaxX}) in Prop.~\ref{theorem:VxX} follows the same basic idea as the standard correctness proof of the classical Dijkstra algorithm.
The main part of the argument is to prove, by induction on $n$, that
$d_{n}(y) \sqsubseteq \gfp\Bellman{G}{\sigma}{\gamma}(y)$ for every $y \in Y_{n}$.

Suppose that there exists $y \in Y_{n+1}$ such that $d_{n+1}(y) \sqsupset \gfp\Bellman{G}{\sigma}{\gamma}(y)$.
By Lemma~\ref{lem:runpath-Phi}, there exists a run path $p \in \runpath_n(\gamma, y)$ such that $\sigma(p) \sqsubset d_{n+1}(y)$.
\[
    \begin{tikzpicture}[scale=0.8, transform shape]
        \draw (-1.5, 3.4) rectangle (3.2, 0.8);
        \draw (0, 3) rectangle (0.9, 2.1);
        \draw (1.0, 3) rectangle (3, 1);
        \node at (-1.2, 3.1) {$X$};
        \node at (-0.5, 2.5) {$Y_{n+1}$};
        \node at (2.7, 1.3) {$S_n$};
        \node[state] (y) at (0.6, 2.7) {$y$};
        \node[state] (z) at (0.6, 1.5) {$z$};
        \node[state] (z') at (1.5, 1.5) {$w$};
        \node[state, accepting] (target) at (2.5, 2.5) {\phantom{$a$}};
        \path[->] (z) edge (z');
        \path[->, looseness=1] (z') edge[out=0, in=270, decorate, segment length=4mm, decoration={snake, amplitude=0.3mm}] (target);
        \path[->, looseness=3] (y) edge[out=-10, in=150, decorate, segment length=4mm, decoration={snake, amplitude=0.3mm}] (z);
    \end{tikzpicture}
\]

From $p$, we extract a ``subpath'' $q \in \runpath_i(\gamma, z)$ $(i \le n)$ starting at a vertex $z \in X \setminus S_n$ such that every vertex of $q$ except $z$ belongs to $S_n$.
By the condition~\ref{item:expansion}, we have $\sigma(q) \sqsubseteq \sigma(p)$.

By the definition of $Y_{n+1}$ (minimality) and $z \not\in S_n$, we have $d_{n+1}(y) \sqsubseteq d_{n+1}(z) \sqsubseteq d_{i+1}(z)$.

Let $w$ be the successor $w$ of $z$ in $q$ and 
$a$ be the value of the edge from $z$ to $w$ in $q$.
Combining the above inequalities, we have
\[\begin{aligned}
d_{n + 1}(y)
& \sqsubseteq d_{i + 1}(z)
\sqsubseteq \Bellman{G}{\sigma}{\gamma}(d_i)(z) \\
& \sqsubseteq \sigma(a, d_i(w))
\quad \text{by the definition of $\Bellman{G}{\sigma}{\gamma}(d_i)$}
\\
& \sqsubseteq \sigma(a, \gfp\Bellman{G}{\sigma}{\gamma}(w))
\quad \text{by the induction hypothesis for $i < n + 1$}
\\
& \sqsubseteq \sigma(q)
\sqsubseteq \sigma(p) \sqsubset d_{n+1}(y).
\end{aligned}
\]
This is a contradiction, and hence we have $d_{n+1}(y) \sqsubseteq \gfp\Bellman{G}{\sigma}{\gamma}(y)$ for every $y \in Y_{n+1}$.
\end{proof}


\begin{example}[SPP, continued from Ex.~\ref{example:shortest-path-run}]
    We have $\Omega^{\sigma} = \RRinfpos = \Omega$ since 
    $\Omega^{\sigma}_0 = \{ 0, \infty \}$ and
    $\Omega^{\sigma}_1 = \Omega^{\sigma}_0 \cup \{ a + b \mid a \in \RRpos, b \in \Omega^{\sigma}_0 \} = \RRinfpos$.
    For every $a \in \RRpos$ and $b \in \Omega^{\sigma} = \RRinfpos$,
    we have $b \sqsubseteq a + b = \sigma(a,b)$.
    Hence, the condition~\ref{item:expansion} of Prop.~\ref{theorem:VxX} holds. Therefore, Algo.~\ref{alg:coalgebraic-dijkstra} returns the solution to the CSPP on $(G,\sigma)$, namely the greatest fixed point $\gfp \Bellman{G}{\sigma}{\gamma}$, which gives the length of a shortest path from each vertex to a target.
\end{example}
\begin{example}[SPP with negative edges, continued from Ex.~\ref{example:shortest-path-negative}]
    We have $\Omega^{\sigma} = \RRinf$ since 
    $\Omega^{\sigma}_0 = \{ 0, \infty \}$ and
    $\Omega^{\sigma}_1 = \Omega^{\sigma}_0 \cup \{ \sigma(a, b) \mid a \in \RR, b \in \Omega^{\sigma}_0 \} = \RRinf$.
    For $a = -1 \in \RR$ and $b = 1 \in \Omega^{\sigma}$, we have $b = 1 \sqsupset 0 = \sigma(-1, 1) = \sigma(a, b)$.
    Hence, the condition \ref{item:expansion} of Prop.~\ref{theorem:VxX} does not hold.
    Consequently, there exists a weighted $G$-graph $\gamma \colon X \to \BB \times \powsetfin(\RR \times X)$ for which Algo.~\ref{alg:coalgebraic-dijkstra} does not return the greatest fixed point $\gfp \Bellman{G}{\sigma}{\gamma}$.

    The proof of Prop.~\ref{theorem:VxX} yields such a counterexample. Starting from
    $1 \sqsupset 0 = \sigma(-1,1) = \sigma(-1,\sigma(1,0))$,
    we define a finite set $X$ and a weighted $G$-graph $\gamma \colon X \to \BB \times \powsetfin(\RR \times X)$ by
    \[
        X = \{ 0, 1 \}, \qquad
        \gamma(0) = (\btt, \emptyset), \qquad
        \gamma(1) = (\bff, \{ (1,0), (-1,1) \}), \qquad
        \begin{tikzpicture}[node distance=1cm, auto]
            \node[state,accepting](0) {$0$};
            \node[state] (1) [left=0.4cm of 0] {$1$};
            \path[->] (1) edge node {\tiny $1$} (0);
            \path[->] (1) edge [loop left] node[left] {\tiny $-1$} ();
        \end{tikzpicture}.
    \]
    For the valuation $d \colon X \to \Omega$ returned by $\textsc{CoalgDijkstra}_{(G,\sigma)}(\gamma)$, we have $d(0) = 0$ and $d(1) = 1$.
    However, the greatest fixed point satisfies $\gfp \Bellman{G}{\sigma}{\gamma}(0) = 0$ and $\gfp \Bellman{G}{\sigma}{\gamma}(1) = -\infty$.
\end{example}
\begin{example}[SPP with interest/discount rate, continued from Ex.~\ref{example:shortest-path-interest-discount}]
    For both the SPP with interest rate and discount rate, we have $\Omega^{\sigma} = \RRinfpos$.
    For the interest rate, we have $b \le a + a' b = \sigma(a, a', b)$ for each $a \in \RRpos$, $a' \in [1, \infty)$ and $b \in \Omega^{\sigma}$.
    Hence, Algo.~\ref{alg:coalgebraic-dijkstra} computes the greatest fixed point $\gfp \Bellman{G}{\sigma}{\gamma}$.
    For the discount rate, we have $1 > 0 + 0 \times 1 = \sigma(0, 0, 1)$.
    Hence, there is a weighted $G$-graph for which Algo.~\ref{alg:coalgebraic-dijkstra} fails.
\end{example}

\subsection{Transition modalities condition for $GX = V \times X^t$}\label{sec:VxXt}
Let $V$ be a set, $t \in \NN$, and $G \colon \Sets \to \Sets$ be the functor defined by $GX = V \times X^t$.
In this setting, a weighted $G$-graph $\gamma \colon X \to \BB \times \powsetfin(V \times X^t)$ can be viewed as a transition system in which each state $x$ has $t$-ary transitions to states $x_0, \dots, x_{t-1}$, each equipped with a weight $a \in V$. Accordingly, the CSPP on $(V \times (\blank)^t, \sigma)$ can be viewed as the problem of computing a shortest tree with $t$-ary branching.

The following lemma gives a characterization of transition modalities $\sigma$ for $G = V \times (\blank)^t$.
\begin{lemma}\label{lem:transition-modality-VxXt}
    A $G$-algebra $\sigma \colon V \times \Omega^t \to \Omega$ is a transition modality if and only if
    $ \sigma(a, \bigsqcap B_0, \dots, \bigsqcap B_{t-1}) = \bigsqcap_{b_0 \in B_0} \dots \bigsqcap_{b_{t-1} \in B_{t-1}} \sigma(a, b_0, \dots, b_{t-1}) $ holds for each $a \in V$ and $B_0, \dots, B_{t-1} \subseteq \Omega$.
\end{lemma}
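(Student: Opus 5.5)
The plan is to unfold Def.~\ref{def:transtionModality} for $G = V \times (\blank)^t$ and compare it with the stated identity in both directions. For $d \in [X,\Omega]$ and $(a,x_0,\dots,x_{t-1}) \in GX$ we have $(\sigma \co Gd)(a,x_0,\dots,x_{t-1}) = \sigma(a,d(x_0),\dots,d(x_{t-1}))$. Infima in $[X,\Omega]$ and in $[GX,\Omega]$ are pointwise. Hence the transition-modality condition says: for every finite $X$, every $D \subseteq [X,\Omega]$ and every $(a,\vec x)$,
\[
\sigma\bigl(a,(\textstyle\bigsqcap D)(x_0),\dots,(\bigsqcap D)(x_{t-1})\bigr) = \textstyle\bigsqcap_{d\in D}\sigma(a,d(x_0),\dots,d(x_{t-1})).
\]
Both directions are comparisons between this ``diagonal'' infimum over $d \in D$ and the ``product'' infimum over independent tuples $(b_0,\dots,b_{t-1}) \in B_0\times\dots\times B_{t-1}$ that appears in the lemma.

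\emph{($\Rightarrow$).} Take the finite set $X=[t]$ and the product family $D=\{d\colon [t]\to\Omega \mid d(i)\in B_i \text{ for all } i\}$. If every $B_i$ is nonempty, then $(\bigsqcap D)(i)=\bigsqcap B_i$. Evaluating at $(a,0,1,\dots,t-1)$, the family $\{(d(0),\dots,d(t-1)) \mid d \in D\}$ is exactly $B_0\times\dots\times B_{t-1}$, so the displayed equation becomes the stated identity. The degenerate case where some $B_i=\emptyset$ needs separate care. There the right-hand side is the empty meet $\top$, while $D=\emptyset$ only yields $\sigma(a,\top,\dots,\top)=\top$. I would check whether the intended reading restricts to nonempty $B_i$; for instance, $\sigma(a,b_0,b_1)=b_0\sqcap b_1$ satisfies Def.~\ref{def:transtionModality} but violates the identity with $B_1=\emptyset$.

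\emph{($\Leftarrow$).} Given $X$, $D$ and $(a,\vec x)$, set $B_i=\{d(x_i)\mid d\in D\}$, so that $(\bigsqcap D)(x_i)=\bigsqcap B_i$. The hypothesis rewrites the left-hand side as the product infimum over $B_0\times\dots\times B_{t-1}$. The diagonal tuples $(d(x_0),\dots,d(x_{t-1}))$ are among the product tuples, which gives one inequality. Monotonicity of $\sigma(a,\dots)$ follows from the hypothesis with two-element $B_i$, and it gives the same inequality again. The reverse inequality, product infimum $\sqsupseteq$ diagonal infimum, is the main obstacle. When $t\le 1$, or when all $x_i$ coincide, the diagonal and product families coincide and the proof closes at once. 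I would first try to use totality of $\Omega$ to dominate each independent tuple by a single $d\in D$.

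I expect this step to fail in general. Take $t=2$, $\sigma(a,b_0,b_1)=a+b_0+b_1$ on $\RRinfpos$, $x_0\ne x_1$, and $D=\{d,e\}$ with $d=(0,1)$ and $e=(1,0)$ on $(x_0,x_1)$. Then the diagonal infimum is $a+1$, but the value at $\bigsqcap D=(0,0)$ is $a$. So this $\sigma$ satisfies the lemma's identity but not Def.~\ref{def:transtionModality} read literally. I would therefore present the $(\Leftarrow)$ argument fully for the coinciding-variables case. For distinct $x_i$, I would point out that the equivalence needs the definition to be read per coordinate, i.e.\ infima taken separately in each argument position. Under that reading, the hypothesis applies coordinatewise and the argument above goes through verbatim.
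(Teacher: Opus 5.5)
Your route is the paper's in both directions. For ($\Rightarrow$) the paper takes $X=\{0,\dots,t-1\}$ with the product family $D=\{d_{b_0,\dots,b_{t-1}}\mid b_j\in B_j\}$, and for ($\Leftarrow$) it sets $B_i=\{d(x_i)\mid d\in D\}$ and applies the hypothesis.

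Both of your objections are correct, and they fall exactly on the two steps the paper asserts without justification.
\begin{itemize}
\item In ($\Rightarrow$) the paper simply states $(\bigsqcap D)(j)=\bigsqcap B_j$, which is false once some $B_i$ is empty. Your $\sigma(a,b_0,b_1)=b_0\sqcap b_1$ is indeed a transition modality, and it violates the identity for $B_0=\{b\}$ with $b\neq\top$ and $B_1=\emptyset$.
\item In ($\Leftarrow$) the paper's second equality, $\sigma\bigl(a,\bigsqcap_{d\in D}d(x_0),\dots,\bigsqcap_{d\in D}d(x_{t-1})\bigr)=\bigsqcap_{d\in D}\sigma(a,d(x_0),\dots,d(x_{t-1}))$, silently swaps two infima. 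The hypothesis delivers the infimum over the whole product $B_0\times\dots\times B_{t-1}$, but the paper writes the diagonal one. Your $d=(0,1)$, $e=(1,0)$ example refutes this step.
\end{itemize}

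So, with Def.~\ref{def:transtionModality} read literally, the lemma is false for $t\ge 2$. In particular, the paper's own shortest-binary-tree modality $a+b_0+b_1$ is not a transition modality. The same holds for $\max(b_0,b_1)$ in the binary reachability game, via $(0,\infty)$ and $(\infty,0)$.

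Your repair is the right one. What Lem.~\ref{lem:fold-one-step} actually needs is preservation of infima of product families $D=\{d\mid d(x)\in B_x\text{ for all }x\}$. There, the maps $\phi^t_{\tree_0,\dots,\tree_{k-1}}$ are chosen independently at each support point.

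One caution on the empty case. If you restrict the identity to nonempty $B_i$, you must also exclude $D=\emptyset$ from the definition, or add $\sigma(a,\top,\dots,\top)=\top$ as a separate clause. Otherwise the constant map $0$ is a counterexample: it satisfies the identity for all nonempty $B_i$ but fails Def.~\ref{def:transtionModality} at $D=\emptyset$.

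There is one inaccuracy in your ($\Leftarrow$) discussion. When all $x_i$ coincide, the diagonal and product families do \emph{not} coincide: the product is $B^t$, while the diagonal tuples are only $(b,\dots,b)$ for $b\in B$. The reverse inequality still holds there, but it needs the totality argument you mention rather than following at once. For $(b_0,\dots,b_{t-1})\in B^t$, the meet $c=\bigsqcap_i b_i$ equals some $b_i$, so $c\in B$. Monotonicity then gives $\sigma(a,c,\dots,c)\sqsubseteq\sigma(a,b_0,\dots,b_{t-1})$. This argument covers exactly the case where the $x_i$ take at most one distinct value. As soon as two distinct points occur, your counterexample shows the step can fail.
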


Given a transition modality $\sigma \colon V \times \Omega^t \to \Omega$, the set $\Omega^{\sigma}$ in Def.~\ref{def:Omega-sigma} is given by
$\Omega^{\sigma}_0 = \{ \finalweight, \top_{\Omega} \}$,
$\Omega^{\sigma}_{n + 1} = \Omega^{\sigma}_n \cup \{ \sigma(a, b_0, \dots, b_{t-1}) \mid a \in V \text{ and } b_0, \dots, b_{t-1} \in \Omega^{\sigma}_n \}$,
and
$\Omega^{\sigma} = \bigcup_{n = 0}^{\infty} \Omega^{\sigma}_n$.

The notion of run paths and their $\sigma$-values, introduced in \S\ref{sec:VxX}, are generalized to the notion of \emph{run trees} and their \emph{$\sigma$-values}, as follows.
\begin{definition}[run tree and $\sigma$-value]
    Given a weighted $G$-graph $\gamma \colon X \to \BB \times \powsetfin(V \times X^t)$, a \emph{run tree} of $\gamma$ is a pair $T = (T, f)$, consisting of
    \begin{itemize}
        \item a prefix closed subset $T$ of the set $\{ 0, 1, \dots, t-1\}^*$ of strings, and
        \item a function $f \colon T \to X \times (V + \{ \star \})$ such that
        \begin{itemize}
            \item if $\tau \in T$ is a leaf, then $f(\tau) = (x_{\tau}, \star)$ and $\pi_0(\gamma(x_{\tau})) = \btt$, and
            \item if $\tau \in T$ is an internal node, then
            $ (a_{\tau}, x_{\tau 0}, \dots, x_{\tau (t-1)}) \in \pi_1(\gamma(x_{\tau})) $
            holds where $f(\tau) = (x_\tau, a_{\tau})$ and $x_{\tau j} = \pi_0(f(\tau j))$, for each $j \in \{ 0, \dots, t-1\}$.
        \end{itemize}
    \end{itemize}
    The \emph{height} $\height(T)$ of a run tree $T = (T,f)$ is the maximum length of strings in $T$.
    We write $\runtree_m(\gamma,x)$ for the set of run trees $T$
    such that $\pi_0(f(\epsilon)) = x$ and $\height(T) \le m$.
    We define the \emph{$\sigma$-value} of a run tree $T$ recursively by 
    $\sigma(\{\epsilon\}, f) = \finalweight$ and
    $\sigma(T, f) = \sigma(\pi_1(f(\epsilon)), \sigma(T_0, f_0), \dots, \sigma(T_{t-1}, f_{t-1}))$ where $T_i = \{ \tau \mid i\tau \in T\}$ and $f_i(\tau) = f(i\tau)$ for all $\tau \in T_i$.
\end{definition}

The following lemma relates the Bellman operator $\Bellman{G}{\sigma}{\gamma}$ to $\sigma$-values of run trees.
\begin{lemma}\label{lem:runtree-Phi}
    Let $\sigma \colon V \times \Omega^t \to \Omega$ be a transition modality,
    $X$ be a finite set, and
    $\gamma \colon X \to \BB \times \powsetfin(V \times X^t)$ be a coalgebra.
    For each $n \in \NN$ and $x \in X$, 
    $(\Bellman{G}{\sigma}{\gamma})^{n + 1}(\top_{[X,\Omega]})(x)
    =
    \bigsqcap_{T \in \runtree_n(\gamma,x)} \sigma(T)$ holds.
    Furthermore, $\gfp \Bellman{G}{\sigma}{\gamma}(x) = \bigsqcap_{n=0}^{\infty} \bigsqcap_{T \in \runtree_n(\gamma,x)} \sigma(T)$ holds.
\end{lemma}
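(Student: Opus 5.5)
We prove the statement about Lem.~\ref{lem:runtree-Phi} by induction on $n$, and then pass to the limit using Lem.~\ref{lem:CC}. Write $\Phi := \Bellman{G}{\sigma}{\gamma}$.

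\textbf{Base case $n=0$.} We compute $\Phi(\top)(x)$ directly. Each $t$-ary transition contributes $\sigma(a,\top,\dots,\top)$, and this is at least $\top$ provided $t\ge 1$. It does \emph{not} hold when $t = 0$: then $\sigma(a)$ is arbitrary.

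This shows the base case needs care. A run tree in $\runtree_0(\gamma,x)$ is just the root $\{\epsilon\}$. That root is either a leaf, requiring $\pi_0(\gamma(x))=\btt$, or, when $t=0$, an internal node with label $(x,a)$ and $(a)\in\pi_1(\gamma(x))$. In the latter case its value is $\sigma(a)$. For $t\ge 1$ there is a technical wrinkle: with the paper's convention, transitions feeding $\top$ give $\top$ under the modality only if $\sigma(a,\top,\dots)=\top$. Using the continuity condition of Def.~\ref{def:transtionModality} (Lem.~\ref{lem:transition-modality-VxXt}) with $B_i=\emptyset$, we get $\sigma(a,\top,\dots,\top)=\sigma(a,\bigsqcap\emptyset,\dots)=\bigsqcap\emptyset=\top$ whenever some $B_i$ is empty. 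So these terms vanish from the meet, and $\Phi(\top)(x)$ equals $\finalweight$ or $\top$ according to the target flag, which is exactly the meet over $\runtree_0(\gamma,x)$. For $t=0$ both sides include the same $\sigma(a)$ terms.

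\textbf{Inductive step.} Assume $\Phi^{n+1}(\top)(y)=\bigsqcap_{T\in\runtree_n(\gamma,y)}\sigma(T)$ for all $y$. Then
\[
\Phi^{n+2}(\top)(x)=[\finalweight]\sqcap\bigsqcap_{(a,\vec y)\in\pi_1(\gamma(x))}\sigma\Bigl(a,\bigsqcap_{T_0\in\runtree_n(\gamma,y_0)}\sigma(T_0),\dots\Bigr).
\]
By Lem.~\ref{lem:transition-modality-VxXt}, $\sigma$ commutes with these infima in each argument, so the term becomes
\[
\bigsqcap_{T_0,\dots,T_{t-1}}\sigma(a,\sigma(T_0),\dots,\sigma(T_{t-1})).
\]
We then establish a bijection between $\runtree_{n+1}(\gamma,x)$ and the union of two sets: the singleton leaf tree (present iff $x$ is a target), and the tuples consisting of a transition $(a,\vec y)\in\pi_1(\gamma(x))$ together with subtrees $T_i\in\runtree_n(\gamma,y_i)$. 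The map is grafting the subtrees under a root labelled $(x,a)$. The recursive definition of $\sigma(T)$ matches term by term, and the prefix-closed string encoding makes grafting and splitting ($T_i=\{\tau\mid i\tau\in T\}$) mutually inverse. The height bound shifts by exactly one.

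\textbf{Limit.} By the above, the chain $\Phi^{n}(\top)$ is descending. We need $\gfp\Phi=\bigsqcap_n\Phi^n(\top)$, i.e.\ stabilization at $\omega$. By Lem.~\ref{lem:CC} it suffices to show $\Phi(\bigsqcap_n\Phi^n(\top))=\bigsqcap_n\Phi^{n+1}(\top)$, since then $\Phi^{\omega}(\top)$ is a fixed point below which all fixed points lie. This is the main obstacle: $\Phi$ must preserve the infimum of this descending chain. We get it from the continuity of $\sigma\circ G(\blank)$ on the finite set $X$ (Def.~\ref{def:transtionModality}), which gives $\sigma\circ G(\bigsqcap_n d_n)=\bigsqcap_n\sigma\circ Gd_n$ pointwise on $GX$. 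The outer meet is a finite meet over $A$ together with $\finalweight$, and meets commute with meets, so $\Phi$ preserves the infimum. Hence $\gfp\Phi=\bigsqcap_n\Phi^{n+1}(\top)$, and substituting the inductive formula gives the second claim.
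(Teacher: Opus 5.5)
Your overall route matches the paper's: induction on $n$, pushing meets through $\sigma$ and regrouping run trees by grafting, as in the proofs of Lem.~\ref{lem:runpath-Phi} and Lem.~\ref{lem:fold-one-step}. For the limit you show that $\Phi$ preserves the infimum of the chain $(\Phi^n(\top))_n$, whereas the paper (Lem.~\ref{lem:gfp-run-tree}) shows the candidate is a post-fixed point and uses coinduction. Both rest on the same continuity of $\sigma\circ G(\blank)$. Your version is sound and a bit more direct, since Def.~\ref{def:transtionModality} is applied to a nonempty family of valuations and no optimal subtrees need regrafting. The problem is in the inductive step.

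There you commute $\sigma$ with $t$ meets indexed by $\runtree_n(\gamma,y_0),\dots,\runtree_n(\gamma,y_{t-1})$. This is justified in two cases:
\begin{itemize}
\item all these sets are nonempty: apply Def.~\ref{def:transtionModality} to the product family on $\{0,\dots,t-1\}$;
\item all are empty: this gives $\sigma(a,\top,\dots,\top)=\top$, which is all your base case needs.
\end{itemize}
But for $t\ge 2$ with only \emph{some} of them empty, you need $\sigma(a,\dots,\top,\dots)=\top$ with the other arguments arbitrary. You assert this (``whenever some $B_i$ is empty''), but it does not follow, since Def.~\ref{def:transtionModality} with $D=\emptyset$ says nothing about mixed arguments. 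Lem.~\ref{lem:transition-modality-VxXt}, which you cite, has the same hole: its family $D$ is empty as soon as one $B_j$ is, so $(\bigsqcap D)(j')=\top\ne\bigsqcap B_{j'}$ for the other indices.

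In fact the statement is false as given. Take $t=2$, $\Omega=(\RRinfpos,\le,0)$ and $\sigma(a,b_0,b_1)=b_0\sqcap b_1$. This preserves all infima of valuations, including the empty one, so it is a transition modality. Let $X=\{x,y,z\}$ with $\gamma(x)=(\bff,\{(a,y,z)\})$, $\gamma(y)=(\bff,\emptyset)$, $\gamma(z)=(\btt,\emptyset)$. No run tree is rooted at $x$, because the child labelled $y$ can be neither a leaf nor an internal node. Hence $\bigsqcap_{T\in\runtree_n(\gamma,x)}\sigma(T)=\infty$ for every $n$. Yet $\Phi^2(\top)(x)=\Phi(\top)(y)\sqcap\Phi(\top)(z)=\infty\sqcap 0=0$, and $\Phi^2(\top)$ is already a fixed point, so $\gfp\Phi(x)=0$ as well.

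The missing ingredient is an extra hypothesis: $\sigma(a,b_0,\dots,b_{t-1})=\top$ whenever some $b_j=\top$, for $b_j\in\Omega^\sigma$.
\begin{itemize}
\item Expansiveness implies it, since $\top\in\Omega^\sigma_0$. This is condition~1 of Prop.~\ref{theorem:VxXt}, which is where the lemma is used; the general Lem.~\ref{lem:Phi-run-tree} does assume expansiveness.
\item Restricting to $\Omega^\sigma$ suffices: each $\sigma(T)$ lies in $\Omega^\sigma$ and each $\runtree_n(\gamma,y)$ is finite, so the meets you form are minima or $\top$.
\end{itemize}
With this hypothesis, split the inductive step into the case where some $\runtree_n(\gamma,y_i)$ is empty (both sides are $\top$) and the case where none is. The rest of your argument, including the limit step, then goes through.

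Separately, for $t=0$ the value $\sigma(a)$ is not arbitrary: Def.~\ref{def:transtionModality} with $D=\emptyset$ forces $\sigma(a)=\top$. That is why the base case holds there under the paper's reading, in which the one-node tree is always a leaf.
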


The necessary and sufficient condition on $\sigma$ for the correctness of Algo.~\ref{alg:coalgebraic-dijkstra} is given as follows.
\begin{proposition}\label{theorem:VxXt}
    For a transition modality $\sigma \colon V \times \Omega^t \to \Omega$, the following statements are equivalent:
    \begin{enumerate}
        \item\label{item:expansion-VxXt} $b_j \sqsubseteq \sigma(a, b)$, for each $a \in V$, $b = (b_0, \dots, b_{t-1}) \in (\Omega^{\sigma})^t$ and $j \in \{ 0, \dots, t-1\}$.
        \item\label{item:dijkstra-crrect-VxXt} For every weighted $G$-graph $\gamma \colon X \to \BB \times \powsetfin(V \times X^t)$ on a finite set $X$, $\textsc{CoalgDijkstra}_{(G,\sigma)}$ is correct.
    \end{enumerate}
\end{proposition}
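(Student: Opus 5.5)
Both directions of Prop.~\ref{theorem:VxXt} follow the pattern of Prop.~\ref{theorem:VxX}, with run paths replaced by run trees and Lem.~\ref{lem:runtree-Phi} used in place of Lem.~\ref{lem:runpath-Phi}.

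For (\ref{item:expansion-VxXt}$\Rightarrow$\ref{item:dijkstra-crrect-VxXt}), Lem.~\ref{lem:descending} gives $d \sqsupseteq \gfp\Bellman{G}{\sigma}{\gamma}$. Since every state is frozen at some iteration and is never updated afterwards, it suffices to prove by induction on $n$ that $d_n(y) \sqsubseteq \gfp\Bellman{G}{\sigma}{\gamma}(y)$ for all $y \in Y_n$. The base case is $Y_1$, the target states. These have $d=\finalweight$, and $\gfp \sqsubseteq \finalweight$ is impossible to beat from above, since $\Bellman{G}{\sigma}{\gamma}(\top)(y)\sqsubseteq\finalweight$ only bounds from above and $d_1(y)=\finalweight\sqsupseteq\gfp(y)$. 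The other inequality is immediate from the minimal-freezing step once we know $\gfp(y)\sqsupseteq$ all tree values.

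Rather than handling $Y_1$ separately, I would run a uniform contradiction argument. Suppose $y \in Y_{n+1}$ and $d_{n+1}(y) \sqsupset \gfp(y)$. By Lem.~\ref{lem:runtree-Phi} there is a run tree $T$ rooted at $y$ with $\sigma(T) \sqsubset d_{n+1}(y)$. Every node of $T$ carries a value in $\Omega^\sigma$. Condition~\ref{item:expansion-VxXt} then makes $\sigma$-values monotonically non-decreasing towards the root: $\sigma(T_\tau) \sqsubseteq \sigma(T)$ for every subtree $T_\tau$.

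Next I pick a node $\tau$ with $z = x_\tau \notin S_n$ all of whose children lie in $S_n$. To find one, walk down from the root, which is unfrozen, while some child is unfrozen. Leaves are targets, hence in $S_1 \subseteq S_n$, so the walk terminates; the root itself qualifies if all its children are frozen.

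Each child $w$ of $z$ was frozen at some iteration $i_w \le n$. By the induction hypothesis, $d_{i_w}(w) \sqsubseteq \gfp(w) \sqsubseteq \sigma(T_{\tau j})$, where the last inequality holds because $\gfp(w)$ is the infimum over run trees rooted at $w$.

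Let $k = \max_w i_w$. At iteration $k$, the state $z$ is a predecessor of $Y_k$ and is unfrozen, so it is updated. This gives
\[
d_{k+1}(z) \sqsubseteq \sigma\bigl(a_\tau, d_k(x_{\tau 0}),\dots,d_k(x_{\tau(t-1)})\bigr) \sqsubseteq \sigma(T_\tau) \sqsubseteq \sigma(T),
\]
using monotonicity of $\sigma$, which follows from the infimum-preservation in Lem.~\ref{lem:transition-modality-VxXt}. Minimal freezing together with the descent of Lem.~\ref{lem:descending} yields $d_{n+1}(y) \sqsubseteq d_{n+1}(z) \sqsubseteq d_{k+1}(z)$. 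Combining the two chains gives $d_{n+1}(y) \sqsubseteq \sigma(T) \sqsubset d_{n+1}(y)$, a contradiction.

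The subtle point, and the main obstacle, is the timing in this step. Once a child is frozen its $d$ value never changes, so $d_k(w) = d_{i_w}(w)$. But the Bellman update at step $k$ is applied to $d_k$ only if $z$ is a predecessor of the last-frozen child, and that is exactly the choice $k=\max_w i_w$.

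For (\ref{item:dijkstra-crrect-VxXt}$\Rightarrow$\ref{item:expansion-VxXt}), assume $b_j \sqsupset \sigma(a,b)$ for some $a$, $j$ and $b \in (\Omega^\sigma)^t$. Each $b_i$ is realised as the $\sigma$-value of a finite run tree: unfold the derivation of $b_i \in \Omega^\sigma_m$, using a target state for $\finalweight$ and a state with no transitions for $\top$. Build disjoint gadget graphs realising these values, so that the algorithm computes them exactly (these gadgets can be chosen acyclic).

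Add a fresh state $x$ with a single transition $(a, x_0, \dots, x_{t-1})$, where $x_i$ is the root of the $b_i$ gadget except that $x_j = x$ itself. Also add a direct transition $(a, u_0,\dots,u_{t-1})$ with $u_j$ the root of the $b_j$ gadget, so that $x$ initially receives value $\sigma(a,b)$ when the gadgets freeze. Alternatively, mimic the negative-edge example: give $x$ a transition realising $b_j$, so $x$ freezes at $b_j$, and also a self-loop transition with the other coordinates fixed at $b$, so that the true value is at most $\sigma(a,b)\sqsubset b_j$. Since $x$ is frozen before the self-loop can lower it, the algorithm returns $b_j \ne \gfp(x)$. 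The fiddly part is ensuring the gadgets freeze in the right order, which can be arranged via the freezing order of $\Omega^\sigma$-values.
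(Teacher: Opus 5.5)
\textbf{The decisive step of your $(2)\Rightarrow(1)$ fails, and it cannot be repaired.} Your second construction is exactly the paper's: the root $x$ of the $b_j$-gadget gets an extra transition $(a,x_0,\dots,x_{t-1})$ with $x_j=x$. Everything then rests on ``$x$ is frozen before the self-loop can lower it'', which is the paper's equally unproved claim that the algorithm returns $d(\tau)=\sigma(X_\tau,g_\tau)$. For $t\ge 2$ this is false. While $x$ is unfrozen, it is re-updated each time another coordinate $x_i$, $i\neq j$, of that transition freezes. The transition then contributes $\sigma(a,\dots,d(x),\dots)$ with the \emph{current} $d(x)$, possibly still $\top$, and this can already be strictly below $b_j$.

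No repair is possible, because the implication itself is false for $t\ge 2$. Take $V=\{\ast\}$, $t=2$, $\Omega=(\NNinf,\le,0)$ and $\sigma(\ast,b_0,b_1)=1+(b_0\sqcap b_1)$.
\begin{itemize}
\item It preserves all meets, including the empty one, so it is a transition modality, and $\Omega^{\sigma}=\NNinf$.
\item Condition $(1)$ fails at $b=(2,0)$, $j=0$, since $\sigma(\ast,2,0)=1\sqsubset 2$.
\item Yet for every $\gamma$, the Bellman operator and the sets $\predecessor(Y)$ coincide with those of the unweighted SPP on the graph with an edge $x\to y_i$ for each $(\ast,y_0,y_1)\in\pi_1(\gamma(x))$ and $i\in\{0,1\}$. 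So Algo.~\ref{alg:coalgebraic-dijkstra} runs exactly as breadth-first search and is always correct.
\end{itemize}
In your gadget for this witness, the second coordinate of the loop transition is a target. Already in the first iteration $x$ receives $1=\gfp\Bellman{G}{\sigma}{\gamma}(x)$, not $b_0=2$. Your construction, like the paper's, works only for $t=1$. There the loop contributes $\sigma(a,d(x))$, and $\sigma(a,\top)=\top$ is forced by preservation of the empty meet, so the loop cannot fire before $x$ is frozen. The same example also refutes the implication from correctness to expansiveness in Thm.~\ref{thm:correctness-G}.

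\textbf{Your $(1)\Rightarrow(2)$ is the paper's argument, but the base case is not established.} The steps match: an unfrozen run-tree node with frozen children, the last freezing time $k=\max_w i_w$, the selective update at iteration $k$, monotonicity of $\sigma$, then expansiveness up to the root. Choosing the node by walking down, and requiring only its children to be frozen, is if anything cleaner. However:
\begin{itemize}
\item Your remark that the base case follows once $\gfp\Bellman{G}{\sigma}{\gamma}(y)\sqsupseteq{}$ all tree values has the inequality backwards.
\item The ``uniform'' contradiction argument does not reach $y\in Y_1$. The walk-down needs the leaves to lie in $S_n$, which is unavailable for $n=0$.
\end{itemize}
What is needed is $\finalweight\sqsubseteq\sigma(T)$ for every run tree $T$. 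This follows from your own monotonicity-towards-the-root observation applied to a leaf; it is the paper's fact $\finalweight\sqsubseteq b$ for all $b\in\Omega^{\sigma}$. It is exactly where expansiveness is used for targets that have outgoing transitions.
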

%
\begin{proof}[Proof sketch]
The proof of (\ref{item:expansion-VxXt} $\implies$ \ref{item:dijkstra-crrect-VxXt}) in Prop.~\ref{theorem:VxXt} is the tree analog of the proof of Prop.~\ref{theorem:VxX}.
The main part of the argument is to prove, by induction on $n$, that
$d_{n}(y) \sqsubseteq \gfp\Bellman{G}{\sigma}{\gamma}(y)$ for every $y \in Y_{n}$ 

Suppose that there exists $y \in Y_{n+1}$ such that
$d_{n+1}(y) \sqsupset \gfp\Bellman{G}{\sigma}{\gamma}(y)$.
By Lemma~\ref{lem:runtree-Phi} there exists a run tree $T \in \runtree_n(\gamma, y)$ such that $\sigma(T) \sqsubset d_{n+1}(y)$.
\[
\begin{tikzpicture}{scale=0.8, transform shape}
    \draw (-1.5, 3.4) rectangle (3.2, 0.3);
    \draw (0, 3) rectangle (0.9, 2.1);
    \draw (1.1, 3) rectangle (3, 0.5);
    \node at (-1.2, 3.1) {$X$};
    \node at (-0.5, 2.5) {$Y_{n+1}$};
    \node at (2.7, 0.7) {$S_n$};
    \node[state] (y) at (0.5, 2.5) {$y$};
    \node[dot, right=0.5cm of y.center, anchor=center] (ya) {};
    \node[state] (z) at (0.5, 1) {$z$};
    \node[dot, right=0.5cm of z.center, anchor=center] (za) {};
    \node[state] (z0) at (1.5, 1.5) {$w_0$};
    \node[state] (z1) at (1.5, 0.8) {$w_1$};
    \node[state, accepting] (target) at (2.5, 2.5) {\phantom{$a$}};
    \path[->] (y) edge (ya);
    \path[->] (ya) edge[bend left=20]  (1.4, 2.8)
                   edge[bend left=5] (1.4, 2.4);
    \path[->, dashed, looseness=2] (1.4, 2.8) edge[out=30, in=110, decorate, segment length=4mm, decoration={snake, amplitude=0.3mm}] (target);
    \path[->, dashed, looseness=2] (1.4, 2.4) edge[out=-30, in=80, decorate, segment length=4mm, decoration={snake, amplitude=0.3mm}] (z);
    \path[->] (z) edge (za);
    \path[->, looseness=1] (za) edge[bend left=20] (z0)
                                edge[bend right=20] (z1);
    \path[->, dashed, looseness=2] (z0) edge[out=0, in=-120, decorate, segment length=4mm, decoration={snake, amplitude=0.3mm}] (target);
    \path[->, dashed, looseness=2] (z1) edge[out=10, in=-90, decorate, segment length=4mm, decoration={snake, amplitude=0.3mm}] (target);
\end{tikzpicture}
\]

From $T$, we extract a ``subtree'' $T' \in \runtree_i(\gamma, z)$ $(i \le n)$ rooted at a vertex $z \in X \setminus S_n$ such that every vertex of $T'$ except $z$ belongs to $S_n$.
By the condition~\ref{item:expansion-VxXt}, we have $\sigma(T') \sqsubseteq \sigma(T)$.
By an argument analogous to the proof of Prop.~\ref{theorem:VxX}, we obtain
\[\begin{aligned}
  d_{n+1}(y) 
  & \;\sqsubseteq\; d_{i+1}(z)
  \sqsubseteq \Bellman{G}{\sigma}{\gamma}(d_i)(z)
  \sqsubseteq \sigma(a, d_i(w_0), \dots, d_i(w_{t-1}))
  \\
  & \sqsubseteq \sigma(a, \gfp\Bellman{G}{\sigma}{\gamma}(w_0), \dots, \gfp\Bellman{G}{\sigma}{\gamma}(w_{t-1}))
  \quad \text{by the induction hypothesis for $i < n + 1$}
  \\
  &\sqsubseteq \sigma(T')
  \sqsubseteq \sigma(T) \sqsubset d_{n+1}(y),
\end{aligned}\]
where $w_0, \dots, w_{t-1}$ are the successors of $z$ in $T'$ and $a$ is the value of the edge from $z$ to its successors in $T'$.
This is a contradiction, and hence we have $d_{n+1}(y) \sqsubseteq \gfp\Bellman{G}{\sigma}{\gamma}(y)$ for every $y \in Y_{n+1}$.
\end{proof}

\begin{example}[shortest binary tree, continued from Ex.~\ref{example:shortest-bin-tree}]
    We have $\Omega^{\sigma} = \RRinfpos$.
    For every $a \in \RRpos$ and $b_0, b_1 \in \Omega^{\sigma}$, we have
    $b_i \sqsubseteq a + b_0 + b_1 = \sigma(a, b_0, b_1)$ for $i \in \{0,1\}$.
    Hence, the condition~\ref{item:expansion-VxXt} holds, and Algo.~\ref{alg:coalgebraic-dijkstra} returns $\gfp \Bellman{G}{\sigma}{\gamma}$.
\end{example}

\begin{example}[binary reachability game, continued from Ex.~\ref{example:bin-reachability-game}]
    We have $\Omega^{\sigma} = \{ 0, \infty \}$ and $b_i \le b_0 \sqcup b_1 = \sigma(b_0,b_1)$ for all $b_0, b_1 \in \Omega^{\sigma}$ and $i = 0, 1$.
    Hence, the condition~\ref{item:expansion-VxXt} holds, and Algo.~\ref{alg:coalgebraic-dijkstra} returns $\gfp \Bellman{G}{\sigma}{\gamma}$.
\end{example}

\subsection{The condition of transition modalities for $\distr$ and $\powsetfinne(V \times \blank)$}\label{sec:DP}
By an argument similar to the case $GX = V \times X^t$, we obtain the following results for the functors $GX = \distr X$ and $GX = \powsetfinne(V \times X)$.
Instead of providing individual proofs for these propositions, we generalize them and prove the unified version (Thm.~\ref{thm:correctness-G}).
\begin{proposition}\label{theorem:D}
    Let $G \colon \Sets \to \Sets$ be the distribution functor $G = \distr$ and $\sigma \colon \distr \Omega \to \Omega$ be a transition modality.
    The following statements are equivalent:
    \begin{enumerate}
        \item\label{item:expansion-D} $a \sqsubseteq \sigma(\mu)$ for each $\mu \in \distr\Omega^{\sigma}$ and $a \in \Omega^{\sigma}$ with $\mu(a) > 0$.
        \item\label{item:dijkstra-crrect-D} For every weighted $G$-graph $\gamma \colon X \to \BB \times \powsetfin(\distr X)$ on a finite set $X$, $\textsc{CoalgDijkstra}_{(G,\sigma)}$ is correct.
    \end{enumerate}
\end{proposition}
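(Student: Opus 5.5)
The plan is to obtain Prop.~\ref{theorem:D} as a direct instance of Thm.~\ref{thm:correctness-G}, rather than repeating the run-tree argument of Prop.~\ref{theorem:VxXt}. This requires three checks: the hypotheses of the theorem hold for $G=\distr$; the transition modality $\sigma\colon\distr\Omega\to\Omega$ is as required there; and the abstract expansiveness condition, specialised to $\distr$ with its canonical finite support, is literally condition~\ref{item:expansion-D}. Once these are done, the equivalence of \ref{item:expansion-D} and \ref{item:dijkstra-crrect-D} is exactly the equivalence of the two items of Thm.~\ref{thm:correctness-G}. The only nuance is that \ref{item:dijkstra-crrect-D} quantifies over finite $X$, whereas the theorem quantifies over all weighted $G$-graphs. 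I would handle this in one of two ways. One option is to note that the counterexample built in the proof of (\ref{item:correctness-G}$\Rightarrow$\ref{item:expansive-G}) has a finite state space. The other is to note that the algorithm is only meaningful (terminating, Prop.~\ref{prop:termination}) for finite $X$, so correctness is implicitly stated for finite $X$ throughout.

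\emph{Hypotheses on $\distr$.} By Lem.~\ref{lem:fin-supp-examples}, $\supp^{\distr}_X(\rho)=\{x\mid\rho(x)>0\}$ is a finite support for $\distr$. It is non-empty because $\sum_x\rho(x)=1$ forces some $\rho(x)>0$. For weak-pullback preservation, I would cite the standard fact that $\distr$ preserves weak pullbacks. If a self-contained argument is wanted, the route is as follows. Given $f\colon X\to Z$, $g\colon Y\to Z$ and $\mu\in\distr X$, $\nu\in\distr Y$ with $\distr f(\mu)=\distr g(\nu)$, construct a coupling $\rho$ on the pullback $\{(x,y)\mid f(x)=g(y)\}$. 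On each fibre over $z$ with positive mass $m_z$, set $\rho(x,y)=\mu(x)\nu(y)/m_z$. Its marginals are $\mu$ and $\nu$, and $\rho$ is finitely supported. This is the step I expect to need the most care, although it is classical.

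\emph{Specialising expansiveness.} By definition, $\sigma$ is expansive iff $b\sqsubseteq\sigma(t)$ for all $t\in\distr\Omega^{\sigma}$ and $b\in\supp^{\distr}_{\Omega^\sigma}(t)$. Here I regard $\distr\Omega^\sigma\subseteq\distr\Omega$ via the inclusion, which is harmless by naturality of $\supp$. With $t=\mu$, the condition $b\in\supp(\mu)$ means exactly $\mu(b)>0$ with $b\in\Omega^\sigma$, which is condition~\ref{item:expansion-D} verbatim.

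Having verified all of this, I would invoke Thm.~\ref{thm:correctness-G} to conclude. As a sanity check, I would note that the max-probabilistic-reachability modality $\sigma(\mu)=\sum_a\mu(a)a$ on $([0,1],\ge,1)$ violates the condition. Take $\mu=\tfrac12\delta_1+\tfrac12\delta_0$ with $0,1\in\Omega^\sigma_0$. Then $\sigma(\mu)=\tfrac12$, which is strictly below $1$ in the order $\ge$. This matches the ``No'' entry in Table~\ref{table:examples-problems}.
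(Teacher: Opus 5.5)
Your proposal is correct and follows the paper's route: the paper also gives no separate proof of Prop.~\ref{theorem:D} and obtains it as the instance $G=\distr$ of Thm.~\ref{thm:correctness-G}. Your explicit checks (weak-pullback preservation via the fibrewise product coupling, non-emptiness of $\supp^{\distr}$, and finiteness of the counterexample) supply details the paper leaves implicit. One slip in your sanity check: in $([0,1],\ge,1)$ the element $1$ is $\bot_\Omega$, so nothing lies strictly below it; the support point that actually violates expansiveness is $a=0=\top_\Omega$, since $\sigma(\mu)=\tfrac12\sqsubset 0$.
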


\begin{proposition}\label{theorem:P}
    Let $G \colon \Sets \to \Sets$ be a functor defined by $GX = \powsetfinne(V \times X)$ for some set $V$ and $\sigma \colon \powsetfinne(V \times \Omega) \to \Omega$ a transition modality.
    The following statements are equivalent:
    \begin{enumerate}
        \item\label{item:expansion-P} $b \sqsubseteq \sigma(B)$, for every $B \in \powsetfinne(V \times \Omega^{\sigma})$ and $b \in \Omega^{\sigma}$ such that $(a, b) \in B$, for some $a \in V$.
        \item\label{item:dijkstra-crrect-P} For every weighted $G$-graph $\gamma \colon X \to \BB \times \powsetfin(\powsetfinne(V \times X))$ on a finite set $X$, $\textsc{CoalgDijkstra}_{(G,\sigma)}$ is correct.
    \end{enumerate}
\end{proposition}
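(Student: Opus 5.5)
Both directions follow the templates of Prop.~\ref{theorem:VxX} and Prop.~\ref{theorem:VxXt}, lifted to an arbitrary $G$ via the finite support $\supp$. I will specialize the argument to $G=\powsetfinne(V\times\blank)$ and check that the general steps hold there.

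\emph{Run trees.} First I fix a notion of run tree for $GX=\powsetfinne(V\times X)$. An internal node labelled $x$ chooses some $B\in\pi_1(\gamma(x))$. Its children are the pairs $(a,y)\in B$, each child $y$ being the root of a subtree. The $\sigma$-value of such a node is $\sigma(\{(a,\sigma(T_{(a,y)}))\mid (a,y)\in B\})$; a leaf at a target gets value $\finalweight$.

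As in Lem.~\ref{lem:runtree-Phi}, I prove by induction on $n$ that $(\Bellman{G}{\sigma}{\gamma})^{n+1}(\top)(x)=\bigsqcap_{T\in\runtree_n(\gamma,x)}\sigma(T)$. The inductive step needs $\sigma\circ G(\blank)$ to commute with the infima over choices of subtrees. This is exactly the transition-modality condition of Def.~\ref{def:transtionModality}, applied on the finite set $X$. It follows that $\gfp\Bellman{G}{\sigma}{\gamma}$ is the infimum of $\sigma$-values of finite run trees, since $X$ is finite and the algorithm stabilizes. By Def.~\ref{def:Omega-sigma}, all these values lie in $\Omega^{\sigma}$.

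\emph{(1)$\Rightarrow$(2).} Lem.~\ref{lem:descending} gives $d\sqsupseteq\gfp\Bellman{G}{\sigma}{\gamma}$, so it suffices to prove by induction on $n$ that $d_n(y)\sqsubseteq\gfp\Bellman{G}{\sigma}{\gamma}(y)$ for $y\in Y_n$. Suppose instead that some $y\in Y_{n+1}$ has a run tree $T$ with $\sigma(T)\sqsubset d_{n+1}(y)$. Descend in $T$ to a node $z\notin S_n$ all of whose children lie in $S_n$; such a node exists because leaves are targets and targets lie in $S_1$. Let $T'$ be the subtree rooted at $z$.

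Expansiveness (1) gives $b\sqsubseteq\sigma(B)$ for each $(a,b)\in B$. Applied along the branch from $y$ to $z$, this yields $\sigma(T')\sqsubseteq\sigma(T)$; here the monotonicity of $\sigma$ follows from infimum preservation. Let $i$ be the iteration at which the last child of $z$ was frozen. Then $z$ was a predecessor of that $Y_i$ and was updated. This gives
\[
d_{n+1}(y)\sqsubseteq d_{n+1}(z)\sqsubseteq \Bellman{G}{\sigma}{\gamma}(d_i)(z)\sqsubseteq\sigma(G d_i(B)).
\]
The induction hypothesis on the children of $z$, which are frozen by stage $i$, gives $\sigma(G d_i(B))\sqsubseteq\sigma(T')$. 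Hence $d_{n+1}(y)\sqsubseteq\sigma(T)$, a contradiction.

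\emph{(2)$\Rightarrow$(1).} Suppose some $B\in\powsetfinne(V\times\Omega^{\sigma})$ and some $(a_0,b_0)\in B$ satisfy $\sigma(B)\sqsubset b_0$. Each element of $\Omega^{\sigma}$ is the value of a finite ``$\sigma$-term''. I realise these terms as a finite acyclic weighted $G$-graph, with one state per subterm, so that every state $x_c$ has $\gfp\Bellman{G}{\sigma}{\gamma}(x_c)=c$. Then I add a fresh state $x^\ast$ with $\gamma(x^\ast)=(\bff,\{B'\})$, where $B'$ replaces each value $b$ in $B$ by the state $x_b$, so that $\gfp\Bellman{G}{\sigma}{\gamma}(x^\ast)=\sigma(B)$. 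The realisation of $\top_\Omega$ needs care: I take a state with no successors and not a target, whose value is then $\top_\Omega$.

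\emph{Main obstacle.} The hardest step is to arrange the Dijkstra order so that $x^\ast$ is frozen before $x_{b_0}$ is. I would make $x^\ast$ a predecessor of some state frozen early, so that its value is already finalised below $b_0$. Concretely I would use the strict inequality $\sigma(B)\sqsubset b_0$: make $x_{b_0}$ reachable only through a chain whose values lie above $\sigma(B)$, or simply route its last edge through $x^\ast$. Failing this, I would follow the negative-edge example, adding a self-loop-like option $B$ at the state realising $b_0$ itself, so that the algorithm freezes that state at $b_0$ while the true value is at most $\sigma(B)\sqsubset b_0$.
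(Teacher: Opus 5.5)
\textbf{A genuine gap in (2)$\Rightarrow$(1), which cannot be closed as stated.} Your main plan, an acyclic realisation of the terms plus a fresh $x^\ast$, does not by itself produce an error. An error means some state is frozen strictly above its greatest fixed point, and nothing in that gadget forces this. Take $V=\RR$, $\Omega=(\RRpminf,\le,0)$, $\sigma(B)=\min_{(a,b)\in B}(a+b)$ and $B=\{(-1,1)\}$. With $x_0$ a target, $\gamma(x_1)=(\bff,\{\{(1,x_0)\}\})$ and $\gamma(x^\ast)=(\bff,\{\{(-1,x_1)\}\})$, the run freezes $x_0,x_1,x^\ast$ in turn at $0,1,0$, which is correct.

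Your fallback adds an option at $x_{b_0}$ pointing back to $x_{b_0}$; this is the paper's contraction coalgebra. Its decisive claim is that the run freezes $x_{b_0}$ at $b_0$, so that $\Bellman{G}{\sigma}{\gamma}(d)(x_{b_0})\sqsubseteq\sigma(B)\sqsubset b_0=d(x_{b_0})$. You leave this open, and the paper's proof of Thm.~\ref{thm:correctness-G} also only asserts it. For sets it can fail: the new option also mentions the other states of $B$, which may carry small values before $x_{b_0}$ is frozen.

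In fact, with Def.~\ref{def:transtionModality} as written, the implication is false. Take $V=\{\star\}$, $\Omega=(\{0,\infty\},\le,0)$ and $\sigma(B)=\bigsqcap\{b\mid(\star,b)\in B\}$.
\begin{itemize}
\item It is a transition modality, since $\sigma(G(\bigsqcap D)(C))=\bigsqcap_{(\star,y)\in C}\bigsqcap_{d\in D}d(y)=\bigsqcap_{d\in D}\sigma(Gd(C))$.
\item It is not expansive: $B=\{(\star,0),(\star,\infty)\}$ gives $\sigma(B)=0\sqsubset\infty$.
\item Yet for every $\gamma$, both $\Bellman{G}{\sigma}{\gamma}$ and $\predecessor$ coincide with those of plain reachability on the flattened graph, where $x\to y$ iff $(\star,y)\in C$ for some $C\in\pi_1(\gamma(x))$. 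So every run is a run of the reachability instance, which is correct by Prop.~\ref{theorem:VxX} (with $V=\{\star\}$, $\sigma(\star,b)=b$).
\end{itemize}
Concretely, with $x_0$ a target and $\gamma(x_\infty)=(\bff,\{\{(\star,x_0),(\star,x_\infty)\}\})$, the self-loop gadget yields $d(x_\infty)=0$, which is correct.

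The example exists because Def.~\ref{def:transtionModality} does not force $\sigma(B)=\top$ when $(a,\top)\in B$. The product form of Lem.~\ref{lem:transition-modality-VxXt} with one empty $B_j$ does force this. If you impose that requirement, this counterexample disappears, but you would still owe an actual analysis of the run on the gadget.

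\textbf{(1)$\Rightarrow$(2).} This is the paper's argument for Thm.~\ref{thm:correctness-G}, and it goes through after three small repairs.
\begin{itemize}
\item The base case $d_1=\finalweight$ on targets needs $\finalweight\sqsubseteq\sigma(T)$ for every run tree. This follows by chaining expansiveness up from a leaf (cf.\ Lem.~\ref{lem:omega-min}).
\item The identity $\gfp\Bellman{G}{\sigma}{\gamma}=\bigsqcap_n(\Bellman{G}{\sigma}{\gamma})^n(\top)$ is what lets you pick a run tree with $\sigma(T)\sqsubset d_{n+1}(y)$. It does not follow from termination of the algorithm, since the Kleene chain need not stabilise at a finite stage. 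It holds because $\Bellman{G}{\sigma}{\gamma}$ preserves infima of descending chains by Def.~\ref{def:transtionModality} (Lem.~\ref{lem:gfp-run-tree}).
\item In the fold step, if some child has no run tree of height $\le n$, the family of choices is empty. You then need $\sigma(B)=\top$ whenever $(a,\top)\in B$, which is exactly expansiveness at $b=\top$. Also, with children indexed by pairs $(a,y)$, the infimum preservation must be applied on the finite set of pairs rather than on $X$.
\end{itemize}
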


\begin{example}[maximum probabilistic reachability, continued from Ex.~\ref{example:max-prob-reachability}.]
Algo.~\ref{alg:coalgebraic-dijkstra}  cannot solve the maximum probabilistic reachability problem.
We show this by constructing a distribution $\mu \in \distr \Omega^{\sigma}$ and an element $a \in \Omega^{\sigma}$ such that $\mu(a) > 0$ and $a \sqsupset \sigma(\mu)$, which is the negation of the condition~\ref{item:expansion-D} in Prop.~\ref{theorem:D}.
We have $\Omega^{\sigma} = [0,1]$.
Let $a = 0$ and let $\mu \in \distr [0,1]$ be the distribution defined by $\mu(0) = 1/2$ and $\mu(1) = 1/2$.
Then $\mu(a) = 1/2 > 0$ and $\sigma(\mu) = 0 \times \mu(0) + 1 \times \mu(1) = 1/2 \sqsubset 0 = a$.
From this inequality, we define a weighted $G$-graph $\gamma \colon \{0, 1\} \to \BB \times \powsetfin(\distr\{0, 1\})$ by $\gamma(0) = (\btt, \emptyset)$ and $\gamma(1) = (\bff, {\lambda x \in \{0, 1\}. 1/2})$:
\begin{tikzpicture}[node distance=1.5cm, auto, baseline=0cm]
\node  {$0$};
\node[state] (1) [left=of 0.center, anchor=center] {$1$};
\node[dot, right=0.7cm of 1.center, anchor=center] (1a) {};
\path (1) edge (1a);
\path[->] (1a) edge[bend right=40] node[above] {\tiny $1/2$} (1);
\path[->] (1a) edge[bend left=30] node[above] {\tiny $1/2$} (0);
\end{tikzpicture}.
For this weighted $G$-graph $\gamma$, Algo.~\ref{alg:coalgebraic-dijkstra} returns $d \colon \{0, 1\} \to [0,1]$ with $d(0) = 1$ and $d(1) = 1/2$.
However, the greatest fixed point $\gfp \Bellman{G}{\sigma}{\gamma}$ satisfies $\gfp\Bellman{G}{\sigma}{\gamma}(0) = 1$ and $\gfp \Phi(1) = 1$.
\end{example}
\begin{example}[the reachability game, continued from Ex.~\ref{example:reachability-game-dynamic-game}]
    We have $\Omega^{\sigma} = \{0, \infty \}$ and $b \le \max_{b' \in B}(b') = \sigma(B)$ for every $B \in \powsetfinne(\Omega^{\sigma})$ and $b \in \Omega^{\sigma}$ such that $b \in B$.
    Hence, the condition~\ref{item:expansion-P} holds, and Algo.~\ref{alg:coalgebraic-dijkstra} returns the greatest fixed point $\gfp \Bellman{G}{\sigma}{\gamma}$.
\end{example}


\section{Proofs for \S\ref{sec:coalg-dijkstra-algorithm} and for \S\ref{sec:conditions-specific-functors}}

\begin{proof}[Proof for Prop.~\ref{prop:termination}]
    In each iteration of the main loop, $Y$ is not the empty set.
    Therefore, the size of $S$ strictly increases in each iteration.
    Since $X$ is a finite set and $S \subseteq X$, $S$ reaches $X$ in a finite number of iterations.
\end{proof}

\begin{proof}[Proof for Lem.~\ref{lem:transition-modality-OmegaxX}]
    Suppose that $\sigma$ is a transition modality.
    By the definition of transition modalities (Definition~\ref{def:transtionModality}), we have
    \[
    \sigma\left(a, \left(\bigsqcap_{d \in D} d\right)(x)\right)
    =
    \bigsqcap_{d \in D} \sigma(a, d(x))
    \]
    for each $a \in V$, finite set $X$ and $D \subseteq [X, \Omega]$.
    Given arbitrary $a \in V$ and $B \subseteq \Omega$,
    we define $\tilde{B} = \{ \tilde{b} \in [\{ \star \}, \Omega] \mid b \in B \}$ where $\tilde{b}(\star) = b$ and we have
    \[
    \sigma\left(a, \bigsqcap_{b \in B} b\right)
    =
    \sigma\left(a, \left(\bigsqcap_{\tilde{b} \in \tilde{B}} \tilde{b}\right)(\star)\right)
    =
    \bigsqcap_{\tilde{b} \in \tilde{B}} \sigma(a, \tilde{b}(\star))
    =
    \bigsqcap_{b \in B} \sigma(a, b).
    \]

    Conversely, suppose that $\sigma(a, \bigsqcap B) = \bigsqcap_{b \in B} \sigma(a,b)$ for each $a \in V$ and $B \subseteq \Omega$ holds for any $a \in V$ and $B \subseteq \Omega$.
    For any finite set $X$, $D \subseteq [X, \Omega]$, $x \in X$ and $a \in V$,
    we define $B = \{ d(x) \mid d \in D \}$ and we have
    \[
    \begin{aligned}
        \sigma\left(a, \left(\bigsqcap_{d \in D} d\right)(x)\right)
        & = \sigma\left(a, \bigsqcap_{d \in D} d(x)\right)
        \\
        & = \sigma\left(a, \bigsqcap_{b \in B} b\right)
        \\
        & = \bigsqcap_{b \in B} \sigma(a, b)
        \\
        & = \bigsqcap_{d \in D} \sigma(a, d(x)).
    \end{aligned}
    \]
    Hence, $\sigma$ is a transition modality.
\end{proof}

\begin{proof}[Proof for Lem.~\ref{lem:runpath-Phi}]
    We prove the equation by induction on $n$.
    We write $\Phi$ for $\Bellman{G}{\sigma}{\gamma}$.
    
    If $n = 0$, we have
    \[
    \begin{aligned}
        & \Phi^1(\top_{[X, \Omega]})(x) \\
        & = \begin{cases}
            \finalweight \sqcap \bigsqcap_{(a, y) \in A} \sigma(a, \top_{[X,\Omega]}(y)) & (\gamma(x) = (\btt, A)) \\
            \bigsqcap_{(a, y) \in A} \sigma(a, \top_{[X,\Omega]}(y)) & (\gamma(x) = (\bff, A))
        \end{cases}
        \\
        & = \begin{cases}
            \finalweight \sqcap \bigsqcap_{(a, y) \in A} \top_{\Omega} & (\gamma(x) = (\btt, A)) \\
            \bigsqcap_{(a, y) \in A} \top_{\Omega} & (\gamma(x) = (\bff, A))
        \end{cases}
        & \text{by $\top = \sigma(a, \top)$}
        \\
        & = \begin{cases}
            \finalweight & (\gamma(x) = (\btt, A)) \\
            \top_{\Omega} & (\gamma(x) = (\bff, A))
        \end{cases}
        \\
        & = \bigsqcap_{p \in \runpath_0(\gamma,x)} \sigma(p).
    \end{aligned}
    \]

    For $n + 1$, we have
    \[
    \begin{aligned}
        & \Phi^{n + 2}(\top_{[X, \Omega]})(x)
        \\
        & = \begin{cases}
            \finalweight \sqcap \bigsqcap_{(a, y) \in A} \sigma(a, \Phi^{n + 1}(\top)(y))& (\gamma(x) = (\btt, A)) \\
            \bigsqcap_{(a, y) \in A} \sigma(a, \Phi^{n + 1}(\top)(y)) & (\gamma(x) = (\bff, A))
        \end{cases}
        \\
        & = \begin{cases}
            \finalweight \sqcap \bigsqcap_{(a, y) \in A} \sigma(a, \bigsqcap_{p \in \runpath_{n}(\gamma,y)} \sigma(p)) & (\gamma(x) = (\btt, A)) \\
            \bigsqcap_{(a, y) \in A} \sigma(a, \bigsqcap_{p \in \runpath_{n}(\gamma,y)} \sigma(p)) & (\gamma(x) = (\bff, A))
        \end{cases}
        & \text{by the induction hypothesis}
        \\
        & = \begin{cases}
            \finalweight \sqcap \bigsqcap_{(a, y) \in A} \bigsqcap_{p \in \runpath_{n}(\gamma,y)} \sigma(a, \sigma(p)) & (\gamma(x) = (\btt, A)) \\
            \bigsqcap_{(a, y) \in A} \bigsqcap_{p \in \runpath_{n}(\gamma,y)} \sigma(a, \sigma(p)) & (\gamma(x) = (\bff, A))
        \end{cases}
        & \text{\begin{minipage}{3.5cm}since $\sigma$ is a transition modality and Lem.~\ref{lem:transition-modality-OmegaxX}\end{minipage}}
        \\
        & = \begin{cases}
            \finalweight \sqcap \bigsqcap_{p \in \runpath_{n+1}(\gamma,x)} \sigma(p) & (\gamma(x) = (\btt, A)) \\
            \bigsqcap_{p \in \runpath_{n+1}(\gamma,x)} \sigma(p) & (\gamma(x) = (\bff, A))
        \end{cases}
        \\
        & = \bigsqcap_{p \in \runpath_{n+1}(\gamma,x)} \sigma(p).
    \end{aligned}
    \]
\end{proof}

\begin{proof}[Proof for Prop.~\ref{theorem:VxX}]
    (\ref{item:expansion} $\implies$ \ref{item:dijkstra-crrect-OmegaxX})
    By induction on $n$ and the assumption that $b \sqsubseteq \sigma(a, b)$ for every $a \in V$ and $b \in \Omega^{\sigma}$, we can show $\finalweight \sqsubseteq b$ for every $b \in \Omega^{\sigma}_n$.

    By induction on $n$, we prove that $d_n(x) = \bigsqcap_{m=0}^{\infty} \bigsqcap_{p \in \runpath_m(\gamma,x)}\sigma(p)$ holds for any $x \in S_n$ and $n = 1, 2, \dots$.

    \textbf{Base case ($n = 1$).}
    We have $S_1 = Y_1 = \{ x \in X \mid \pi_0(\gamma(x)) = \btt \}$
    and $d_1 = \lambda x . \begin{cases}
        \finalweight & (x \in S_1) \\
        \top_{\Omega} & (x \not \in S_1)
    \end{cases}$.
    We have
    \[ d_1(x)
    = \finalweight = \Phi(\top)(x)
    = \bigsqcap_{p \in \runpath_0(\gamma,y)} \sigma(p)
    = \bigsqcap_{m = 0}^{\infty} \bigsqcap_{p \in \runpath_m(\gamma,y)} \sigma(p)
    \]
    for $x \in S_1$ since $\sigma(p) \in \Omega^{\sigma}$ and $\finalweight \sqsubseteq \sigma(p)$.

    \textbf{Inductive case.} We have
    $d_{n+1}(x) = \begin{cases}
        d_n(x) & (x \in S_n) \\
        \Phi(d_n)(x) & (x \in P_{n+1} \setminus S_n) \\
        d_n(x) & (x \in X \setminus (P_{n+1} \cup S_n)
    \end{cases}$ and
    $Y_{n+1} = \{ y \in X \setminus S_n \mid d_{n + 1}(y) = \bigsqcap_{z \in X \setminus S_n} d_{n + 1}(z) \}$
    where $P_{n+1} = \{ x \in X \mid \exists (a, y) \in \pi_1(\gamma(x)). \ y \in Y_n\}$.
    By Lem.~\ref{lem:runpath-Phi}, we have $\Phi^{n + 1}(\top)(y) \sqsupseteq \bigsqcap_{m = 0}^{\infty} \bigsqcap_{p \in \runpath_m(\gamma,y)} \sigma(p)$.
    By Lem.~\ref{lem:partial-Kleene-chain}, we have
    \[ d_{n + 1}(y) \sqsupseteq \Phi^{n + 1} (\top) (y) \sqsupseteq \bigsqcap_{m = 0}^{\infty} \bigsqcap_{p \in \runpath_m(\gamma,y)} \sigma(p). \]
    
    It suffices to show $d_{n + 1}(y) \sqsubseteq \bigsqcap_{m = 0}^{\infty} \bigsqcap_{p \in \runpath_m(\gamma,y)} \sigma(p)$ for $y \in Y_{n + 1}$.
    Suppose that $d_{n + 1}(y) \sqsupset \bigsqcap_{m = 0}^{\infty} \bigsqcap_{p \in \runpath_m(\gamma,y)} \sigma(p)$ holds for some $y \in Y_{n + 1}$.
    There exist a natural number $m$ and a run path $p = ((x_l, a_l))_{l = 0}^m$ such that
    $x_m = y$ and 
    \begin{equation}\label{eq:sigma-p-<-dy}
        \sigma(p) \sqsubset d_{n + 1}(y).
    \end{equation} 
    We take the minimum index $k \in \{ 1, \dots, n \}$ such that $x_k \in X \setminus S_n$.
    Since $p$ is a run path, $(a_k, x_{k - 1}) \in \pi_1(\gamma(x_k))$.
    We take the minimum index $i \in \{1, \dots, n \}$ such that $x_{k - 1} \in S_i$.
    This $i$ exists because $x_{k-1} \in S_n$.
    By the minimality of $i$, we have $x_{k - 1} \in Y_i$.
    We have $d_{i + 1}(x_k) = \Phi(d_i)(x_k)$ because $x_k \in P_{i + 1} = \{ x \in X \mid \exists (a, z) \in \pi_1(\gamma(x)).\ z \in Y_i \}$.
    We have
    \[
    \begin{aligned}
        & d_{n + 1}(y)
        \\
        & = \bigsqcap_{z \in X \setminus S_n} d_{n + 1}(z)
        & \text{by $y \in Y_{n + 1}$}
        \\
        & \sqsubseteq d_{n + 1}(x_k)
        & \text{by $x_k \in X \setminus S_n$.}
        \\
        & \sqsubseteq d_{i + 1}(x_k)
        & \text{by $d_{n + 1} \sqsubseteq d_{i + 1}$}
        \\
        & = \Phi(d_i)(x_{k})
        \\
        & = \bigsqcap_{(a,z) \in \pi_1(\gamma(x_k))} \sigma(a, d_i(z))
        \\
        & \sqsubseteq \sigma(a_k, d_i(x_{k-1}))
        & \text{by $(a_k, x_{k-1}) \in \pi_1(\gamma(x_k))$}
        \\
        & \sqsubseteq \sigma\left(a_k,
        \bigsqcap_{m = 0}^{\infty} \bigsqcap_{q \in \runpath_m(\gamma,x_{k-1})} \sigma(q)
        \right)
        & \text{by $x_{k-1} \in S_i$ and the induction hypothesis}
        \\
        & \sqsubseteq \sigma\left(a_k, \sigma((x_l, a_l)_{l = 0}^{k-1}) \right)
        \\
        & = \sigma((x_l, a_l)_{l = 0}^{k})
        \\
        & \sqsubseteq \sigma((x_l, a_l)_{l = 0}^{k+1})
        & \text{by \ref{item:expansion}}
        \\
        & \sqsubseteq \dots \sqsubseteq \sigma((x_l, a_l)_{l = 0}^{m}) = \sigma(p)
        & \text{by \ref{item:expansion}}
        \\
        & \sqsubset d_{n + 1}(y)
        & \text{by \eqref{eq:sigma-p-<-dy}.}
    \end{aligned}
    \]
    This is a contradiction.
    Thus, we have $d_{n + 1}(y) \sqsubseteq \bigsqcap_{m=0}^{\infty}\bigsqcap_{p \in \runpath_m(\gamma,y)} \sigma(p)$.

    By the above argument, Algo.~\ref{alg:coalgebraic-dijkstra} returns $d \colon X \to \Omega$ such that
    \[ d(x) = \bigsqcap_{m=0}^{\infty}\bigsqcap_{p \in \runpath_m(\gamma,x)} \sigma(p) \]
    for $x \in S = X$.
    The map $d$ is the greatest fixed point $\gfp \Phi$ because
    \begin{equation*}
    \begin{aligned}
        & \Phi\left( \bigsqcap_{m=0}^{\infty}\bigsqcap_{p \in \runpath_m(\gamma,\blank)} \sigma(p) \right)(x) \\
        & = \begin{cases}
            \finalweight \sqcap \bigsqcap_{(a, y) \in A} \sigma(a, \bigsqcap_{m=0}^{\infty}\bigsqcap_{p \in \runpath_m(\gamma,y)} \sigma(p))& (\gamma(x) = (\btt, A)) \\
            \bigsqcap_{(a, y) \in A} \sigma(a, \bigsqcap_{m=0}^{\infty}\bigsqcap_{p \in \runpath_m(\gamma,y)} \sigma(p)) & (\gamma(x) = (\bff, A))
        \end{cases}
        \\
        & = \begin{cases}
            \finalweight \sqcap \bigsqcap_{m=0}^{\infty}\bigsqcap_{p \in \runpath_m(\gamma,y)}\bigsqcap_{(a, y) \in A} \sigma(a, \sigma(p)) & (\gamma(x) = (\btt, A)) \\
            \bigsqcap_{m=0}^{\infty}\bigsqcap_{p \in \runpath_m(\gamma,y)}\bigsqcap_{(a, y) \in A} \sigma(a, \sigma(p)) & (\gamma(x) = (\bff, A))
        \end{cases}
        \\
        & = \bigsqcap_{m=0}^{\infty}\bigsqcap_{p \in \runpath_m(\gamma,x)} \sigma(p).
    \end{aligned}
    \end{equation*}

    (\ref{item:dijkstra-crrect-OmegaxX} $\implies$ \ref{item:expansion})
    Suppose that $b \sqsupset \sigma(a, b)$ holds for some $b \in \Omega^{\sigma}$ and $a \in V$.
    Since $b \in \Omega^{\sigma}_m$ for some $m$, there is a sequence $a_1, \dots, a_n \in V$ for some $n \le m$ such that
    \[
    b = \sigma(a_n, \sigma(a_{n-1}, \dots \sigma(a_1, \zeta) \dots)),
    \quad
    \zeta \in \Omega^{\sigma}_0 = \{ \finalweight, \top_\Omega \}
    \]
    and $\sigma(a_i, \dots \sigma(a_1, \zeta) \dots) \in \Omega^{\sigma}_i$ for each $i = \{0, 1, \dots, n\}$.
    
    If $\zeta = \top_\Omega$, we have $\sigma(a_1, \top) \sqsubset \top$ by the definition of $\Omega^{\sigma}_1$.
    We define a weighted $(V \times (\blank))$-graph $\gamma \colon \{ \star \} \to \BB \times \powsetfin(V \times \{ \star \})$ by $\gamma(\star) = (\bff, \{ (a_1, \star) \})$.
    Algo.~\ref{alg:coalgebraic-dijkstra} returns $\top_{[X,\Omega]}$.
    However, $\top_{[X,\Omega]}$ is not the greatest fixed point because
    we have $\gfp\Phi(\star) \sqsubseteq  \Phi(\top)(\star) = \sigma(a_1, \top_{\Omega}) \sqsubset \top_{\Omega} = \top_{[X,\Omega]}(\star)$.
    
    If $\zeta = \finalweight$, we define $X = \{ 0, 1, \dots, n\}$ and a weighted $(V\times(\blank))$-graph $\gamma \colon X \to \BB \times \powsetfin(V \times X)$ as follows:
    \begin{equation*}
        \begin{aligned}
            \gamma(0) & = (\btt, \emptyset) \\
            \gamma(i + 1) & = (\bff, \{ (a_{i + 1}, i) \}) \quad \text{for $i = 0, 1, \dots, n-2$} \\
            \gamma(n) & = (\bff, \{ (a_n, n - 1), (a, n) \}).
        \end{aligned}
    \end{equation*}
    \[
    \begin{tikzpicture}[node distance=1cm, auto, baseline=-1cm]
        \node[state,accepting](0) {$0$};
        \node[state] (1) [left=of 0.center, anchor=center] {$1$};
        \node[state] (2) [left=of 1.center, anchor=center] {$2$};
        \node (dots) [left=of 2.center, anchor=center] {$\cdots$};
        \node[state] (n_1) [left=of dots.center, anchor=center] {\phantom{$n$}};
        \node[state] (n) [left=of n_1.center, anchor=center] {$n$};
        \path[->] (1) edge node {$a_1$} (0);
        \path[->] (2) edge node {$a_2$} (1);
        \path[->] (dots) edge node {$a_3$} (2);
        \path[->] (n_1) edge node {$a_{n-1}$} (dots);
        \path[->] (n) edge node {$a_{n}$} (n_1);
        \path[->] (n) edge[loop] node[above] {$a$} ();
    \end{tikzpicture}
    \]
    Algo.~\ref{alg:coalgebraic-dijkstra} returns $d \colon X \to \Omega$ such that
    $d(i) = \sigma(a_i, \sigma(a_{i-1}, \dots, \sigma(a_1, \finalweight) \dots))$ for $i = 0, 1, \dots, n$.
    However, $d$ is not the greatest fixed point $\gfp \Phi$ because
    \[
    \begin{aligned}
        \Phi(d)(n)
        & = \bigsqcap_{(a',y) \in \pi_1(\gamma(n))} \sigma(a',d(y)) \\
        & = \sigma(a_n, d(n-1)) \sqcap \sigma(a, d(n)) \\
        & = b \sqcap \sigma(a, b) \\
        & \sqsubset b \\
        & = d(n).
    \end{aligned}
    \]
\end{proof}

\begin{proof}[Proof for Lem.~\ref{lem:transition-modality-VxXt}]
    Suppose that $\sigma$ is a transition modality.
    We define $X = \{ 0, \dots, t-1\}$
    and maps $d_{b_0, \dots, b_{t-1}} \colon X \to \Omega$ by $d_{b_0, \dots, b_{t-1}} (j) = b_j$ for $b_0, \dots, b_{t-1} \in \Omega$.
    For any $a \in V$ and $B_0, \dots, B_{t-1} \subseteq \Omega$,
    we define $D = \{ d_{b_0, \dots, b_{t-1}} \mid b_j \in B_j \text{ for each } j \in \{ 0, \dots, t-1 \}\}$.
    We have $(\bigsqcap D)(j) = \bigsqcap B_j$ and
    \[
    \begin{aligned}
        \sigma(a, \bigsqcap B_0, \dots, \bigsqcap B_{t-1})
        & = \sigma(a, (\bigsqcap D)(0), \dots, (\bigsqcap D)(t-1)) \\
        & = \sigma(G(\bigsqcap D)(a, 0, \dots, t-1)) \\
        & = \bigsqcap_{d \in D} \sigma(Gd(a, 0, \dots, t-1))\\
        & = \bigsqcap_{b_0 \in B_0} \dots \bigsqcap_{b_{t-1} \in B_{t-1}} \sigma(Gd_{b_0, \dots, b_{t-1}}(a, 0, \dots, t-1)) \\
        & = \bigsqcap_{b_0 \in B_0} \dots \bigsqcap_{b_{t-1} \in B_{t-1}} \sigma(a, b_0, \dots, b_{t-1}) \\
    \end{aligned}
    \]

    Conversely, suppose that
    \[ \sigma(a, \bigsqcap B_0, \dots, \bigsqcap B_{t-1}) = \bigsqcap_{b_0 \in B_0} \dots \bigsqcap_{b_{t-1} \in B_{t-1}} \sigma(a, b_0, \dots, b_{t-1}) \]
    holds for any $a \in V$ and $B_0, \dots, B_{t-1} \subseteq \Omega$.
    For any finite set $X$, a subset $D \subseteq [X, \Omega]$ and $(a, x_0, \dots, x_{t-1}) \in V \times X^t$, we have
    \[
    \begin{aligned}
        \sigma(G(\bigsqcap D)(a, x_0, \dots, x_{t-1}))
        & = \sigma(a, \bigsqcap_{d \in D} d(x_0), \dots, \bigsqcap_{d \in D} d(x_{t-1})) \\
        & = \bigsqcap_{d \in D} \sigma(a, d(x_0), \dots, d(x_{t-1})) \\
        & = \bigsqcap_{d \in D} \sigma(Gd(a, x_0, \dots, x_{t-1})) \\
        & = \left( \bigsqcap_{d \in D} \sigma \circ Gd \right) (a, x_0, \dots, x_{t-1}).
    \end{aligned}
    \]
    Hence, we have $\sigma \circ G(\bigsqcap D) =\bigsqcap_{d \in D} \sigma \circ Gd $ and $\sigma$ is a transition modality.
\end{proof}

\begin{proof}[Proof for Prop.~\ref{theorem:VxXt}]
    (\ref{item:expansion-VxXt} $\implies$ \ref{item:dijkstra-crrect-VxXt})
    By induction on $n$ and \ref{item:dijkstra-crrect-VxXt}, we can show $\finalweight \sqsubseteq b$ for every $b \in \Omega^{\sigma}_n$.

    By induction on $n$, we prove that
    \[ d_n(x) = \bigsqcap_{m=0}^{\infty} \bigsqcap_{T \in \runtree_m(\gamma,x)} \sigma(T)\]
    holds for every $x \in S_n$ and $n = 1, 2, \dots$.

    \textbf{Base case ($n = 1$).}
    We have
    \[ d_1(x) = \finalweight
    = \bigsqcap_{T \in \runtree_0(\gamma,x)} \sigma(T)
    = \bigsqcap_{m=0}^{\infty} \bigsqcap_{T \in \runtree_m(\gamma,x)} \sigma(T)\]
    for every $x \in S_1$ since $\finalweight \sqsubseteq \sigma(T)$ holds for every run tree $T$.
    
    \textbf{Inductive step.}
    For $n$, we have
    $d_{n+1}(x) = \begin{cases}
        d_n(x) & (x \in S_n) \\
        \Phi(d_n)(x) & (x \in P_{n+1} \setminus S_n) \\
        d_n(x) & (x \in X \setminus (P_{n+1} \cup S_n))
    \end{cases}$ and
    $Y_{n+1} = \{ y \in X \setminus S_n \mid d_{n + 1}(y) = \bigsqcap_{z \in X \setminus S_n} d_{n + 1}(z) \}$
    where $P_{n+1} = \{ x \in X \mid \exists (a, y) \in \pi_1(\gamma(x)). \ y \in Y_n\}$.
    By Lem.~\ref{lem:runtree-Phi}, we have $\Phi^{n + 1}(\top)(y) \sqsupseteq \bigsqcap_{m = 0}^{\infty} \bigsqcap_{T \in \runtree_m(\gamma,y)} \sigma(T)$ for $y \in X$.
    By Lem.~\ref{lem:partial-Kleene-chain}, for $y \in X$, we have
    \[ d_{n + 1}(y) \sqsupseteq \Phi^{n + 1} (\top) (y) \sqsupseteq \bigsqcap_{m = 0}^{\infty} \bigsqcap_{T \in \runtree_m(\gamma,y)} \sigma(T). \]
    
    It suffices to show $d_{n + 1}(y) \sqsubseteq \bigsqcap_{m = 0}^{\infty} \bigsqcap_{T \in \runtree_m(\gamma,y)} \sigma(T)$ for $y \in Y_{n + 1}$.
    Suppose that $d_{n + 1}(y) \sqsupset \bigsqcap_{m = 0}^{\infty} \bigsqcap_{T \in \runtree_m(\gamma,y)} \sigma(T)$ holds for some $y \in Y_{n + 1}$.
    There exist a natural number $m$ and a run tree $(T, f) \in \runtree_m(\gamma,y)$ such that
    \begin{equation}
        \sigma(T,f) \sqsubset d_{n+1}(y).
    \end{equation}
    Take an internal node $\kappa \in T$ such that
    $\pi_0(f(\kappa)) \in X \setminus S_n$ and
    $\pi_0(f(\kappa \tau)) \in S_n$ for any string $\tau$ such that $\kappa \tau \in T$.
    For each $j \in \{0, \dots, t-1\}$, we take the minimum index $i_j \in \{ 1, \dots, n\}$ such that $x_{\kappa j} \in S_{i_j}$.
    These $i_j$ exist because $x_{\kappa j} \in S_n$.
    For each $j$, by the minimality of $i_j$, we have $x_{\kappa j} \in Y_{i_j}$.
    For each $j$, we have $d_{i_j + 1}(x_{\kappa}) = \Phi(d_{i_j})(x_k)$ because $x_{\kappa} \in P_{i_j + 1} = \{ x \in X \mid \exists (a, (z_{j'})_{j'=0}^{t-1}) \in \pi_1(\gamma(x)).\ \exists j'. \ z_{j'} \in Y_{i_j}\}$.
    We define $i = \max_{j \in \{ 0, \dots, t-1\}} i_j$.
    We have $i_j \le i$ and $x_{\kappa j} \in S_{i_j} \subseteq S_i$ for each $j$.
    Hence, by the induction hypothesis, for each $j$,
    $d_i(x_{\kappa j}) = \bigsqcap_{m=0}^{\infty} \bigsqcap_{T' \in \runtree_m(\gamma, x_{\kappa j})} \sigma(T')$ holds.
    We have
    \begin{equation*}
    \begin{aligned}
        & d_{n + 1}(y)
        \\
        & = \bigsqcap_{z \in X \setminus S_n} d_{n + 1}(z)
        & \text{by $y \in Y_{n + 1}$}
        \\
        & \sqsubseteq d_{n + 1}(x_{\kappa})
        & \text{by $x_{\kappa} \in X \setminus S_n$}
        \\
        & \sqsubseteq d_{i + 1}(x_{\kappa})
        & \text{by $d_{n + 1} \sqsubseteq d_{i + 1}$}
        \\
        & = \Phi(d_i)(x_k)
        \\
        & = \bigsqcap_{\left(a, (z_j)_{j=0}^{t-1}\right) \in \pi_1(\gamma(x_{\kappa}))} \sigma(a, (d_i(z_j))_{j=0}^{t-1})
        \\
        & \sqsubseteq \sigma(a_{\kappa}, (d_i(x_{\kappa j}))_{j=0}^{t-1})
        & \text{by $(a_{\kappa}, (x_{\kappa j})_{j=0}^{t-1}) \in \pi_1(\gamma(x_{\kappa}))$}
        \\
        & = \sigma\left(a_{\kappa}, \left(\bigsqcap_{m=0}^{\infty} \bigsqcap_{T' \in \runtree_m(\gamma, x_{\kappa j})} \sigma(T')\right)_{j=0}^{t-1}\right)
        \\
        & \sqsubseteq \sigma(a_{\kappa}, (\sigma(T_{\kappa j}))_{j=0}^{t-1})
        & \text{by Lem.~\ref{lem:transition-modality-VxXt}}
        \\
        & = \sigma(T_{\kappa})
        \\
        & \sqsubseteq \sigma(T)
        & \text{by applying \ref{item:expansion-VxXt} recursively}
        \\
        & \sqsubset d_{n+1}(y).
    \end{aligned}
    \end{equation*}
    This is a contradiction.
    Thus, we have $d_{n + 1}(y) \sqsubseteq \bigsqcap_{m=0}^{\infty}\bigsqcap_{T \in \runtree_m(\gamma,y)} \sigma(T)$ for any $y \in Y_{n + 1}$.

    By the above argument, Algo.~\ref{alg:coalgebraic-dijkstra} returns $d \colon X \to \Omega$ such that
    \[ d(x) = \bigsqcap_{m=0}^{\infty}\bigsqcap_{T \in \runtree_m(\gamma,x)} \sigma(T) \]
    for $x \in S = X$.
    We can show that the map $d$ is the greatest fixed point $\gfp \Phi$.

    (\ref{item:dijkstra-crrect-VxXt} $\implies$ \ref{item:expansion-VxXt})
    Suppose that
    $b_j \sqsupset \sigma(a, b)$ holds for some $b = (b_0, \dots, b_{t-1}) \in (\Omega^{\sigma})^t$, $a \in V$ and $j \in \{ 0, \dots, t -1\}$.

    For a pair $(X,g)$ of a prefix closed subset $X \subseteq \{ 0, \dots, t-1\}^*$ and a map $g \colon X \to V + \Omega^{\sigma}_0$ such that
    \begin{itemize}
        \item for each leaf $\tau \in X$, $g(\tau) \in \Omega^{\sigma}_0 = \{ \finalweight, \top_{\Omega}\}$ and
        \item for each internal node $\tau \in X$, $g(\tau) \in V$ and $\tau i \in T$ for $i \in \{ 0, \dots, t - 1 \}$,
    \end{itemize}
    we define $\sigma(X,g) \in \Omega^{\sigma}$ by the following recursion:
    \begin{itemize}
        \item $\sigma(\{ \epsilon\}, g) = g(\epsilon) \in \Omega^{\sigma}_0$.
        \item $\sigma(X,g) = \sigma(g(\epsilon), \sigma(X_0, g_0), \dots, \sigma(X_{t-1}, g_{t-1}))$ where $X_{i} = \{ \tau \mid i \tau \in T \}$ and $g_{i}(\tau) = g(i\tau)$ for $i \in \{ 0, \dots, t-1\}$.
    \end{itemize}
    By the definition of $\Omega^{\sigma}$, there is a pair $(X,g)$ such that
    \begin{itemize}
        \item $(X,g)$ satisfies the above condition,
        \item $\sigma(X,g) = \sigma(a, b)$,
        \item $\sigma(X_i, g_i) = b_i$ for $i \in \{0, \dots, t-1\}$.
    \end{itemize}

    If there is a leaf $\tau i \in X$ such that $g(\tau i) = \top$,
    we have $\top \sqsupset \sigma(X_{\tau}, g_{\tau}) = \sigma(g(\tau), g(\tau 0), \dots, g(\tau (t-1)))$ by the definition of $\Omega^{\sigma}$.
    Hence, we replace $(X, g)$ by $(X_{\tau}, g_{\tau})$.
    
    We define $X' = X \setminus \{ \epsilon \}$ and a weighted graph $\gamma \colon X'  \to \BB \times \powsetfin(V \times X'^{t})$ as follows:
    \begin{itemize}
        \item For each leaf $\tau \in X'$, $\gamma(\tau) = (\btt, \emptyset)$.
        \item For each internal node $\tau \in X'$,
        \[ \gamma(\tau) = \begin{cases}
            (\bff, \{ (g(j), j 0, \dots, j (t-1) ), (a, 0, \dots, t-1)\}) & (\tau = j)\\
            (\bff, \{ (g(\tau), \tau 0, \dots, \tau (t-1) )\}) & \text{otherwise.}
        \end{cases} \]
    \end{itemize}
    \[
    \begin{tikzpicture}[node distance=1.5cm, auto]
        \node[state, minimum size=11pt] (0) {$0$};
        \node[dot] (0a) [below= 0.7cm of 0.center, anchor=center] {};
        \node (0l) [below left = 0.5cm of 0a.center, anchor=center] {};
        \node (0r) [below right= 0.5cm of 0a.center, anchor=center] {};
        \node[below= 0.5cm of 0a.center, anchor=center] {$\dots$};
        \node (dotsa) [right=of 0.center, anchor=center] {$\cdots$};
        \node[state] (jm1) [right=of dotsa.center, anchor=center] {$j-1$};
        \node[dot] (jm1a) [below= 0.7cm of jm1.center, anchor=center] {};
        \node (jm1l) [below left = 0.5cm of jm1a.center, anchor=center] {};
        \node (jm1r) [below right= 0.5cm of jm1a.center, anchor=center] {};
        \node[below= 0.5cm of jm1a.center, anchor=center] {$\dots$};
        \node[state] (j) [right=of jm1.center, anchor=center] {$j$};
        \node[dot] (ja) [below= 0.7cm of j.center, anchor=center] {};
        \node[dot] (jb) [above= 0.7cm of j.center, anchor=center] {};
        \node (jl) [below left = 0.5cm of ja.center, anchor=center] {};
        \node (jr) [below right= 0.5cm of ja.center, anchor=center] {};
        \node[below= 0.5cm of ja.center, anchor=center] {$\dots$};
        \node[state] (jp1) [right=of j.center, anchor=center] {$j + 1$};
        \node[dot] (jp1a) [below= 0.7cm of jp1.center, anchor=center] {};
        \node (jp1l) [below left = 0.5cm of jp1a.center, anchor=center] {};
        \node (jp1r) [below right= 0.5cm of jp1a.center, anchor=center] {};
        \node[below= 0.5cm of jp1a.center, anchor=center] {$\dots$};
        \node (dotsb) [right=of jp1.center, anchor=center] {$\cdots$};
        \node[state] (tm1) [right=of dotsb.center, anchor=center] {$t-1$};
        \node[dot] (tm1a) [below= 0.7cm of tm1.center, anchor=center] {};
        \node (tm1l) [below left = 0.5cm of tm1a.center, anchor=center] {};
        \node (tm1r) [below right= 0.5cm of tm1a.center, anchor=center] {};
        \node[below= 0.5cm of tm1a.center, anchor=center] {$\dots$};
        \path (0)   edge node[right] {$g(0)$}   (0a);
        \path (jm1) edge node[left]  {$g(j-1)$} (jm1a);
        \path (j)   edge node[right] {$g(j)$}   (ja);
        \path (jp1) edge node[right] {$g(j+1)$} (jp1a);
        \path (tm1) edge node[right] {$g(t-1)$} (tm1a);
        \path[->] (0a)   edge[bend right=20] (0l);
        \path[->] (0a)   edge[bend left=20]  (0r);
        \path[->] (jm1a) edge[bend right=20] (jm1l);
        \path[->] (jm1a) edge[bend left=20]  (jm1r);
        \path[->] (ja)   edge[bend right=20] (jl);
        \path[->] (ja)   edge[bend left=20]  (jr);
        \path[->] (jp1a) edge[bend right=20] (jp1l);
        \path[->] (jp1a) edge[bend left=20]  (jp1r);
        \path[->] (tm1a) edge[bend right=20] (tm1l);
        \path[->] (tm1a) edge[bend left=20]  (tm1r);
        \path (j) edge node[right] {$a$} (jb);
        \path[->] (jb)
            edge[bend right=20] (0)
            edge[bend right=20] (jm1)
            edge[bend right=30] (j)
            edge[bend left=20] (jp1)
            edge[bend left=20] (tm1);
    \end{tikzpicture}
    \]
    
    Algo.~\ref{alg:coalgebraic-dijkstra} returns $d \colon X' \to \Omega$ such that
    $d(\tau) = \sigma(X_\tau, g_\tau)$ for $\tau \in X'$.
    However, $d$ is not the greatest fixed point $\gfp \Phi$ because
    \[
    \begin{aligned}
        \Phi(d)(j)
        & = \bigsqcap_{(a',\tau_0, \dots, \tau_{t-1}) \in \pi_1(\gamma(j))} \sigma(a',d(\tau_0), \dots, d(\tau_{t-1})) \\
        & = \sigma(g(j), d(j0), \dots, d(j(t-1))) \sqcap \sigma(a, d(0), \dots, d(t-1)) \\
        & = b_j \sqcap \sigma(a, b_0, \dots, b_{t-1}) \\
        & \sqsubset b_j \\
        & = d(j).
    \end{aligned}
    \]
\end{proof}

\section{Proofs for \S\ref{subsec:fin-supp}}
\newcommand{\catc}{\mathbb{C}}
\newcommand{\catd}{\mathbb{D}}
The following classes of functors are well-studied in the literature; see e.g.~\cite{AdamekR94,Manes2002}. \todo{Add refs}
\begin{definition}[finitary functor \cite{AdamekR94}]\label{def:finitary-functor}
    Let $\catc$ and $\catd$ be categories.
    A functor $F \colon \catc \to \catd$ is a \emph{finitary functor} if it preserves filtered colimits.
\end{definition}
\begin{definition}[taut functor \cite{Manes2002}]\label{def:taut-functor}
    Let $\catc$ and $\catd$ be categories with pullbacks.
    A functor $F \colon \catc \to \catd$ is a \emph{taut functor} if it preserves pullbacks along monomorphisms.
    That is, for any pullback square along a monomorphism $m \colon B \to D$ in $\catc$ (left below),
    its image by $F$ is also a pullback square in $\catd$ (right below):
    \[
        \begin{tikzcd}
            A
            \ar[rd, phantom, "\pullback"]
            \ar[r]
            \ar[d, >->]
            &
            B
            \ar[d, >->, "m"]
            \\
            C
            \ar[r]
            &
            D
        \end{tikzcd}
        \quad \text{in $\catc$}
        \qquad
        \implies
        \qquad
        \begin{tikzcd}
            FA
            \ar[rd, phantom, "\pullback"]
            \ar[r]
            \ar[d]
            &
            FB
            \ar[d, "Fm"]
            \\
            FC
            \ar[r]
            &
            FD
        \end{tikzcd}
        \quad \text{in $\catd$}.
    \]
\end{definition}
\begin{proof}[Proof for Lem.~\ref{lem:finitary-taut-finitely-supp}]
    The proof is based on the argument by Manes \cite{Manes1998,Manes2002}.
    Let $F \colon \Sets \to \Sets$ be a finitary and taut functor.
    The existence of a map $\theta_X \colon GX \to \powsetfin(X)$ follows from the fact that $F$ is finitary.
    The naturality of $\theta$ follows from the fact that $F$ is taut.
\end{proof}
\begin{proposition}
\label{prop:fin-supp-but-not-taut}
    There is a finitely supported but not taut functor $G \colon \Sets \to \Sets$.
\end{proposition}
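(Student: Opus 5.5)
We plan to prove this by an explicit construction. The plan is to break tautness in the cheapest possible place: at the empty set, by making $G$ send the mono $\emptyset\to X$ to a non-injective map. Meanwhile we keep a natural finite support in the sense of Def.~\ref{def:finitelySupportedFunctor}. There, as usual, ``$t\in GS$'' for $S\subseteq X$ means that $t$ lies in the image of $G$ applied to the inclusion $S\hookrightarrow X$.

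\emph{The construction.} Define $G\colon\Sets\to\Sets$ by $GX = X+\{\ast\}$ for $X\neq\emptyset$ and $G\emptyset=\{\ast,\ast\ast\}$. On morphisms, for $f\colon X\to Y$ with $X\neq\emptyset$ we put $Gf = f+\idmor$. For the unique map $!_Y\colon\emptyset\to Y$ we let $G!_Y$ send both $\ast$ and $\ast\ast$ to $\ast$, except that $G!_\emptyset=\idmor$. Functoriality is immediate: the only nontrivial composites are $\emptyset\to X\to Y$, and both sides send everything to $\ast$ whenever $Y\neq\emptyset$.

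\emph{The support.} Put $\supp_X(x)=\{x\}$ for $x\in X$, $\supp_X(\ast)=\emptyset$, and $\supp_\emptyset(\ast\ast)=\emptyset$. Minimality is a case check. For $x\in X$, the element $x$ lies in the image of $GS\to GX$ exactly when $x\in S$. For $\ast\in GX$, it lies in the image of $G\emptyset\to GX$, so $\emptyset$ is the least such $S$.

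Naturality $\supp_Y(Gf(t)) = f[\supp_X(t)]$ is again a routine case check. The only delicate instance is $f=!_Y$ applied to $\ast\ast$: the left side is $\supp_Y(\ast)=\emptyset$, and the right side is $f[\emptyset]=\emptyset$. So $G$ is finitely supported.

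\emph{Failure of tautness.} The map $m=!_1\colon\emptyset\to1$ is a monomorphism, and the square with $\emptyset$ in all four corners except $1$ at the bottom right is a pullback of $m$ along itself. A taut functor would send it to a pullback. That forces $Gm$ to be monic, because the kernel pair of $Gm$ would be $G\emptyset$ with identity projections. But $Gm(\ast)=Gm(\ast\ast)=\ast$, so $Gm$ is not injective, and $G$ is not taut.

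The main point to be careful about is the reading of ``smallest $S$ with $t\in GS$'' when $G$ does not preserve monos. We adopt the image-of-$G(\text{inclusion})$ reading consistently, and with it the check above goes through. We note that this $\supp$ is not non-empty, but the proposition does not require that.
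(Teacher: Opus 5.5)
Your construction is correct and follows essentially the same route as the paper. The paper takes $G\emptyset=\{0,1\}$ and $GX=\{0\}$ for nonempty $X$, with $\theta$ constantly $\emptyset$, and witnesses non-tautness with exactly your square: the pullback of the mono $\emptyset\to 1$ along itself, whose image fails to be a pullback because $G(\emptyset\to 1)$ identifies the two points of $G\emptyset$. You graft the same defect onto $X+\{\ast\}$ instead of onto a constant functor, which costs a few more case checks. Your explicit handling of $\ast\ast$ and of the image-of-inclusion reading of ``$t\in GS$'' also makes precise two points the paper leaves implicit.
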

\begin{proof}
    We define a functor $G \colon \Sets \to \Sets$ by
    $G\emptyset = \{ 0, 1 \}$ and $GX = \{ 0 \}$ for every non-empty set $X$.
    For morphisms, we define
    \begin{itemize}
        \item $G(\emptyset \to \emptyset) = \idmor_{\{ 0, 1 \}} \colon \{ 0, 1 \} \to \{ 0, 1 \}$,
        \item $G(\emptyset \to X) \colon \{ 0, 1 \} \to \{ 0 \}$ is the unique map for every non-empty set $X$,
        \item $G(f \colon X \to Y) = \idmor_{\{ 0 \}} \colon \{ 0 \} \to \{ 0 \}$ for every morphism $f \colon X \to Y$ between non-empty sets.
    \end{itemize}
    We define a map $\theta_X \colon GX \to \powsetfin(X)$ by $\theta_X(0) = \emptyset$ for every set $X$.
    Then, $\theta$ is a natural transformation $G \to \powsetfin$ and a finite support for $G$.

    We show that $G$ is not taut.
    The following square is a pullback square along a monomorphism in $\Sets$:
    \[
    \begin{tikzcd}
        \emptyset
        \ar[rd, phantom, "\pullback"]
        \ar[r, >->]
        \ar[d, >->]
        &
        \emptyset
        \ar[d, >->]
        \\
        \emptyset
        \ar[r, >->]
        &
        \{ 0 \}.
    \end{tikzcd}
    \]
    However, its image by $G$ is not a pullback square in $\Sets$:
    \[
    \begin{tikzcd}
        \{ 0, 1 \}
        \ar[r, "\idmor"]
        \ar[d, "\idmor"]
        &
        \{ 0, 1 \}
        \ar[d]
        \\
        \{ 0, 1 \}
        \ar[r]
        &
        \{ 0 \}.
    \end{tikzcd}
    \]
    Therefore, $G$ is not taut.
\end{proof}
\begin{proof}[Proof for Lem.~\ref{lem:fin-supp-closed}]
    Let $\supp^{G_0} \colon G_0 \to \powsetfin$ and $\supp^{G_1} \colon G_1 \to \powsetfin$ be finite supports for $G_0$ and $G_1$, respectively.
    \begin{itemize}
        \item The finite support $\supp^{G_0 + G_1} \colon G_0 + G_1 \to \powsetfin$ of $G_0 + G_1$ is defined as
        \[ \supp^{G_0 + G_1}_X(z) = \begin{cases}
            \supp^{G_0}_X(x) & (z = \iota_0(x),\ x \in G_0 X) \\
            \supp^{G_1}_X(y) & (z = \iota_1(y),\ y \in G_1 X).
        \end{cases} \]
        \item The support $\supp^{G_0 \times G_1} \colon G_0 \times G_1 \to \powsetfin$ of $G_0 \times G_1$ is defined as
        \[ \supp^{G_0 \times G_1}_X(x,y) = \supp^{G_0}_X(x) \cup \supp^{G_1}_X(y). \]
        \item The support $\supp^{G_1 \circ G_0} \colon G_1 \circ G_0 \to \powsetfin$ of $G_1 \circ G_0$ is defined as
        \[ \supp^{G_1 \circ G_0}_X(t)
        = \mu(\powsetfin(\supp^{G_0}_X)(\supp^{G_1}_{G_0 X}(t)))
        = \bigcup_{x \in \supp^{G_1}_{G_0 X}(t)} \supp^{G_0}_X(x). \]
    \end{itemize}
    We can show that these are indeed finite supports by a straightforward calculation.
\end{proof}

To prove Thm.~\ref{thm:correctness-G}, we introduce some definitions and lemmas.
Let $\Omega$ be a pointed weight domain and $G \colon \Sets \to \Sets$ be a functor with a non-empty finite support $\supp \colon G \to \powsetfin$.

\begin{lemma}\label{lem:omega-min}
    If a $G$-algebra $\sigma \colon G\Omega \to \Omega$ is expansive,
    then $\finalweight \sqsubseteq b$ holds for any $b \in \Omega^\sigma$.
\end{lemma}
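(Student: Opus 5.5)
The plan is to prove the claim by induction on the stage $n$ in the construction of $\Omega^{\sigma} = \bigcup_n \Omega^{\sigma}_n$ (Def.~\ref{def:Omega-sigma}), showing that every $b \in \Omega^{\sigma}_n$ satisfies $\finalweight \sqsubseteq b$. The base case $n=0$ is immediate: $\Omega^{\sigma}_0 = \{\finalweight, \top_{\Omega}\}$, and we have both $\finalweight \sqsubseteq \finalweight$ and $\finalweight \sqsubseteq \top_{\Omega}$.

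For the inductive step, suppose every element of $\Omega^{\sigma}_n$ lies above $\finalweight$, and take a new element $b = \sigma(t)$ with $t \in G(\Omega^{\sigma}_n)$. We regard $t$ as an element of $G\Omega^{\sigma}$ by applying $G$ to the inclusion $\Omega^{\sigma}_n \hookrightarrow \Omega^{\sigma}$. Strictly, this is what allows $\sigma(t)$ to be read as $\sigma$ applied to an element of $G\Omega$, via the inclusion into $\Omega$. Since $\supp$ is non-empty, we may pick some $c \in \supp_{\Omega}(t)$.

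Two facts about $c$ are needed.
\begin{enumerate}
\item By naturality of $\supp$ along the inclusion $\iota \colon \Omega^{\sigma}_n \hookrightarrow \Omega$, we have $\supp_{\Omega}(G\iota(t)) = \iota[\supp_{\Omega^{\sigma}_n}(t)] \subseteq \Omega^{\sigma}_n$. Hence $c \in \Omega^{\sigma}_n$, and the induction hypothesis gives $\finalweight \sqsubseteq c$.
\item Expansiveness applies because $t \in G\Omega^{\sigma}$ and $c \in \supp(t)$, so $c \sqsubseteq \sigma(t) = b$.
\end{enumerate}
Transitivity then yields $\finalweight \sqsubseteq b$, which closes the induction.

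The only delicate point is the bookkeeping between $t$ as an element of $G(\Omega^{\sigma}_n)$, of $G\Omega^{\sigma}$, and of $G\Omega$, and the requirement that its support lands inside $\Omega^{\sigma}_n$. Naturality of $\supp$ with respect to inclusions handles this. Non-emptiness of $\supp$ is essential: without it, a constant-like $t$ with empty support could produce $\sigma(t) \sqsubset \finalweight$.
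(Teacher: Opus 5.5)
Your proof is correct and follows the paper's argument: induction on the stage $n$ of $\Omega^{\sigma}$, where non-emptiness of the support supplies a witness $c$, the induction hypothesis gives $\finalweight \sqsubseteq c$, and expansiveness gives $c \sqsubseteq \sigma(t)$. The paper asserts without justification that the support of $t$ lies inside $\Omega^{\sigma}_n$; you derive this from naturality of $\supp$ along the inclusion, which is a useful extra bit of care.
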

\begin{proof}
    Let $b \in \Omega^{\sigma}_n$.
    We prove the statement by induction on $n$.

    \textbf{Base case ($n = 0$).}
    In this case, $b = \finalweight$ or $b = \top_{\Omega}$ hold.
    Hence, we have $\finalweight \sqsubseteq b$.

    \textbf{Inductive step.}
    We have $b \in \Omega^{\sigma}_{n + 1} = \Omega^{\sigma}_n \cup \{ \sigma(t) \mid t \in G\Omega^{\sigma}_n \}$.
    If $b \in \Omega^{\sigma}_n$, then the statement holds by the induction hypothesis.
    We assume $b = \sigma(t)$ for some $t \in G \Omega^{\sigma}$.
    The set $\{ b_0, \dots, b_{k-1} \} = \supp(t) \subseteq \Omega^{\sigma}_n$ is not empty since $\supp$ is a non-empty support.
    By the induction hypothesis, we have $\finalweight \sqsubseteq b_0$.
    We have $b_0 \sqsubseteq \sigma(t) = b$ since $\sigma$ is expansive.
    Hence, we have $\finalweight \sqsubseteq b_0 \sqsubseteq b$.
\end{proof}

\begin{lemma}\label{lem:pred-supp}
    Let $F \colon \Sets \to \Sets$ be a finitely supported functor and $\gamma \colon X \to FX$ be a coalgebra.
    If $F \colon \Sets \to \Sets$ preserves weak pullbacks, then for any subset $Y \subseteq X$ it holds that
    \[
        \predecessor(Y) = \{ x \in X \mid Y \cap \supp^F_X(\gamma(x)) \ne \emptyset\}.
    \]
\end{lemma}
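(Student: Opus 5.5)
We prove the two inclusions separately. Recall that $x \in \predecessor(Y)$ iff $\gamma(x) \notin F(X \setminus Y)$. Here $F(X\setminus Y)$ is regarded as a subset of $FX$ via $F\iota$ for the inclusion $\iota \colon X \setminus Y \hookrightarrow X$. This identification needs $F\iota$ injective. That holds when $X \setminus Y \neq \emptyset$, since $\iota$ then has a retraction. When $X\setminus Y=\emptyset$, we read "$t \in F\emptyset$" as "$t$ lies in the image of $F(\emptyset\to X)$". So the claim reduces to showing, for every $t \in FX$:
\[
t \in F(X\setminus Y) \iff \supp^F_X(t) \subseteq X \setminus Y .
\]

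\textbf{Direction ($\Rightarrow$).} Suppose $t = F\iota(s)$ with $s \in F(X\setminus Y)$. Naturality of $\supp$ gives
\[
\supp_X(t) = \powsetfin(\iota)(\supp_{X\setminus Y}(s)) = \supp_{X\setminus Y}(s) \subseteq X\setminus Y .
\]
This direction uses only naturality.

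\textbf{Direction ($\Leftarrow$).} Suppose $\supp_X(t) \subseteq X\setminus Y$. By the defining property of a finite support, $t \in G S$ for $S = \supp_X(t)$. More precisely, $t = F j(s)$ for some $s \in FS$, where $j \colon S \hookrightarrow X$. Since $j$ factors as $S \hookrightarrow X\setminus Y \hookrightarrow X$, functoriality gives $t = F\iota(F k(s))$, where $k \colon S \hookrightarrow X \setminus Y$. Hence $t \in F(X\setminus Y)$.

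Combining the two directions: $x \notin \predecessor(Y)$ iff $\supp(\gamma(x)) \subseteq X\setminus Y$, iff $\supp(\gamma(x)) \cap Y = \emptyset$. This is exactly the stated equality.

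\textbf{Main obstacle and role of weak pullbacks.} The one delicate point is the meaning of "$t \in F S$" for subsets $S$, including $S=\emptyset$. If membership is understood as lying in the image of $F$ of the inclusion, the argument above does not use weak-pullback preservation at all. That hypothesis becomes relevant if membership is instead read via intersections. Suppose we know $t \in F(A)$ and $t\in F(B)$ and want $t \in F(A\cap B)$; here the intersection square $A\cap B \to A, B \to X$ is a pullback, and weak-pullback preservation gives the witness. This secures the minimality and uniqueness of the support, and the empty-set case, compatibly with the predecessor definition. I would invoke it precisely there, to justify that "smallest $S$ with $t\in FS$" behaves well with respect to the inclusion $X\setminus Y\subseteq X$.
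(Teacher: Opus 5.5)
Your proof is correct, but for one of the two inclusions it takes a different route from the paper's.

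\textbf{The inclusion where you differ.} Consider the implication $\gamma(x)\in F(X\setminus Y)\Rightarrow \supp_X(\gamma(x))\subseteq X\setminus Y$, which is the contrapositive of the paper's $\supseteq$ inclusion.
\begin{itemize}
\item The paper argues by contradiction. It applies $F$ to the pullback of $\supp(\gamma(x))\hookrightarrow X$ and $X\setminus Y\hookrightarrow X$. Weak-pullback preservation then gives $\gamma(x)\in F(\supp(\gamma(x))\setminus Y)$, which contradicts the minimality of the support.
\item You read the conclusion off directly from naturality of $\supp$ along $\iota\colon X\setminus Y\hookrightarrow X$.
\end{itemize}
The other inclusion, where you factor $S\hookrightarrow X$ through $X\setminus Y$, is the same as in the paper.

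\textbf{What each route buys.} The paper's definition of finite support builds in naturality. Your route therefore shows that the weak-pullback hypothesis is redundant for this lemma: naturality already encodes the intersection property that the paper extracts from weak pullbacks. Your route also handles $X\setminus Y=\emptyset$ cleanly through images, a case the paper leaves implicit. Conversely, the paper's route never uses naturality. It would apply to any weak-pullback-preserving functor that merely admits smallest supports.

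\textbf{Your last paragraph.} It is unnecessary and somewhat muddled. Existence, minimality and uniqueness of $\supp_X(t)$ are given by the definition, and no step of your argument needs weak-pullback preservation. You can simply state that this hypothesis is not used. (Minor slip: $t\in GS$ should read $t\in FS$.)
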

\begin{proof}
    Assume $x \in \predecessor(Y)$.
    We have $\gamma(x) \not\in F(X \setminus Y)$.
    If $\supp(\gamma(x)) \subseteq X \setminus Y$, then
    we have $\gamma(x) \in F(X \setminus Y)$ since
    \[
    \begin{tikzcd}[row sep=small]
        F(\supp(\gamma(x)))
        \ar[r, "F\iota"]
        &
        F(X \setminus Y)
        \\
        \gamma(x)
        \ar[r, mapsto]
        &
        \gamma(x).
    \end{tikzcd}
    \]
    This is a contradiction.
    Hence, we have $\supp(\gamma(x)) \not \subseteq X \setminus Y$, which means $Y \cap \supp(\gamma(x)) \ne \emptyset$.

    Conversely, assume that $x \in X$ satisfies $Y \cap \supp(\gamma(x)) \ne \emptyset$.
    The following diagram is a pullback in $\Sets$.
    \[
    \begin{tikzcd}
        \supp(\gamma(x)) \setminus Y
        \ar[r, hook]
        \ar[d, hook]
        \ar[rd, phantom, "\pullback"]
        &
        \supp(\gamma(x))
        \ar[d, hook]
        \\
        X \setminus Y
        \ar[r, hook]
        &
        X
    \end{tikzcd}
    \]
    Thus, the following diagram is a weak pullback in $\Sets$ since $F$ preserves pullbacks.
    \[
    \begin{tikzcd}
        F(\supp(\gamma(x)) \setminus Y)
        \ar[r, hook]
        \ar[d, hook]
        \ar[rd, phantom, "\weakpullback"]
        &
        F(\supp(\gamma(x)))
        \ar[d, hook]
        \\
        F(X \setminus Y)
        \ar[r, hook]
        &
        F(X)
    \end{tikzcd}
    \]
    If $\gamma(x) \in F(X \setminus Y)$, then we have $\gamma(x) \in F(\supp(x) \setminus Y)$ because $F(\supp(\gamma(x)) \setminus Y)$ is a weak pullback:
    \[
    \begin{tikzcd}[column sep=0.5cm]
        \{ \star \}
        \ar[rrd, bend left=10]
        \ar[rdd, bend right=20]
        \ar[rd, dashed]
        &
        &
        \\
        &
        F(\supp(\gamma(x)) \setminus Y)
        \ar[r, hook]
        \ar[d, hook]
        \ar[rd, phantom, "\weakpullback"]
        &
        F(\supp(\gamma(x)))
        \ar[d, hook]
        \\
        &
        F(X \setminus Y)
        \ar[r, hook]
        &
        F(X)
    \end{tikzcd}
    \begin{tikzcd}[column sep=0.5cm]
        \star
        \ar[rrd, mapsto, bend left=10]
        \ar[rdd, mapsto, bend right=20]
        \ar[rd, mapsto, dashed]
        &
        &
        \\
        &
       \gamma(x)
        \ar[r, mapsto]
        \ar[d, mapsto]
        \ar[rd, phantom, "\weakpullback"]
        &
        \gamma(x)
        \ar[d, mapsto]
        \\
        &
        \gamma(x)
        \ar[r, mapsto]
        &
        \gamma(x)
    \end{tikzcd}
    \]
    We have $\supp(\gamma(x)) \setminus Y \subsetneq \supp(\gamma(x))$ and $\gamma(x) \in F(\supp(\gamma(x)) \setminus Y)$.
    This contradicts to the minimality of $\supp(\gamma(x))$.
    Hence, we have $\gamma(x) \not\in F(X \setminus Y)$, which means $x \in \predecessor(Y)$.
\end{proof}

\begin{definition}[construction tree]
    Let $\sigma \colon G \Omega \to \Omega$ be a $G$-algebra on $\Omega$.
    A \emph{construction tree} of $b \in \Omega^{\sigma}$ is a finite prefix-closed set $\tree \in \NN^*$ together with a map $\ell \colon \tree \to G\Omega^{\sigma} + \Omega^{\sigma}_0$ that satisfies the following conditions:
    \begin{itemize}
        \item For $\tau \in \NN^*$ and $m \le n$, if $\tau n \in \tree$ holds, then $\tau m \in \tree$ holds.
        \item For leaf $\tau \in \tree$, $\ell(\tau) \in \Omega^{\sigma}_0$ holds.
        \item For each internal node $\tau$, the following condition holds:
        
        $\ell(\tau) \in G\Omega^\sigma$ and
        $\{ b_{\tau n} \mid \tau n \in \tree \} = \supp_{\Omega^\sigma}(\ell(\tau))$, where $b_{\tau n} = \begin{cases}
                \ell(\tau n) & \text{if }\ell(\tau n) \in \Omega^{\sigma}_0, \\
                \sigma(\ell(\tau n)) & \text{otherwise}.
            \end{cases}$
        \item $\ell(\epsilon) = b$ or $\sigma(\ell(\epsilon)) = b$ hold, where $\epsilon$ is the empty sequence.
    \end{itemize}
\end{definition}

\begin{lemma}
    Let $\sigma \colon G \Omega \to \Omega$ be a $G$-algebra on $\Omega$.
    For any $b \in \Omega^\sigma$, there exists a construction tree $\tree$ of $b$.
\end{lemma}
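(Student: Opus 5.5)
We prove the existence of a construction tree by induction on the least $n$ such that $b \in \Omega^{\sigma}_n$, following the recursive definition of $\Omega^{\sigma}$ (Def.~\ref{def:Omega-sigma}).

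\textbf{Base case ($n = 0$).} Here $b \in \Omega^{\sigma}_0 = \{\finalweight, \top_\Omega\}$. We take $\tree = \{\epsilon\}$ and $\ell(\epsilon) = b$. Then $\epsilon$ is a leaf with label in $\Omega^{\sigma}_0$, the ordering condition is vacuous, and $\ell(\epsilon) = b$, so all conditions hold.

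\textbf{Inductive step.} Suppose $b \in \Omega^{\sigma}_{n+1} \setminus \Omega^{\sigma}_n$. Then $b = \sigma(t)$ for some $t \in G(\Omega^{\sigma}_n)$.

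First we view $t$ as an element of $G\Omega^{\sigma}$, namely its image under $G$ of the inclusion $\Omega^{\sigma}_n \hookrightarrow \Omega^{\sigma}$. We then use naturality of $\supp$ with respect to this inclusion. It gives that $\supp_{\Omega^{\sigma}}(t)$ is the direct image of $\supp_{\Omega^{\sigma}_n}(t)$, so $\supp_{\Omega^{\sigma}}(t) \subseteq \Omega^{\sigma}_n$. This set is finite; enumerate it as $\{c_0, \dots, c_{k-1}\}$.

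Each $c_m$ lies in $\Omega^{\sigma}_n$, so the induction hypothesis gives a construction tree $(\tree_m, \ell_m)$ of $c_m$. We define
\[
\tree = \{\epsilon\} \cup \{ m\tau \mid m < k,\ \tau \in \tree_m \},
\]
with $\ell(\epsilon) = t$ and $\ell(m\tau) = \ell_m(\tau)$.

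\textbf{Verifying the conditions.}
\begin{itemize}
\item Prefix-closedness and the left-sibling condition hold at the root, since the children $0, \dots, k-1$ are consecutive. Inside each grafted subtree they are inherited from $\tree_m$.
\item Leaf and internal-node conditions inside the subtrees are inherited.
\item At the root, suppose $k > 0$, so that $\epsilon$ is internal. We have $\ell(\epsilon) = t \in G\Omega^{\sigma}$. The root condition of $(\tree_m, \ell_m)$ says that either $\ell_m(\epsilon) = c_m$ or $\sigma(\ell_m(\epsilon)) = c_m$. Hence $b_m = c_m$ under the case split defining $b_{\tau n}$, and therefore $\{ b_m \mid m < k \} = \supp_{\Omega^{\sigma}}(t)$.
\item The root condition for $\tree$ itself holds because $\sigma(\ell(\epsilon)) = \sigma(t) = b$.
\end{itemize}

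\textbf{The main subtlety.} The case split defining $b_{\tau n}$ could be ambiguous. A child labelled by an element of $\Omega^{\sigma}_0$ might also be read as an element of $G\Omega^{\sigma}$ if the disjoint-union tags are forgotten. We avoid this by keeping the labels in the coproduct $G\Omega^{\sigma} + \Omega^{\sigma}_0$ tagged.

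\textbf{The degenerate case $k = 0$.} This occurs when $\supp(t) = \emptyset$, which is excluded if $\supp$ is non-empty. Then the root is a leaf, so its label must lie in $\Omega^{\sigma}_0$, but $\ell(\epsilon) = t \in G\Omega^{\sigma}$ does not. We resolve this in the standing setting of this section, where $\supp$ is assumed non-empty, so $k \ge 1$ and the root is genuinely internal.

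Finally, finiteness of $\tree$ follows from finiteness of $\supp(t)$ and of each $\tree_m$.
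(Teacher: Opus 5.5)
Your proof is correct and follows the paper's argument. Both proceed by induction on the stage $n$ with $b \in \Omega^{\sigma}_n$, and both graft the construction trees of the elements of $\supp(t)$ below a root labelled $t$, where $b = \sigma(t)$. The points you make explicit are details the paper leaves implicit: deriving $\supp_{\Omega^{\sigma}}(t) \subseteq \Omega^{\sigma}_n$ from naturality, keeping the coproduct labels tagged, and ruling out $k = 0$ via the section's standing assumption that $\supp$ is non-empty.
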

\begin{proof}
    By the definition of $\Omega^{\sigma}$, we have $b \in \Omega^{\sigma}_n$ for some $n \in \NN$.
    We prove the statement by induction on $n$.

    \textbf{Base case ($n = 0$).}
    The tree defined as $\tree = (\{ \epsilon \}, \lambda \tau. b)$ is a construction tree of $b$.

    \textbf{Inductive step.}
    Suppose $b \in \Omega^{\sigma}_{n + 1} = \Omega^{\sigma}_n \cup \{ \sigma(t) \mid t \in G\Omega^{\sigma}_n \}$.
    If $b \in \Omega^{\sigma}_n$, then the statement holds immediately from the induction hypothesis.
    Hence, we assume that $b = \sigma(t)$ holds for some $t \in G\Omega^{\sigma}_n$.
    Let $\{ b_0, \dots, b_{k-1}\} = \supp(t) \subseteq \Omega^{\sigma}_n$.
    By the induction hypothesis, there is a construction tree $(\tree_j, \ell_j)$ of $b_j$ for each $j \in \{0, \dots, k-1\}$.
    We define $\tree = \{ \epsilon \} \cup {\bigcup_{j = 0}^{k-1} \{ j \tau \mid j \tau \in \tree_j \}}$ and $\ell \colon \tree \to G\Omega^{\sigma} + \Omega^{\sigma}_0$ by
    $\ell(\epsilon) = t$ and $\ell(j \tau) = \ell_j(\tau)$.
    Then, $(\tree, \ell)$ is a construction tree of $b$.
\end{proof}

\begin{lemma}\label{lem:expansive-compare-constr-tree-subtree}
    If a $G$-algebra $\sigma \colon G \Omega \to \Omega$ is expansive,
    then $b_j \sqsubseteq \sigma(\ell(\epsilon))$ holds for every construction tree $\tree = (\tree, \ell)$ and $j \in \tree$, where
    \[ b_j = \begin{cases}
        \ell(j) & \text{if $\ell(j) \in \Omega^{\sigma}_0$}, \\
        \sigma(\ell(j)) & \text{otherwise}.
    \end{cases} \]
\end{lemma}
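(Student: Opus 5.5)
The claim is essentially one step of expansiveness, so I would prove it directly from the definition rather than by induction. Fix a construction tree $(\tree,\ell)$ and a child index $j \in \tree$ of the root, i.e.\ a string of length one. First handle the degenerate case where the root is a leaf. Then $\tree = \{\epsilon\}$, so there is no $j \in \tree$ of length one, and the statement is vacuous. (If one reads $j$ as ranging over all nodes, the degenerate case is still handled this way, and deeper nodes are treated in the final paragraph.)

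So assume the root $\epsilon$ is an internal node. By the definition of a construction tree, $\ell(\epsilon) \in G\Omega^{\sigma}$ and
\[
\{ b_{n} \mid n \in \tree \} = \supp_{\Omega^{\sigma}}(\ell(\epsilon)),
\]
where the $b_n$ are defined exactly as in the statement (the case $\tau = \epsilon$ of the internal-node condition). Hence for our $j$ we have $b_j \in \supp_{\Omega^{\sigma}}(\ell(\epsilon))$. Set $t = \ell(\epsilon) \in G\Omega^{\sigma}$. Expansiveness of $\sigma$ states precisely that $b \sqsubseteq \sigma(t)$ for every $t \in G\Omega^{\sigma}$ and every $b \in \supp(t)$. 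Applying it gives $b_j \sqsubseteq \sigma(\ell(\epsilon))$.

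The only point needing care is a type subtlety: the support is taken at the set $\Omega^{\sigma}$, whereas $\sigma$ is defined on $G\Omega$. Expansiveness is stated for $t \in G\Omega^{\sigma}$ with $\supp(t)$ meaning $\supp_{\Omega^{\sigma}}(t)$, implicitly viewing $t$ in $G\Omega$ via $G$ of the inclusion. Naturality of $\supp$ shows that the support computed in $\Omega$ is the image of the support computed in $\Omega^{\sigma}$, so the two readings agree and $\sigma(t)$ is unambiguous.

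If the lemma is meant for arbitrary nodes $j \in \tree$ rather than only children of the root, I would add an induction on the length of $j$. For a node $\tau n$, applying the one-step argument above to the subtree rooted at $\tau$ gives $b_{\tau n} \sqsubseteq b_{\tau}$. Chaining these inequalities up to the root, using transitivity of $\sqsubseteq$, yields $b_j \sqsubseteq \sigma(\ell(\epsilon))$. I expect no real obstacle in either reading: the whole content is unfolding the internal-node condition into a support membership and then invoking expansiveness.
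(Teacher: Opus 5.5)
Your proof is correct and matches the paper's: the internal-node condition of the construction tree at $\epsilon$ puts $b_j$ in $\supp_{\Omega^{\sigma}}(\ell(\epsilon))$, and expansiveness then gives $b_j \sqsubseteq \sigma(\ell(\epsilon))$ directly. Your side remarks are sound but not needed for the reading the paper uses, where $j$ is a child of the root: the naturality of $\supp$ reconciles the support taken at $\Omega^{\sigma}$ with that at $\Omega$, and chaining $b_{\tau n} \sqsubseteq b_{\tau}$ covers deeper nodes.
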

\begin{proof}
    By the definition of construction trees, we have $b_j \in \supp(\ell(\epsilon))$ for each $j$.
    Hence, we have $b_j \sqsubseteq \sigma(\ell(\epsilon))$ by expansiveness of $\sigma$.
\end{proof}

\begin{definition}[run tree]\label{def:run-tree-generally}
    Let $\gamma \colon X \to \BB \times \powsetfin(GX)$ be a weighted $G$-graph.
    A \emph{run tree} of $\gamma$ is a pair $(\tree, \ell)$ of a prefix-closed set $\tree \subseteq \NN^*$ and a labelling function $\ell \colon \tree \to X \times (GX + \{ \star \})$ that satisfies the following conditions for each $(x_\tau, t_{\tau}) = \ell(\tau)$ with $\tau \in \tree$:
    \begin{itemize}
        \item For $\tau \in \NN^*$ and $m \le n$, if $\tau n \in \tree$ holds, then $\tau m \in \tree$ holds.
        \item If $\tau \in \tree$ is a leaf, then $t_{\tau} = \star$ and $\gamma(x_\tau) = (\btt, \emptyset)$ hold.
        \item If $\tau \in \tree$ is not a leaf, then $t_{\tau} \in GX$ and $\gamma(x_\tau) = (\bff, A)$ for some $A \subseteq GX$ such that $t_{\tau} \in A$ hold.
        Furthermore, $\{ x_{\tau n} \mid \tau n \in \tree \} = \supp_X(t_{\tau})$ holds.
    \end{itemize}
    The \emph{height} of a run tree $(\tree, \ell)$ is defined by $\height(\tree, \ell) = \max \{ |\tau| \mid \tau \in \tree \}$.
    We write $\runtree_m(\gamma, x)$ for the set of run trees $\tree = (\tree, \ell)$ of $\gamma$ such that $\height(\tree) \le m$ and $x_{\epsilon} = x$.
\end{definition}

\begin{definition}[$\sigma$-value]\label{def:sigma-value-of-run-tree}
    Let $\gamma \colon X \to \BB \times \powsetfin(GX)$ be a weighted $G$-graph,
    $\sigma \colon G\Omega \to \Omega$ be a $G$-algebra on $\Omega$, and $x \in X$.
    The $\sigma$-value $\sigma(\tree) \in \Omega^\sigma$ of a run tree $\tree \in \runtree_m(\gamma, x)$ is defined recursively as follows:
    \begin{itemize}
        \item If $\tree = (\{ \epsilon \}, \ell)$ is a leaf, then $\sigma(\tree) = \finalweight$.
        \item If $\tree$ is not a single leaf, then $\sigma(\tree) = \sigma(G\phi(t_{\epsilon}))$ where $\phi \colon \supp_X(t_{\epsilon}) \to \Omega^{\sigma}$ is defined as follows:
        Let $x_j = \pi_0(\ell(j))$ for each child $j$ of $\epsilon$ and $\tree_j = (\tree_j, \ell_j)$ be the subtree of $\tree$ rooted at $j$.
        We define $\phi(x_j) = \sigma(\tree_j)$ for each child $j$ of $\tau$.
        \[
        \begin{tikzcd}[row sep=small]
            G\supp(t_{\epsilon}) = G\{ x_0, \dots, x_{k-1}\}
            \ar[r, "G\phi"]
            &
            G \Omega^\sigma
            \ar[r, "\sigma"]
            &
            \Omega^\sigma
            \\
            t_{\epsilon}
            \ar[rr, mapsto]
            &
            &
            \sigma(G\phi(t_{\epsilon}))
            = \sigma(\tree)
        \end{tikzcd}
        \]
    \end{itemize}
\end{definition}

\begin{lemma}\label{lem:expansive-run-tree}
    Let $\sigma \colon G\Omega \to \Omega$ be an expansive transition modality,
    $X$ be a finite set,
    $\gamma \colon X \to \BB \times \powsetfin(GX)$ be a coalgebra.
    For every run tree $\tree = (\tree, \ell) \in \runtree_m(\gamma, x)$ and any child $j \in \tree$ of $\epsilon$,
    we have $\sigma(\tree_{j}) \sqsubseteq \sigma(\tree)$.
\end{lemma}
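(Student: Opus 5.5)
The plan is to unfold the definition of the $\sigma$-value of $\tree$ (Def.~\ref{def:sigma-value-of-run-tree}) and apply expansiveness directly. Since $j$ is a child of $\epsilon$, the tree $\tree$ is not a single leaf. So $\sigma(\tree) = \sigma(G\phi(t_{\epsilon}))$, where $\phi \colon \supp_X(t_\epsilon) \to \Omega^{\sigma}$ sends $x_j$ to $\sigma(\tree_j)$. Here $\tree_j$ is the subtree rooted at $j$; it is again a run tree, of height at most $m-1$, and by an easy induction on height its $\sigma$-value lies in $\Omega^{\sigma}$. I will also check that $\phi$ is well defined, i.e.\ two children with the same label $x_{j}=x_{j'}$ get the same value. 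If this fails, I will instead read the definition as choosing one subtree per element of $\supp_X(t_\epsilon)$ (the same element of $X$ labels children whose subtrees could differ), fix that choice, and take $j$ to be the chosen child.

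Put $t := (G\phi)(t_\epsilon) \in G\Omega^{\sigma}$. Expansiveness gives $b \sqsubseteq \sigma(t)$ for every $b \in \supp_{\Omega^\sigma}(t)$, so it suffices to show $\sigma(\tree_j) \in \supp_{\Omega^{\sigma}}(t)$. By naturality of the finite support $\supp$,
\[
\supp_{\Omega^\sigma}\bigl((G\phi)(t_\epsilon)\bigr) = \powsetfin(\phi)\bigl(\supp_{\supp_X(t_\epsilon)}(t_\epsilon)\bigr),
\]
where $t_\epsilon$ is regarded as an element of $G(\supp_X(t_\epsilon))$.

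The main obstacle is the remaining step: the support of $t_\epsilon$ computed inside $G(\supp_X(t_\epsilon))$ is all of $\supp_X(t_\epsilon)$. Again by naturality, applied to the inclusion $\iota \colon \supp_X(t_\epsilon) \hookrightarrow X$, we get $\supp_X(t_\epsilon) = \iota[\supp_{\supp_X(t_\epsilon)}(t_\epsilon)]$. Since $\iota$ is injective, the inner support equals $\supp_X(t_\epsilon)$ as a set. (Minimality in Def.~\ref{def:finitelySupportedFunctor} gives the same conclusion.) Consequently $\supp_{\Omega^\sigma}(t) = \phi[\supp_X(t_\epsilon)] \ni \phi(x_j) = \sigma(\tree_j)$, and expansiveness yields $\sigma(\tree_j) \sqsubseteq \sigma(\tree)$.

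Weak-pullback preservation and non-emptiness of the support are not needed here; only naturality of $\supp$, the run-tree condition $\{x_{n} \mid n \in \tree\} = \supp_X(t_\epsilon)$, and expansiveness are used.
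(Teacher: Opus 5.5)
Your proof is correct and is essentially the paper's own argument. The paper also applies naturality of $\supp$ to $\phi$ (its square \eqref{eq:nat-supp}) to get $\supp_{\Omega^\sigma}\bigl(G\phi(t_\epsilon)\bigr) = \{\sigma(\tree_j) \mid j \text{ a child of } \epsilon\}$, and then uses expansiveness at the root; it only packages this by turning the whole run tree into a construction tree $(\tree,\ell')$ and citing Lem.~\ref{lem:expansive-compare-constr-tree-subtree}. Two of your steps fill in points the paper leaves implicit: your explicit check that $\supp_{\supp_X(t_\epsilon)}(t_\epsilon) = \supp_X(t_\epsilon)$, and your remark that sibling labels must be distinct for $\phi$ to be well defined (the paper tacitly assumes this, cf.\ Lem.~\ref{lem:fold-one-step}).
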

\begin{proof}
    Let $(x_\tau, t_{\tau}) = \ell(\tau)$ for each $\tau \in \tree$.
    We define $\ell' \colon \tree \to G \Omega^{\sigma} + \Omega^{\sigma}_0$ by
    \[\ell'(\tau) = \begin{cases}
        \finalweight & \text{if $\tau$ is a leaf,} \\
        G\phi_{\tau}(t_{\tau}) & \text{if $\tau$ is an internal node}
    \end{cases}\]
    where $\phi_\tau \colon \{ x_{\tau 0}, \dots, x_{\tau (k -1)}\} \to \Omega^{\sigma}$ is defined by $\phi_{\tau}(x_{\tau i}) = \sigma(\tree_{\tau j})$ for each child $\tau j$ of $\tau$.
    For each internal node $\tau$, the following diagram commutes, since $\supp$ is a natural transformation:
    \begin{equation}\label{eq:nat-supp}
        \begin{tikzcd}
            G \supp(t_{\tau})
            \ar[r, "G\phi_{\tau}"]
            \ar[d, "\supp_{\supp(t_{\tau})}"']
            &
            G \Omega^{\sigma}
            \ar[d, "\supp_{\Omega^{\sigma}}"]
            \\
            \powsetfin(\supp(t_{\tau}))
            \ar[r, "\powsetfin(\phi_{\tau})"']
            &
            \powsetfin(\Omega^{\sigma}).
        \end{tikzcd}
    \end{equation}
    Then, $(\tree, \ell')$ is a construction tree because
    \begin{itemize}
        \item for each leaf $\tau \in \tree$, $\ell'(\tau) = \finalweight \in O$ holds, and
        \item for each internal node $\tau$, we have
        \[\begin{aligned}
            \supp(\ell'(\tau))
            & = \supp(G\phi_{\tau}(t_{\tau}))
            \\
            & = \powsetfin(\phi_{\tau})(\supp(t_{\tau}))
            & \text{by \eqref{eq:nat-supp}}
            \\
            & = \powsetfin(\phi_{\tau})(\{ x_{\tau j} \mid \tau j \in \tree \})
            & \text{by Definition~\ref{def:run-tree-generally}}
            \\
            & = \{ \phi_{\tau}(x_{\tau j}) \mid \tau j \in \tree \}
            \\
            & = \{ \sigma(\tree_{\tau j}) \mid \tau j \in \tree \}
            \\
            & = \{ b_{\tau j} \mid \tau j \in \tree \}
        \end{aligned}\]
        where $b_{\tau j} = \begin{cases}
            \ell'(\tau j) = \finalweight& \text{if $\tau j$ is a leaf,} \\
            \sigma(\ell'(\tau j)) = \sigma(G\phi(t_{\tau j}))& \text{otherwise}.
        \end{cases}$
    \end{itemize}
    Therefore, by the expansiveness of $\sigma$ and Lem.~\ref{lem:expansive-compare-constr-tree-subtree}, we have
    $\sigma(\tree_{j})
    = b_{j}
    \sqsubseteq \sigma(\ell'(\epsilon))
    = \sigma(\tree)$.
\end{proof}

\begin{lemma}\label{lem:fold-one-step}
    Let $\sigma \colon G\Omega \to \Omega$ be a transition modality,
    $X$ be a finite set,
    $\gamma \colon X \to \BB \times \powsetfin(GX)$ be a weighted $G$-graph.
    For each $x \in X$ and $n \in \NN$, 
    \[
        \bigsqcap_{t \in \pi_1(\gamma(x))} \sigma\left(G\left(\bigsqcap_{\tree \in \runtree_n(\gamma, \blank)} \sigma(\tree) \right)(t)\right)
        =
        \bigsqcap_{\tree \in \runtree_{n + 1}(x)} \sigma(\tree)
    \]
\end{lemma}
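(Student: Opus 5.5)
The plan is to prove the equation by decomposing a run tree of height at most $n+1$ into its root label and its immediate subtrees, and then to trade the infimum over subtrees for an infimum of valuations, using the defining property of a transition modality (Def.~\ref{def:transtionModality}). Write $v_n(y) := \bigsqcap_{\tree \in \runtree_n(\gamma, y)} \sigma(\tree)$, so the left-hand side is $\bigsqcap_{t \in \pi_1(\gamma(x))} \sigma((Gv_n)(t))$. First I would dispose of the degenerate root. A single-node tree lies in $\runtree_{n+1}(\gamma,x)$ only when $\gamma(x) = (\btt,\emptyset)$, and then $\pi_1(\gamma(x))$ is empty. Outside this case, every tree on the right has an internal root with some label $t \in \pi_1(\gamma(x))$. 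So I would read the right-hand side as ranging over trees with internal root and group it by the root label $t$. It then suffices to prove, for each fixed $t \in \pi_1(\gamma(x))$,
\[
\sigma\bigl((Gv_n)(t)\bigr) \;=\; \bigsqcap\bigl\{\sigma(\tree) \bigm| \tree \in \runtree_{n+1}(\gamma,x),\ t_\epsilon = t\bigr\}.
\]

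Second, I would describe the trees with root label $t$ via choice functions. Let $S = \supp_X(t)$, which is finite. By Def.~\ref{def:run-tree-generally}, the children of the root are labelled exactly by the elements of $S$, and each child subtree lies in $\runtree_n(\gamma, y)$. Conversely, any choice $c$ assigning to each $y \in S$ a tree $c(y) \in \runtree_n(\gamma,y)$ can be assembled, after ordering $S$ and relabelling the children, into a tree in $\runtree_{n+1}(\gamma,x)$ with root label $t$. By Def.~\ref{def:sigma-value-of-run-tree}, the assembled tree has $\sigma$-value $\sigma((G\phi_c)(t'))$, where $\phi_c(y) = \sigma(c(y))$, the element $t' \in GS$ satisfies $G\iota(t') = t$, and $\iota \colon S \hookrightarrow X$ is the inclusion. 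Such a $t'$ exists by the defining property of $\supp$ in Def.~\ref{def:finitelySupportedFunctor}, and since $G\iota$ is injective ($S \neq \emptyset$) it is unique. Similarly, $(Gv_n)(t) = (G(v_n \circ \iota))(t')$. Hence the claim reduces to
\[
\sigma\Bigl(G\bigl(\textstyle\bigsqcap_{c} \phi_c\bigr)(t')\Bigr) = \textstyle\bigsqcap_c \sigma\bigl((G\phi_c)(t')\bigr), \qquad\text{where } \textstyle\bigsqcap_c \phi_c = v_n\circ\iota .
\]
The pointwise identity $\bigsqcap_c \phi_c = v_n \circ \iota$ holds because the choices at different $y$ are independent, so the infimum over the product family splits coordinatewise. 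The displayed equality is then exactly the infimum-preservation of $\sigma \circ G(\blank)$ on the finite set $S$, applied to $D = \{\phi_c\}_c \subseteq [S,\Omega]$.

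The step I expect to be the main obstacle is the case where $\runtree_n(\gamma,y) = \emptyset$ for some $y \in S$. Then $v_n(y) = \top_\Omega$, there are no choice functions, and $D = \emptyset$, so $\bigsqcap D$ is $\top$ on all of $S$ rather than $v_n\circ\iota$. What is needed here is that $\sigma((G(v_n\circ\iota))(t')) = \top_\Omega$, so that both sides are $\top$. The first thing I would try is to apply preservation of infima to a family indexed by the coordinate $y$. Take $D_y = \{\phi \in [S,\Omega] \mid \phi(y) \in \emptyset\} = \emptyset$, and compare it with the family in which all other coordinates are fixed to $v_n$, using monotonicity. The aim is to show that a single $\top$ in the support forces $\sigma(\cdot) = \top$, as in the case $B_0 = \emptyset$ of Lem.~\ref{lem:transition-modality-VxXt}. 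If this absorption property cannot be derived in full generality, I would add it as an explicit side lemma, namely $\top \in \supp(s) \Rightarrow \sigma(s) = \top$ for $s \in G\Omega$, and check that it holds for transition modalities.
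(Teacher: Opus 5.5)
Your route is the paper's: group $\runtree_{n+1}(\gamma,x)$ by root label $t$, identify the trees with root label $t$ with choices of subtrees indexed by $\supp_X(t)$, and pull the infimum through $\sigma\circ G(\blank)$. You are more careful than the paper about $t$ versus $t'\in GS$. The obstacle you flag is a genuine gap, and the paper's proof has it too: its middle step (infimum preservation, then identifying the infimum of the product family with $v_n$) silently skips the empty family.

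The gap cannot be closed from the stated hypotheses. Absorption does not follow from Def.~\ref{def:transtionModality}, and without it the lemma is false. Take $G=(\blank)^2$, $\Omega=(\RRinfpos,\le,0)$ and $\sigma(a_0,a_1)=a_0\sqcap a_1$. Meets commute with meets and $\top\sqcap\top=\top$, so $\sigma\circ G(\blank)$ preserves all infima, the empty one included; yet $\sigma(\top,0)=0$. Now let $X=\{x,y,z\}$ with $\gamma(z)=(\btt,\emptyset)$, $\gamma(y)=(\bff,\emptyset)$, $\gamma(x)=(\bff,\{(y,z)\})$. Then $\runtree_n(\gamma,y)=\emptyset$, so the left-hand side at $x$ is $\sigma(\top,0)=0$. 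But $\runtree_{n+1}(\gamma,x)=\emptyset$, so the right-hand side is $\top$.

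Your first attempt therefore cannot succeed. $D_y=\emptyset$ only yields $\sigma\circ G(\top)=\top$, and monotonicity then gives $\sigma((Gv_n)(t))\sqsubseteq\top$, which is the useless direction. Nor can you appeal to the case $B_0=\emptyset$ of Lem.~\ref{lem:transition-modality-VxXt}. Its proof uses $(\bigsqcap D)(j)=\bigsqcap B_j$ for every $j$, which fails as soon as one $B_{j'}$ is empty (then $D=\emptyset$), and the same $\sqcap$ example violates its condition. This case is not marginal: already at $n=0$ it arises whenever $\supp_X(t)$ contains some $y$ with $\gamma(y)\neq(\btt,\emptyset)$.

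The missing ingredient is expansiveness, which is assumed wherever the lemma is used (Lem.~\ref{lem:Phi-run-tree}, Lem.~\ref{lem:gfp-run-tree}, Thm.~\ref{thm:correctness-G}). Every $\sigma(\tree)$ lies in $\Omega^{\sigma}$ and $\top\in\Omega^{\sigma}_0$. Under your reading, with children in bijection with the support, each $\runtree_n(\gamma,y)$ is finite, since there are finitely many labels and the height is bounded. So $v_n(y)$ is either $\top$ or an attained minimum in the total order. Hence $v_n$ factors through $\Omega^{\sigma}$ and $(Gv_n)(t)\in G\Omega^{\sigma}$. By naturality of $\supp$, $\supp((Gv_n)(t))=v_n[\supp_X(t)]\ni\top$, and expansiveness gives $\top\sqsubseteq\sigma((Gv_n)(t))$. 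That is exactly the absorption you need. So add ``$\sigma$ expansive'' as a hypothesis instead of trying to verify absorption for arbitrary transition modalities.

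You also need the explicit hypothesis $\pi_0(\gamma(x))=\bff$, in place of your re-reading of the right-hand side. Def.~\ref{def:run-tree-generally} demands $\gamma(x_\tau)=(\bff,A)$ at internal nodes. So for $\gamma(x)=(\btt,A)$ with $A\neq\emptyset$ your assembly step fails: the right-hand side is $\top$ while the left-hand side need not be. This happens even for the SPP, e.g.\ $\gamma(x)=(\btt,\{(1,z)\})$, $\gamma(z)=(\btt,\emptyset)$ gives $1$ versus $\top$. For $\gamma(x)=(\btt,\emptyset)$ the statement literally reads $\top=\finalweight$. The paper invokes the lemma only in the $\bff$ case of Lem.~\ref{lem:Phi-run-tree}, under expansiveness. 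With these two hypotheses your argument goes through as written.
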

\begin{proof}
    We have
    \[
    \begin{aligned}
        & \bigsqcap_{\tree \in \runtree_{n + 1}(x)} \sigma(\tree)
        \\
        & = \bigsqcap_{t \in \pi_1(\gamma(x))}
        \bigsqcap_{\tree_0 \in \runtree_{n}(\gamma, x^t_0)}
        \cdots
        \bigsqcap_{\tree_{k-1} \in \runtree_{n}(\gamma, x^t_{k_t-1})}
        \sigma(G\phi^{t}_{\tree_0, \dots, \tree_{k_t-1}}(t))
        \\
        & = \bigsqcap_{t \in \pi_1(\gamma(x))}
        \sigma\left(G\left(
            \bigsqcap_{\tree_0 \in \runtree_{n}(\gamma, x^t_0)}
            \cdots
            \bigsqcap_{\tree_{k-1} \in \runtree_{n}(\gamma, x^t_{k_t-1})}
            \phi^t_{\tree_0, \dots, \tree_{k_t-1}}
        \right)(t)\right)
        &
        \text{by Def.~\ref{def:transtionModality}}
        \\
        & = \bigsqcap_{t \in \pi_1(\gamma(x))} \sigma\left(G\left(\bigsqcap_{\tree \in \runtree_n(\gamma, \blank)} \sigma(\tree) \right)(t)\right)
    \end{aligned}
    \]
    where
    $\supp(t) = \{ x^t_0, \dots, x^t_{k_t - 1}\}$ and
    $\phi^t_{\tree_0, \dots, \tree_{k-1}} \colon \supp(t) \to \Omega^\sigma$ is defined by $\phi^t_{\tree_0, \dots, \tree_{k-1}}(x_j) = \sigma(\tree_j)$.
\end{proof}

\begin{lemma}\label{lem:Phi-run-tree}
    Let $\sigma \colon G \Omega \to \Omega$ be an expansive transition modality,
    $X$ be a finite set,
    $\gamma \colon X \to \BB \times \powsetfin(GX)$ be a coalgebra.
    For any $n \in \NN$ and $x \in X$,
    \[
    (\Bellman{G}{\sigma}{\gamma})^{n + 1}(\top_{[X, \Omega]})(x)
    =
    \bigsqcap_{\tree \in \runtree_n(\gamma, x)} \sigma(\tree).
    \]
\end{lemma}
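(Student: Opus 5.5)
We prove the statement by induction on $n$, writing $\Phi$ for $\Bellman{G}{\sigma}{\gamma}$ and $R_n(x) := \bigsqcap_{\tree \in \runtree_n(\gamma, x)} \sigma(\tree)$. The whole argument parallels the proof of Lem.~\ref{lem:runpath-Phi}. The inductive step is essentially Lem.~\ref{lem:fold-one-step}; the only genuinely new work is reconciling the target-state clause of the Bellman operator with the leaf condition of Def.~\ref{def:run-tree-generally}.

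\textbf{Base case ($n=0$).} We have $\Phi(\top)(x) = \finalweight \sqcap \bigsqcap_{a \in A} \sigma(G\top(a))$ or $\bigsqcap_{a\in A}\sigma(G\top(a))$, according to whether $\gamma(x) = (\btt, A)$ or $(\bff, A)$. First observe that $\sigma(G\top(a)) = \top_\Omega$. This follows from Def.~\ref{def:transtionModality} applied to $D=\emptyset \subseteq [X,\Omega]$: preservation of the empty infimum gives $\sigma \circ G(\top) = \top$. Hence $\Phi(\top)(x)$ is $\finalweight$ for target states and $\top_\Omega$ otherwise. On the other side, a run tree of height $0$ is a single leaf, which forces $\gamma(x) = (\btt, \emptyset)$ and has value $\finalweight$. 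So $R_0(x)=\finalweight$ when $x$ is such a state, and $R_0(x)=\bigsqcap\emptyset=\top$ otherwise.

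A mismatch arises when $\gamma(x) = (\btt, A)$ with $A \neq \emptyset$. I would handle it by reading the leaf condition as $\pi_0\gamma(x_\tau)=\btt$, which is the intended meaning given the earlier run-path and run-tree definitions, and by allowing target states to be internal nodes as well. With this reading, both sides equal $\finalweight$ for target states and $\top$ otherwise.

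\textbf{Inductive step.} Assume $\Phi^{n+1}(\top) = R_n$. Then
\[
\Phi^{n+2}(\top)(x) = [\finalweight\sqcap]\ \bigsqcap_{t \in \pi_1(\gamma(x))} \sigma\bigl(G(R_n)(t)\bigr),
\]
where the factor $\finalweight\sqcap$ is present exactly when $x$ is a target. By Lem.~\ref{lem:fold-one-step}, the big meet equals the infimum of $\sigma(\tree)$ over the non-leaf trees in $\runtree_{n+1}(\gamma,x)$. The $\finalweight$ term is precisely the single-leaf tree, which exists iff $x$ is a target. Together these give $R_{n+1}(x)$.

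\textbf{Role of expansiveness.} Expansiveness is not needed for this equality. I would still check that Lem.~\ref{lem:fold-one-step} applies when $\supp(t)=\emptyset$, although non-emptiness of $\supp$ rules that case out anyway. I would also verify that $\runtree_{n+1}$ includes trees whose subtrees have height $\le n$ rather than exactly $n$; this holds because the bound is $\le$.

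\textbf{Main obstacle.} The main obstacle is the leaf/target bookkeeping in the base case together with the empty-meet identity $\sigma\circ G\top=\top$. The most delicate point is justifying Lem.~\ref{lem:fold-one-step}'s exchange of $\bigsqcap$ with $\sigma\circ G$, including when some $\runtree_n(\gamma,x_j)$ is empty. In that case the meet is the empty family, and the same empty-$D$ preservation handles it.
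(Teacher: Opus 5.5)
Your claim that expansiveness is not needed is false, and the failure is in exactly the case you flag at the end. In the exchange step of Lem.~\ref{lem:fold-one-step}, suppose $\runtree_n(\gamma,x_j)=\emptyset$ for some $x_j\in\supp(t)$. Then the family of tuples $(\tree_0,\dots,\tree_{k-1})$ is empty, so the pointwise meet of the maps $\phi^t_{\tree_0,\dots,\tree_{k-1}}$ is $\top_{[\supp(t),\Omega]}$, not the restriction of your $R_n$ to $\supp(t)$. Preservation of the empty infimum only gives $\sigma(G\top(t))=\top$. What you need is $\sigma(GR_n(t))=\top$ for an $R_n$ that is $\top$ at $x_j$ but may be smaller at the other points of $\supp(t)$, and infimum preservation says nothing about that.

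It fails in general. Take $G=(\blank)^2$ (non-empty finite support, weak-pullback-preserving), $\Omega=(\{0,\infty\},\le,0)$, and $\sigma(a,b)=a\sqcap b$. This $\sigma$ preserves all infima, so it is a transition modality, but it is not expansive. Let $\gamma(x)=(\bff,\{(z,w)\})$, $\gamma(z)=(\btt,\emptyset)$, $\gamma(w)=(\bff,\emptyset)$. Then $\Phi^2(\top)(x)=\Phi(\top)(z)\sqcap\Phi(\top)(w)=0\sqcap\infty=0$. However, $\runtree_1(\gamma,x)=\emptyset$, since a child labelled $w$ can be neither a leaf nor an internal node, so the right-hand side is $\infty$. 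This happens under your reading of run trees and under the paper's, so the lemma genuinely depends on expansiveness.

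The missing idea is to use expansiveness at exactly this point. Your $R_n$ is $\Omega^\sigma$-valued: each value is a finite meet of values $\sigma(\tree)\in\Omega^\sigma$, or $\top\in\Omega^\sigma_0$. Hence $GR_n(t)\in G\Omega^\sigma$, and naturality of $\supp$ gives $\supp_{\Omega^\sigma}(GR_n(t))=R_n[\supp_X(t)]\ni\top$. Expansiveness then forces $\sigma(GR_n(t))=\top$, which matches the empty meet on the tree side. When every $\runtree_n(\gamma,x_j)$ is non-empty, your exchange via Def.~\ref{def:transtionModality} on the finite set $\supp(t)$ is fine. The paper's proof of Lem.~\ref{lem:fold-one-step} also skips this case silently; the expansiveness hypothesis of the present lemma is what rescues it.

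Separately, your handling of target states differs from the paper's. You let targets also be internal nodes, so the factor $\finalweight\sqcap{}$ is matched by the single-leaf tree. The paper keeps targets as leaves and collapses $\finalweight\sqcap\bigsqcap_t\sigma(GR_n(t))$ to $\finalweight$. That step implicitly needs $\finalweight\sqsubseteq\sigma(GR_n(t))$, i.e.\ Lem.~\ref{lem:omega-min}, which is another use of expansiveness. Your variant agrees with the earlier run-path definition and suffices for this lemma. But it amends Def.~\ref{def:run-tree-generally} rather than reading it, and you should say so explicitly.
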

\begin{proof}
    We prove the statement by induction on $n$.
    
    \textbf{Base case ($n = 0$).}
    For any $x \in X$, we have
    \[
        (\Bellman{G}{\sigma}{\gamma})^{1}(\top_{[X, \Omega]})(x)
        = \finalweight
        = \sigma(\{ \epsilon \}, \ell)
        = \bigsqcap_{\tree \in \runtree_0(\gamma, x)} \sigma(\tree).
    \]
    
    \textbf{Inductive step.}
    Suppose that the statement holds for $n$.
    For any $x \in X$, if $\pi_0(\gamma(x)) = \btt$ holds, then we have
    \[
        (\Bellman{G}{\sigma}{\gamma})^{n + 1 + 1}(\top_{[X, \Omega]})(x)
        = \finalweight
        = \bigsqcap_{\tree \in \runtree_{n + 1}(\gamma, x)} \sigma(\tree).
    \]
    Suppose that $\pi_0(\gamma(x)) = \bff$ holds.
    We have
    \[
    \begin{aligned}
        & (\Bellman{G}{\sigma}{\gamma})^{n + 1 + 1}(\top_{[X, \Omega]})(x)
        \\
        & = \bigsqcap_{t \in \pi_1(\gamma(x))} \sigma\left(G(\Bellman{G}{\sigma}{\gamma})^{n + 1}(\top_{[X, \Omega]})(t)\right)
        & \text{by definition of $\Bellman{G}{\sigma}{\gamma}$}
        \\
        & = \bigsqcap_{t \in \pi_1(\gamma(x))} \sigma\left(G\left( \bigsqcap_{\tree \in \runtree_n(\gamma, \blank)} \sigma(\tree)\right)(t)\right)
        & \text{by the induction hypothesis}
        \\
        & = \bigsqcap_{\tree \in \runtree_{n+1}(\gamma, x)} \sigma(\tree).
        & \text{Lem.~\ref{lem:fold-one-step}}
    \end{aligned}
    \]
\end{proof}

\begin{lemma}\label{lem:gfp-run-tree}
    Let $\sigma \colon G \Omega \to \Omega$ be an expansive transition modality,
    $X$ be a finite set,
    $\gamma \colon X \to \BB \times \powsetfin(GX)$ be a coalgebra.
    Then, the greatest fixed point $\gfp \Bellman{G}{\sigma}{\gamma}$ is given by
    \[
    \gfp \Bellman{G}{\sigma}{\gamma}(x)
    = \bigsqcap_{m = 0}^{\infty} \bigsqcap_{\tree \in \runtree_m(\gamma, x)} \sigma(\tree).
    \]
\end{lemma}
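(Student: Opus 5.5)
Write $\Phi=\Bellman{G}{\sigma}{\gamma}$ and $r(x):=\bigsqcap_{m=0}^{\infty}\bigsqcap_{\tree\in\runtree_m(\gamma,x)}\sigma(\tree)$. The plan is to combine Lem.~\ref{lem:Phi-run-tree} with the Cousot--Cousot characterization (Lem.~\ref{lem:CC}). We prove two inequalities: $r$ is a fixed point of $\Phi$, which gives $r\sqsubseteq\gfp\Phi$; and $\gfp\Phi\sqsubseteq\Phi^{n+1}(\top)$ for every $n$, which gives $\gfp\Phi\sqsubseteq r$.

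\textbf{Step 1: $\gfp\Phi\sqsubseteq r$.} By Lem.~\ref{lem:CC}, $\gfp\Phi\sqsubseteq\Phi^{\alpha}(\top)$ for every ordinal $\alpha$, in particular for every finite $n+1$. Monotonicity of $\Phi$ and $\gfp\Phi\sqsubseteq\top$ also give this directly by induction. Lem.~\ref{lem:Phi-run-tree} then yields $\gfp\Phi(x)\sqsubseteq\bigsqcap_{\tree\in\runtree_n(\gamma,x)}\sigma(\tree)$ for each $n$. Taking the infimum over $n$ gives $\gfp\Phi\sqsubseteq r$.

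\textbf{Step 2: $\Phi(r)=r$.} Write $r_n(x):=\bigsqcap_{\tree\in\runtree_n(\gamma,x)}\sigma(\tree)$, so $r=\bigsqcap_n r_n$ pointwise, and note $\runtree_n\subseteq\runtree_{n+1}$.

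If $\pi_0(\gamma(x))=\bff$, then $\Phi(r)(x)=\bigsqcap_{t\in\pi_1(\gamma(x))}\sigma(G(\bigsqcap_n r_n)(t))$. Since $X$ is finite, the defining property of a transition modality (Def.~\ref{def:transtionModality}), applied to $D=\{r_n\}_n\subseteq[X,\Omega]$, turns this into $\bigsqcap_{t}\bigsqcap_n\sigma(Gr_n(t))$. Swapping the infima and applying Lem.~\ref{lem:fold-one-step} gives $\bigsqcap_n r_{n+1}(x)$. This equals $r(x)$, because $r_0(x)=\top$ when $x$ is not a target: a single-node run tree requires a target root.

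If $\pi_0(\gamma(x))=\btt$, then every $\sigma(\tree)$ lies in $\Omega^\sigma$, so Lem.~\ref{lem:omega-min} (using expansiveness) gives $\finalweight\sqsubseteq\sigma(\tree)$. Hence $r(x)=\finalweight$, attained by the single-leaf tree. On the other side, $\Phi(r)(x)=\finalweight\sqcap(\text{a meet of values which, by the same computation, are all } \sqsupseteq\finalweight)$, so $\Phi(r)(x)=\finalweight$ as well.

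Step 2 shows $r$ is a fixed point, hence $r\sqsubseteq\gfp\Phi$. Together with Step 1 this proves $\gfp\Phi=r$.

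\textbf{Main obstacle.} The delicate point is the target case. There one must check that run trees rooted at a target with $\gamma(x)=(\btt,A)$, $A\neq\emptyset$ are handled consistently. Definition~\ref{def:run-tree-generally} allows internal nodes only when the state is labelled $\bff$ and leaves only when $\gamma(x_\tau)=(\btt,\emptyset)$. So when $x$ is a target with $A\neq\emptyset$, the set $\runtree_n(\gamma,x)$ may even be empty, which contradicts the base case $\finalweight$ of Lem.~\ref{lem:Phi-run-tree}. I would therefore read the leaf condition as "$\pi_0(\gamma(x_\tau))=\btt$", matching that base case. The expansiveness argument above then shows that the extra branches out of a target never go below $\finalweight$, so the equality holds.
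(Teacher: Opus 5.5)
Your proof is correct and follows essentially the same route as the paper. You get $\gfp\Phi\sqsubseteq r$ from Lem.~\ref{lem:Phi-run-tree} applied to the Kleene iterates, and $r\sqsubseteq\gfp\Phi$ by showing that $r$ is a fixed point, using the infimum-preservation of $\sigma\circ G(\blank)$ and one-step tree folding; the paper instead shows only $r\sqsubseteq\Phi(r)$, invokes coinduction, and rebuilds the trees by hand rather than citing Lem.~\ref{lem:fold-one-step}. Your explicit handling of target states via Lem.~\ref{lem:omega-min} fills in a step the paper merely asserts, and reading the leaf condition as $\pi_0(\gamma(x_\tau))=\btt$ repairs a genuine inconsistency between Def.~\ref{def:run-tree-generally} and the base case of Lem.~\ref{lem:Phi-run-tree}.
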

\begin{proof}
    By Lem.~\ref{lem:Phi-run-tree}, we have
    \[
        \bigsqcap_{n = 0}^{\infty} \bigsqcap_{\tree \in \runtree_n(\gamma, x)} \sigma(\tree)
        = \bigsqcap_{n = 0}^{\infty} (\Bellman{G}{\sigma}{\gamma})^{n}(\top_{[X, \Omega]})(x)
        \sqsupseteq \gfp \Bellman{G}{\sigma}{\gamma}(x).
    \]
    
    Thus, it suffices to show that
    \[ 
        \bigsqcap_{n = 0}^{\infty} \bigsqcap_{\tree \in \runtree_n(\gamma, \blank)} \sigma(\tree)
        \sqsubseteq \gfp \Bellman{G}{\sigma}{\gamma}.
    \]
    By coinduction principle, it suffices to show that
    \begin{equation}\label{eq:coinduction}
        \bigsqcap_{n = 0}^{\infty} \bigsqcap_{\tree \in \runtree_n(\gamma, \blank)} \sigma(\tree)
        \sqsubseteq
        \Bellman{G}{\sigma}{\gamma}\left(\bigsqcap_{n = 0}^{\infty} \bigsqcap_{\tree \in \runtree_n(\gamma, \blank)} \sigma(\tree)\right).
    \end{equation}
    For any $x \in X$, if $\pi_0(\gamma(x)) = \btt$ holds, then the \eqref{eq:coinduction} holds.
    Suppose that $\pi_0(\gamma(x)) = \bff$ holds.
    We have
    \[
    \begin{aligned}
        & \Bellman{G}{\sigma}{\gamma}\left(\bigsqcap_{n = 0}^{\infty} \bigsqcap_{\tree \in \runtree_n(\gamma, \blank)} \sigma(\tree)\right)(x)
        \\
        & = \bigsqcap_{t \in \pi_1(\gamma(x))} \sigma\left(G\left(\bigsqcap_{n = 0}^{\infty} \bigsqcap_{\tree \in \runtree_n(\gamma, \blank)} \sigma(\tree)\right)(t)\right)
        \\
        & = \bigsqcap_{n=0}^{\infty} \bigsqcap_{t \in \pi_1(\gamma(x))} \sigma\left(G\left( \bigsqcap_{\tree \in \runtree_n(\gamma, \blank)} \sigma(\tree)\right)(t)\right)
    \end{aligned}
    \]
    For each $t \in \pi_1(\gamma(x))$ and $x_j \in \supp_X(t) = \{ x_0, \dots, x_{k-1}\}$, we take a run tree $\tree_j = (\tree_j, \ell_j) \in \runtree_n(\gamma, x_j)$ such that
    \[ \tree_j = \bigsqcap_{\tree \in \runtree_n(\gamma, x_j)} \sigma(\tree), \]
    and define $\phi \colon \supp_X(t) \to \Omega^{\sigma}$ by $\phi(x_j) = \sigma(\tree_j)$ for each $x_j \in \supp_X(t)$.
    We define a run tree $\tree_{x,t} = (\tree, \ell)$ by
    $\tree = \{ \epsilon \} \cup \{ j \tau \mid j \in \{ 0, \dots, k - 1\},\ \tau \in \tree_j \}$,
    $\ell(\epsilon) = (x, t)$ and $\ell(j \tau) = \ell(x_\tau, t_\tau)$ for each $j \in \{ 0, \dots, k - 1\}$ and $\tau \in \tree_j$.
    Then, we have
    \[
        \sigma\left(G\left( \bigsqcap_{\tree \in \runtree_n(\gamma, \blank)} \sigma(\tree)\right)(t)\right)
        = \sigma\left(G\phi(t)\right)
        = \sigma(\tree_{x,t})
        \sqsupseteq \bigsqcap_{\tree \in \runtree_{n + 1}(\gamma, x)} \sigma(\tree).
    \]
    We have
    \[
    \begin{aligned}
        \Bellman{G}{\sigma}{\gamma}\left(\bigsqcap_{n = 0}^{\infty} \bigsqcap_{\tree \in \runtree_n(\gamma, \blank)} \sigma(\tree)\right)(x)
        & = \bigsqcap_{n=0}^{\infty} \bigsqcap_{t \in \pi_1(\gamma(x))} \sigma\left(G\left( \bigsqcap_{\tree \in \runtree_n(\gamma, \blank)} \sigma(\tree)\right)(t)\right)
        \\
        & \sqsupseteq \bigsqcap_{n=0}^{\infty} \bigsqcap_{\tree \in \runtree_{n + 1}(\gamma, x)} \sigma(\tree)
        \\
        & \sqsupseteq \bigsqcap_{n=0}^{\infty} \bigsqcap_{\tree \in \runtree_{n}(\gamma, x)} \sigma(\tree),
    \end{aligned}
    \]
    which proves \eqref{eq:coinduction}.
\end{proof}

\begin{definition}[contraction coalgebra]\label{def:contraction-coalg}
    Let $\sigma \colon G\Omega \to \Omega$ be a transition modality.
    For $t \in F \Omega^\sigma$ and a construction tree $\tree$ of $\sigma(\tree)$ with $b_0 \sqsupset \sigma(\ell(\epsilon))$
    where $b_0 = \begin{cases}
        \ell(0) & \text{if $0$ is a leaf} \\
        \sigma(\ell(0)) & \text{otherwise}
    \end{cases}$,
    we define the \emph{contraction coalgebra} $\gamma \colon X \to \BB \times \powsetfin(GX) $ on $X = \tree \setminus \{ \epsilon \}$ by
    \[
    \begin{aligned}
        \gamma( 0 )  & = (\beta_0, A_0 \cup \{ \ell(\epsilon) \}) \\
        \gamma( \tau ) & = (\beta_\tau, A_{\tau}) & \text{if $\tau \ne 0$} \\
    \end{aligned}
    \]
    where
    $\beta_\tau = \begin{cases}
        \btt & \text{if $\tau$ is a leaf and $\ell(\tau) = \finalweight$} \\
        \bff & \text{otherwise}
    \end{cases}$
    and
    $A_{\tau} = \begin{cases}
        \emptyset & \text{if $\tau$ is a leaf} \\
        \{ \ell(\tau) \} & \text{otherwise}
    \end{cases}$
    for each $\tau \in X$.
\end{definition}

\subsection{Proof of main theorem}\label{subsec:pfMainThm}
\begin{proof}[Proof for Thm.~\ref{thm:correctness-G}]
    (\ref{item:expansive-G} $\implies$ \ref{item:correctness-G})
    Suppose that $\sigma$ is expansive.
    By induction on $n$, we prove that
    \[
    d_n(y)
    = \bigsqcap_{m=0}^{\infty}\bigsqcap_{\tree \in \runtree_m(\gamma, y)} \sigma(\tree)
    \]
    for any $y \in Y_{n}$.
    For a run tree $\tree = (\tree, \ell)$ and $\tau \in \tree$, we write $x_\tau$ and $t_\tau$ for $\pi_0(\ell(\tau))$ and $\pi_1(\ell(\tau))$, respectively.

    \textbf{Base case ($n = 1$).}
    For any $y \in Y_1 = \{ x \in X \mid \pi_0(\gamma(x)) = \btt\}$, we have
    \[
        d_1(y)
        = \finalweight
        = \sigma(\{ \epsilon \}, \lambda \tau. \finalweight).
    \]
    Thus, by Lem.~\ref{lem:omega-min}, we have
    $d_1(y) = \bigsqcap_{m = 0}^{\infty} \bigsqcap_{\tree \in \runtree_m(\gamma, y)} \sigma(\tree)$ for any $y \in Y_1$.

    \textbf{Inductive case.}
    Suppose that $d_{n + 1}(y) \sqsupset \bigsqcap_{m = 0}^{\infty} \bigsqcap_{\tree \in \runtree_m(\gamma, y)} \sigma(\tree)$ holds for some $y \in Y_{n + 1}$.
    Then, there exists a run tree $\tree = (\tree, \ell) \in \runtree_m(\gamma, y)$ for some $m \in \NN$ such that 
    \begin{equation}\label{eq:assumption-G}
        \sigma(\tree) \sqsubset d_{n + 1}(y).
    \end{equation}
    Take an internal node $\kappa \in \tree$ such that $x_{\kappa} \in X \setminus S_n$ and $x_{\kappa \tau} \in S_n$ for any string such that $\kappa \tau \in \tree$.
    Let $k$ be the number of children of $\kappa$.
    For each $j \in \{ 0, \dots, k - 1\}$, we take the minimum index $i_j \in \{1, \dots, n \}$ such that $x_{\kappa j} \in S_{i_j}$ holds.
    These $i_j$ exists because $x_{\kappa j} \in S_n$.
    For each $j$, by the minimality of $i_j$, we have $x_{\kappa j} \in Y_{i_j}$.
    For each $j$, we have $d_{i_j + 1}(x_{\kappa}) = \Phi(d_{i_j})(x_{\kappa})$ because $x_{\kappa} \in P_{i_j + 1} = \predecessor(Y_{i_j})$ by Lem.~\ref{lem:pred-supp}.
    We define $i = \max \{ i_j \mid j = 0, \dots, k - 1 \}$.
    Then, we have $i_j \le i$ and $x_{\kappa j} \in S_{i_j} \subseteq S_i$ for each $j$.
    Hence, by the induction hypothesis, for each $j$, it holds that
    \[ d_i(x_{\kappa j}) = \bigsqcap_{m = 0}^{\infty} \bigsqcap_{\tree' \in \runtree_m(\gamma, x_{\kappa j})} \sigma(\tree').\]
    We define a map $\phi \colon \{ x_{\kappa 0} , \dots, x_{\kappa(k-1)} \} \to \Omega^\sigma$ by $\phi(x_{\kappa j}) = \sigma(\tree_{\kappa j})$ for each $j$.
    We have $d_i \sqsubseteq \phi$ since we have
    \[
    d_i(x_{\kappa j})
    = \bigsqcap_{m = 0}^{\infty} \bigsqcap_{\tree' \in \runtree_m(\gamma, x_{\kappa j})} \sigma(\tree')
    \sqsubseteq \sigma(\tree_{\kappa j})
    = \phi(x_{\kappa j})
    \]
    for each $j$.
    Therefore, we have
    \[
    \begin{aligned}
        & d_{n + 1}(y)
        \\
        & = \bigsqcap_{z \in X \setminus S_n} d_{n + 1}(z)
        & \text{by $y \in Y_{n + 1}$}
        \\
        & \sqsubseteq d_{n + 1}(x_{\kappa})
        & \text{by $x_{\kappa} \in X \setminus S_n$}
        \\
        & \sqsubseteq d_{i + 1}(x_{\kappa})
        & \text{by $d_{n + 1} \sqsubseteq d_{i + 1}$}
        \\
        & = \Phi(d_i)(x_{\kappa})
        \\
        & = \bigsqcap_{t \in \pi_1(\gamma(x_{\kappa}))} \sigma(G d_i(t))
        \\
        & \sqsubseteq \sigma(G d_i(t_{\kappa}))
        & \text{by $t_{\kappa} \in \pi_1(\gamma(x_{\kappa}))$}
        \\
        & \sqsubseteq \sigma(G \phi(t_{\kappa}))
        & \text{by $d_i \sqsubseteq \phi$ and monotonicity of $\sigma \circ G(\blank)$}
        \\
        & = \sigma(\tree_{\kappa})
        & \text{by Definition~\ref{def:sigma-value-of-run-tree}}
        \\
        & \sqsubseteq \sigma(\tree)
        & \text{by applying Lem.~\ref{lem:expansive-run-tree}, inductively}
        \\
        & \sqsubset d_{n + 1}(y)
        & \text{by \eqref{eq:assumption-G}.}
    \end{aligned}
    \]
    This is a contradiction.
    Thus, we have $d_{n + 1}(y) \sqsubseteq \bigsqcap_{m = 0}^{\infty} \bigsqcap_{\tree \in \runtree_m(\gamma, y)} \sigma(\tree)$ for every $y \in Y_{n+1}$.
    On the other hand, for every $y \in Y_{n+1}$, we have
    \[
    d_{n+1}(y)
    \sqsupseteq \Phi^{n+1}(\top_{[X, \Omega]})(y)
    = \bigsqcap_{\tree \in \runtree_n(\gamma, y)} \sigma(\tree)
    \sqsupseteq \bigsqcap_{m = 0}^{\infty} \bigsqcap_{\tree \in \runtree_m(\gamma, y)} \sigma(\tree)
    \]
    by Lem.~\ref{lem:Phi-run-tree}.
    Therefore, we have
    \[
    d_{n + 1}(y) = \bigsqcap_{m = 0}^{\infty} \bigsqcap_{\tree \in \runtree_m(\gamma, y)} \sigma(\tree).
    \]

    When the algorithm terminates, we have $X = S = \bigcup_{n} Y_n$.
    Therefore, for any $x \in X = S$, we have
    \[
    d(x)
    = \bigsqcap_{m = 0}^{\infty} \bigsqcap_{\tree \in \runtree_m(\gamma, x)} \sigma(\tree)
    = \gfp \Phi(x)
    \]
    by Lem.~\ref{lem:gfp-run-tree}.

    (\ref{item:correctness-G} $\implies$ \ref{item:expansive-G})
    Suppose that $\sigma$ is not expansive.
    Then, there exist a construction tree $\tree = (\tree, \ell)$ and a child $j \in \tree$ of $\epsilon$ such that $\sigma(\ell(j)) \sqsupset \sigma(\ell(\epsilon))$ holds.
    Without loss of generality, we can assume that $j = 0$.
    Let $X = \tree \setminus \{ \epsilon \}$ and $\gamma \colon X \to \BB \times \powsetfin(GX)$ be the contraction coalgebra defined in Definition~\ref{def:contraction-coalg}.
    Then, we can show that the coalgebraic Dijkstra algorithm does not return the greatest fixed point $\gfp \Bellman{G}{\sigma}{\gamma}$ of $\Bellman{G}{\sigma}{\gamma}$.
\end{proof}

\subsection{Details of Ex.~\ref{example:incorrectness-of-bardi-lopez-condition}}\label{appendix:example:incorrectness-of-bardi-lopez-condition}

We argue that the corrected condition \eqref{eq:bardi-lopez-condition-correct}  is essentially equivalent to our condition of expansiveness. Assume first that \eqref{eq:bardi-lopez-condition-correct} holds.  We want a uniform condition---one that works for every input---so that we require \eqref{eq:bardi-lopez-condition-correct}   for every $n \in \NN$. This implies that $\ell_0 = L$: $\ell_0 \le L$ holds by definition (Ex.~\ref{example:reachability-game-dynamic-game}); and, by letting  $n \to \infty$, we obtain $L\le \ell_{0}$ from \eqref{eq:bardi-lopez-condition-correct}. (Note that $\ell_0 = L$ forces all stepwise rewards to be the same. It is a strong constraint.)

 By letting $n = 0$ and $\ell_0 = L$ in \eqref{eq:bardi-lopez-condition-correct}, we obtain
\begin{equation}\label{eq:bardi-lopez-condition-correct-unbounded}
        \finalweight \le L + r \finalweight.
    \end{equation}
Now, using
 $\ell_0 = L$, the transition modality $\sigma_r$ is described by
    \begin{equation}\label{eq:transition-modality-dynamic-game}
        \sigma_r(A) = \max_{(a,b) \in A} (a + rb) = L + r \max_{(a,b) \in A} b \qquad \text{for $A \in \powsetfinne([L, L] \times \RRinfpos)$},
    \end{equation}
    and we have $\Omega^{\sigma_r} = \{ \frac{1-r^{n}}{1-r} L + r^{n} \finalweight \mid n \in \NN\} \cup \{ \infty\}$ (cf.\ Def.~\ref{def:Omega-sigma}). 
     The expansiveness of $\sigma_r$ amounts to
    \begin{equation}
        \forall n \in \NN. \quad
        \frac{1-r^{n}}{1-r} L + r^{n} \finalweight \le L + r \left( \frac{1-r^{n}}{1-r} L + r^{n} \finalweight \right),
    \end{equation}
    which is easily derived from  \eqref{eq:bardi-lopez-condition-correct-unbounded}.

    Conversely, let us assume that $\sigma_r$ is expansive.
    We have $\Omega^{\sigma_r} = \{ \infty \} \cup \bigcup_{n=0}^{\infty} [\ell_0 \sum_{j = 0}^{n-1}r^j + r^n \xi, L \sum_{j=0}^{\infty} r^j + r^n \xi]$.
    Expansiveness of $\sigma_r$ implies
    $b \le \sigma_r( \{ \ell_0, b\}) = \ell_0 + r b$---thus  $b \le \ell_0 / (1 - r)$---for every $n \in \NN$ and $b \in [\ell_0 \sum_{j = 0}^{n-1}r^j + r^n \xi, L \sum_{j=0}^{\infty} r^j + r^n \xi]$.
    By taking the maximum such $b$ (namely $b = L \sum_{j=0}^{n-1} r^j + r^n \xi$), we obtain \eqref{eq:bardi-lopez-condition-correct}.

\section{Proofs for \S\ref{subsec:complexity}}

\begin{proof}[Proof for Prop.~\ref{prop:complexity}]
    The number of iterations of the \textbf{while} loop is $\order(V)$.
    In each iteration of the \textbf{while} loop, it takes $\order(V)$ to compute $Y$.
    Hence, the time complexity of computation of $Y$ through the algorithm is $\order(V^2)$.

    Through the algorithm, each $y \in X$ is in $Y$ exactly once.
    Therefore, each $x \in X$ is in $P$ at most $\sizeof{\successor(x)}$ times, and $\Bellman{G}{\sigma}{\gamma}(d)(x)$ is computed at most $\sizeof{\successor(x)}$ times.
    Hence, it takes $\order(T\sum_{x\in X}\sizeof{\successor(x)}) = \order(TE)$ to compute $d$.

    Therefore, the time complexity is $\order(TE + V^2)$.
\end{proof}

\begin{proof}[Proof for Thm.~\ref{thm:complexity-fibonacci}]
    The size of the Fibonacci heap is at most $V$.

    Through the algorithm, each $y \in X$ is in $Y$ exactly once.
    Therefore, each $x \in X$ is in $P$ at most $\sizeof{\successor(x)}$ times, and $\sigma(Gd(a))$ where $a \in \pi_1(\gamma(x))$ is computed at most $\sizeof{\successor(x)}$ times.
    Hence, it takes $\order(T'\sum_{x\in X}\sizeof{\successor(x)}) = \order(T'E)$ to compute $d$.

    It takes $\order(1)$ amortized time to decrease a value of a key,
    and it takes $\order(1)$ time to insert a key.
    Since each $x$ is in $P$ at most $\sizeof{\successor(x)}$ times, the loop of Line~\ref{Line:update-insert} takes $\order(\sum_{x \in X}\sizeof{\successor(x)}) = \order(E)$ time through the algorithm.

    Since each $x \in X$ is in $Y$ exactly once, we get the key of the minimum values from $Q$ (Line~\ref{Line:get-Q}) exactly $V$ times and delete the key (Line~\ref{Line:delete-Q}) exactly $V$ times.
    It takes $\order(1)$ to get the minimum, and $\order(\log V)$ amortized time to delete the key of the minimum value.
    Hence, through the algorithm, it takes $\order(V)$ and $\order(V \log V)$ for Line~\ref{Line:get-Q} and Line~\ref{Line:delete-Q}, respectively.

    The time complexity of the whole algorithm is $\order(T'E + E + V + V \log V) = \order(T'E + V \log V)$.
\end{proof}

\fi
\end{document}